\newcommand{\RNum}[1]{\uppercase\expandafter{\romannumeral #1\relax}}
\newtheorem{theorem}{Theorem}[section]
\newtheorem{lemma}[theorem]{Lemma}
\newtheorem{definition}[theorem]{Definition}
\newtheorem{proposition}[theorem]{Proposition}
\newtheorem{remark}{Remark}
\newtheorem{assumption}[theorem]{Assumption}
\newtheorem{example}{Example}
\newenvironment{customthm}[1]
  {\innercustomthm}
  {\endinnercustomthm}
\newenvironment{custompro}[1]
  {\innercustompro}
  {\endinnercustompro}
\newenvironment{customlem}[1]
  {\innercustomlem}
  {\endinnercustomlem}
\begin{document}

\title{Optimal Control of Large-Scale Networks using Clustering Based Projections}

\author{Nan~Xue,~\IEEEmembership{Student Member,~IEEE},
        Aranya~Chakrabortty,~\IEEEmembership{Senior Member,~IEEE}

\thanks{N. Xue and A. Chakrabortty are with the Department
of Electrical and Computer Engineering, North Carolina State University, Raleigh,
NC, 27695 USA, e-mail: nxue@ncsu.edu, achakra2@ncsu.edu }
\thanks{The work is supported partly by the US National Science Foundation (NSF) under grant ECCS 1054394.}
}

\maketitle

\begin{abstract}
In this paper we present a set of projection-based designs for constructing simplified linear quadratic regulator (LQR) controllers for large-scale network systems. When such systems have tens of thousands of states, the design of conventional LQR controllers becomes numerically challenging, and their implementation requires a large number of communication links. Our proposed algorithms bypass these difficulties by clustering the system states using structural properties of its closed-loop transfer matrix. The assignment of clusters is defined through a structured projection matrix $P$, which leads to a significantly lower-dimensional LQR design. The reduced-order controller is finally projected back to the original coordinates via an inverse projection. The problem is, therefore, posed as a model matching problem of finding the optimal set of clusters or $P$ that minimizes the $\mathcal{H}_{2}$-norm of the error between the transfer matrix of the full-order network with the full-order LQR and that with the projected LQR. We derive a tractable relaxation for this model matching problem, and design a $P$ that solves the relaxation. The design is shown to be implementable by a convenient, hierarchical two-layer control architecture, requiring far less number of communication links than full-order LQR. 
\end{abstract}

\begin{IEEEkeywords}
Clustering, Large-scale networks, Projection, LQR, $\mathcal{H}_{2}$ performance.
\end{IEEEkeywords}

\IEEEpeerreviewmaketitle

\section{Introduction}
A vast majority of practical networked dynamic systems (NDS), ranging from power system networks to wireless networks to social or biological networks, consist of several hundreds to thousands of nodes that are spatially distributed over wide geographical spans. Developing tractable control designs for such large complex networks, and implementing those designs through affordable communication, continue to be a challenge for network designers. Traditionally, control theorists have addressed the problem of controlling large-dimensional systems by imposing structure on controllers. The most promising approach, for example, started with the idea of decentralized control \cite{siljak}, followed by techniques such as singular perturbation theory \cite{singular,chow}, balanced truncation \cite{balanced,wreduct,antoulas}, and $\nu$-gap reduction \cite{rc} among others. These methods aim to simplify the design of controllers for large systems by exploiting weak coupling between their state variables, and by ignoring states that are `less important' than others. The trade-off, however, is that the resulting controllers are often agnostic of the natural coupling between the states, especially the coupling between the closed-loop states, since many of these couplings are forcibly eliminated to facilitate the design itself. Therefore, extending these methods to facilitate controller designs for networks, especially to NDS whose states may be defined over highly structured topologies such as clustering, is quite difficult. The literature for developing tangible and yet simple low-dimensional controllers that satisfy global stability and dynamic performance requirements of very large NDS is still unfortunately sparse. Ideas on aggregate control \cite{boker}, {\it glocal} control \cite{glocal} and hierarchical control \cite{imuracdc}, \cite{Madji} have recently been proposed to address this challenge. The goal of these designs, however, is to guarantee stability by modular tuning of local controller gains; their degrees of freedom for guaranteeing performance can be limited.

To bridge this gap, in this paper we propose a design method called {\it control inversion}. The approach is to cluster the states of an $n$-dimensional network into $r>0$ distinct, non-overlapping groups. We assume $n$ to be a large positive integer, and $r\leq n$ to be a given design parameter. The grouping is defined by a $(r\times n)$ structured projection matrix $P$ whose elements denote the identity of states in the clusters, weighted by certain projection weights. The design thereafter consists of three steps. First, for the full-order network an $n$-dimensional LQR controller is defined for any given choice of $Q$ and $R$. We refer to this controller as the {\it benchmark} LQR. Second, the projection matrix $P$ is used to construct an $r$-dimensional reference model for which an $r$-dimensional LQR controller is designed. The design matrices for this reduced-order controller, however, are not free; they are constrained by being related to $Q$ and $R$ through $P$. The important point, however, is that the design dimension reduces to $r$ from $n$. Finally, this reduced-order controller is projected back to the full-order network by the inverse projection $P^{T}$. The problem is then to find a projection matrix $P$ that minimizes the $\mathcal{H}_{2}$-norm of the error between the transfer function matrices of the full-order network with the benchmark LQR controller and with the projected LQR controller.

This problem by itself, however, is non-convex even without any structural constraint on $P$. To bypass this intractability, we introduce a relaxation, which is done in two stages. The first stage relaxes the error minimization to the minimization of its upper bound, while the second stage applies a low-rank approximation. We finally design a $P$ that solves this relaxed problem. Three distinct variants of the design are proposed. In the first case, we optimize over cluster assignment while keeping the projection weights fixed, and establish that this minimization can be posed as an unsupervised clustering problem. We use weighted k-means \cite{kmeans} to solve this minimization. In the second case, we fix the cluster identities, and optimize over the projection weights. Depending on the stability of the open-loop system, we show that this minimization can be posed as finding the dominant eigenvector of the controllability Gramian or as finding the $Z$-eigenvector of a tensor \cite{tensor1}. In the third case, we propose an iterative method to optimize over both cluster assignment and projection weights. The controllers resulting from all three algorithms are shown to be implementable by a convenient, hierarchical two-layer control architecture, requiring far less number of communication links than full-order LQR as well as sparsity-promoting LQR \cite{sparse}. 

Recently, \cite{groningen} and \cite{h2} have used structural projection-based ideas for model reduction of large networks, but not for control designs. Attention has also been drawn to designing LQR controllers for large systems by finding low-rank solutions of algebraic Riccati equations \cite{are}. However, like most Krylov subspace-based reduction methods such as in \cite{antoulas}, the controller in \cite{are} is unstructured, and hence demands as many communication links as the full-order LQR itself. Distributed controllers using model matching \cite{qinv}, sparsity-promoting LQR in \cite{sparse} and structured LQR in \cite{struct,Fattahi} promise to reduce the communication density, but their designs inherit the same dimensionality as the full-order design. Unlike all of these methods, the novelty of our algorithms is in the facilitation of closed-loop control from the perspective of both design and implementation. The recent papers \cite{local1,local2} also address both goals, but the dimensionality of their controllers is subject to the sparsity structure of the open-loop network while our design does not necessarily require any such sparsity. Preliminary results on this design have been presented in the recent conference paper \cite{cdc16}, but only for a consensus model with specific $Q$ and $R$ matrices.  

The remainder of the paper is organized as follows. Section \RNum{2} formulates the problem of clustering-based optimal control. The relaxation for the original problem is derived in Section \RNum{3}, which is then solved by a clustering algorithm based on weighted k-means optimization in Section \RNum{4}. The design for cluster weights as well as that for the weights and clusters taken together are discussed in Section \RNum{5}. All three algorithms are illustrated via simulations in Section \RNum{6}. Section \RNum{7} concludes the paper.

{\bf{Notation} } We will use the following notations throughout this paper:%

{
\renewcommand{\arraystretch}{1.1}
\begin{tabular}{lp{5.8cm}}
$|m|$ & absolute value of a scalar $m$ \\
$|\mathcal{S}|_{c}$ & cardinality of a set $\mathcal{S}$  \\
$\bm{1}_{n}$ & column vector of size $n$ with all $1$ entries \\
$I_{k}$ & identity matrix of size $k$ \\
$M_{i,j}$ & the $(i,j)^{th}$ entry of a matrix $M$ \\
$diag(m)$ & diagonal matrix with vector $m$ on its principal diagonal \\
$M\otimes N$ & Kronecker product of $M$ and $N$\\
$M\circ N$ & Hadamard product of $M$ and $N$\\
$tr(M)$ & trace operation on a matrix $M$ \\
$\| M \|_{F}$ & Frobenius norm of a matrix $M$, i.e. $\| M \|_{F}=\sqrt{tr(MM^{T})}$ \\
$ker(M)$ & kernel of a matrix $M$  \\
$\bar{\sigma}(M)$, $\bar{\lambda}(M)$ & largest singular value, or eigenvalue with largest real part of a matrix $M$ \\
\end{tabular}
\begin{tabular}{lp{5.8cm}}
$\underline{\sigma}(M)$, $\underline{\lambda}(M)$ & smallest singular value, or eigenvalue with smallest real part of a matrix $M$ \\
$\bar{v}(M)$ & right eigenvector of $\bar{\lambda}(M)$ \\
\end{tabular} }%

Given a matrix $M=[m_{1},...,m_{n}]\in \mathbb{R}^{n\times n}$, its vector form is defined by $vec(M)=[m_{1}^{T},...,m_{n}^{T}]^{T}$, with the inverse operation defined by $unvec(vec(M))=M$. A transfer matrix is defined as $g(s)=C(sI-A)^{-1}B+D$, with a realization form of $g(s)=\left[
\begin{array}{c|c}
A & B \\ \hline
C & D
\end{array}
\right]
$. We refer to $g(s)$ as stable if $A$ is Hurwitz, and unstable otherwise. Furthermore, the $\mathcal{H}_{2}$ and $\mathcal{H}_{\infty}$ norms of a stable transfer matrix $g(s)$ are defined by $\|g(s)\|_{\mathcal{H}_{2}}=\sqrt{\int_{-\infty}^{\infty}tr[g^{*}(t)g(t)]\mathrm{d}t}=\sqrt{\frac{1}{2\pi}\int_{-\infty}^{\infty}tr[g^{*}(j\omega)g(j\omega)]\mathrm{d}\omega}$ and $\|g(s)\|_{\mathcal{H}_{\infty}}=sup_{\omega}\ \bar{\sigma}[ g(j\omega)] $. 

From graph theory, a graph $\mathcal{G}=(\mathcal{V},\mathcal{E})$ is defined over a node (vertex) set $\mathcal{V}=\{ 1,...,n \}$ and an edge set $\mathcal{E} \subset \mathcal{V} \times \mathcal{V}$, which contains two-element subsets of $\mathcal{V}$. If $\{i,j\} \in \mathcal{E}$, we call nodes $i$ and $j$ adjacent, and denote the relation by $i \sim j$, or simply $i j$. The set of nodes adjacent to $i \in \mathcal{V}$ is noted by $\mathcal{N}_{i} = \{ j\in \mathcal{V} | i \sim j \}$. In this paper, $\mathcal{G}$ is assumed to be undirected, which implies $ij$ is equivalent to $ji$, and there are no loops or multiple edges between nodes.

\section{Problem Formulation}

Consider a general LTI system of the form
\begin{align}
\begin{cases}
\dot{x}(t) = Ax(t) + Bu(t) + B_{d}d(t),\quad x(0) = x_{0} \\
y(t) = Cx(t)
\end{cases},
\label{full}
\end{align}
where $x(t) \in \mathbb{R}^{n}$, $u(t) \in \mathbb{R}^{m}$ and $y(t) \in \mathbb{R}^{p}$ represent the vector of state, control and output variables respectively, and $d(t)\in \mathbb{R}^{n_{b}}$ is a disturbance entering into the system. We assume (\ref{full}) to be defined over a network of $n_{s}\leq n$ interconnected subsystems, with their network topology represented by a connected graph $\mathcal{G} = (\mathcal{V},\mathcal{E})$, $\mathcal{V} = \{ 1,...,n_{s} \}$. The dynamics of each subsystem can be written as
\begin{align}
\begin{cases}
\dot{x}_{i}(t) {=} A_{ii}x_{i}(t) {+} \underset{j\in \mathcal{N}_{i}}{\sum} A_{ij}x_{j}(t) {+} B_{i}u_{i}(t) {+} B_{di}d_{i}(t) \\
y_{i}(t) = C_{i}x_{i}(t)
\end{cases}
\label{nodesys}
\end{align}
$i = 1,...,n_{s}$, where $A_{ij}$, $B_{i}$, $B_{di}$ and $C_{i}$ are submatrices with compatible dimensions from $A$, $B=diag(B_{1},...,B_{n_{s}})$, $B_{d}=diag(B_{d1},...,B_{dn_{s}})$ and $C=diag(C_{1},...,C_{n_{s}})$. Notice that the dimension of $u_{i}(t)$ can be zero, meaning that the $i^{th}$ subsystem can have no input. 

In this paper, we consider an LQR design for (\ref{full}), and assume $C=I_{n}$ for full-state feedback. Given two real-valued matrices $Q=Q^{T} \succeq 0$ and $R = R^{T} \succ 0$, the LQR problem is posed as finding a feedback law $u(t)=-Kx(t)$ such that the cost function
\begin{align}
J := \int_{0}^{\infty} [ x^{T}(t)Qx(t)+u^{T}(t)Ru(t)] \mathrm{d}t 
\label{LQR}
\end{align}
is minimized. The expression (\ref{LQR}), also known as the infinite-horizon continuous-time LQR, can be solved by the following algebraic Riccati equation (ARE)
\begin{align}
A^{T}X + XA + Q - XGX=0, \label{are}
\end{align}
where $G=BR^{-1}B^{T}$. The feedback matrix can be found through $K=R^{-1}B^{T}X$. For such a solution $X$ to exist, we will adhere to the following assumption throughout this paper.
\begin{assumption}
$(Q^{\frac{T}{2}},A)$ is observable, and $(A,BR^{-\frac{1}{2}})$ is stabilizable.
\label{asss}
\end{assumption}

According to \cite{rc}, the assumption above guarantees a unique stabilizing solution $X=X^{T}\succ 0$. However, finding this solution from (\ref{are}) in practice is subject to $\mathcal{O}(n^{3})$ computational complexity, which can become unscalable for large-scale systems. Moreover, the resulting matrix $X$ is usually an unstructured dense matrix, which demands every subsystem in the network to communicate with every other subsystem for implementing the feedback. These two factors together make both the design and implementation of $u=Kx$ very difficult, especially when $\mathcal{G}$ consists of thousands to tens of thousands of nodes. Therefore, we propose a design strategy, which we refer to as {\it control inversion}, to repose this LQR problem using a clustering-based projection. 

\subsection{Control Inversion}

\begin{definition}
\label{Pd}
Given an integer $r$, where $0<r\leq n$, and a non-zero vector $w \in \mathbb{R}^{n}$, define $r$ non-empty, distinct, and non-overlapping subsets of the state index set $\mathcal{V}_{s} = \{ 1,...,n \}$, respectively denoted as $\mathcal{I} = \{ \mathcal{I}_{1},...,\mathcal{I}_{r} \}$, such that $\mathcal{I}_{1} \cup ... \cup \mathcal{I}_{r} = \mathcal{V}_{s}$. A clustering-based projection matrix $P \in \mathbb{R}^{r\times n}$ is defined as
\begin{align}
P_{i,j} := \begin{cases} 
\frac{w_{j}}{\| w_{\mathcal{I}_{i}} \|_{2}} & \quad  j \in \mathcal{I}_{i} \\ 
0 & \quad \text{otherwise} 
\end{cases},
\label{eqp}
\end{align}
where $w_{\mathcal{I}_{i}}=[w_{\mathcal{I}_{i}\{1\}},...,w_{\mathcal{I}_{i}\{ |\mathcal{I}_{i} |_{c}\}}]^{T}$ is non-zero, and $\mathcal{I}_{i}\{ j\}$ denotes the $j^{th}$ element in the set $\mathcal{I}_{i}$. The matrix $P$ has the following three properties: 
\begin{itemize}
\item It is row orthonormal, i.e. $PP^{T}=I_{r}$;
\item Image of $P^{T}P$ lies in the span of $w$, i.e. $P^{T}Pw=w$;
\item Given $v\neq 0$, $Pv = 0$ only if $w_{\mathcal{I}_{i}}^{T}v_{\mathcal{I}_{i}} = 0$, $i=1,...,r$.
\end{itemize}
\end{definition}

The construction of $P$ is shown by the following example.
\begin{example}
Let $w = \begin{bmatrix}
1 & 1 & 1 & 2 & 1 & 1 & 1 & 1 & 1 & 1
\end{bmatrix}^{T}$, $\mathcal{I}_{1}=\{ 1,2\}$, $\mathcal{I}_{2}=\{ 3,4,5\}$ and $\mathcal{I}_{3}=\{ 6,7,8,9,10 \}$. Then,
\begin{align*}
P{=}\begin{bmatrix}
\frac{1}{\sqrt{2}} & \frac{1}{\sqrt{2}} & 0 & 0 & 0 & 0 & 0 & 0 & 0 & 0 \\
0 & 0 & \frac{1}{\sqrt{6}} & \frac{2}{\sqrt{6}} & \frac{1}{\sqrt{6}} & 0 & 0 & 0 & 0 & 0\\
0 & 0 & 0 & 0 & 0 & \frac{1}{\sqrt{5}} & \frac{1}{\sqrt{5}} & \frac{1}{\sqrt{5}} & \frac{1}{\sqrt{5}} & \frac{1}{\sqrt{5}}
\end{bmatrix}.
\end{align*}
\end{example} 

Given $P$ defined over any clustering set $\mathcal{I}$ and weight vector $w$, the control inversion strategy for the LQR problem (\ref{LQR}) is then composed of the following three steps.

\subsubsection{Projection to reduced-order system}
Using projection $P$, we first construct a reduced-order model
\begin{align}
\begin{cases}
\dot{\tilde{x}}(t) & = \tilde{A}\tilde{x}(t)+\tilde{B}\tilde{u}(t)+\tilde{B}_{d}d(t),\quad \tilde{x}(0) = \tilde{x}_{0} \\
\tilde{y}(t) & =\tilde{C}\tilde{x}(t) 
\end{cases},
\label{reduced} 
\end{align}
where $\tilde{A} :=PAP^{T}\in \mathbb{R}^{r\times r}$, $\tilde{B} :=PB \in \mathbb{R}^{r\times m}$, $\tilde{B}_{d} :=PB_{d}\in \mathbb{R}^{r\times n_{b}}$ and $\tilde{C} :=PCP^{T}=I_{r}$. For this system $\tilde{x} \in \mathbb{R}^{r}$ is the state and $\tilde{u}\in \mathbb{R}^{m}$ is the control input.

\subsubsection{Reduced-order LQR design} 
We similarly project the LQR parameters by $\tilde{Q} = PQP^{T} \in \mathbb{R}^{r \times r}$, and let $\tilde{R} = R$ such that $\tilde{G} := \tilde{B}\tilde{R}^{-1}\tilde{B} = PGP^{T} \in \mathbb{R}^{r \times r}$. An LQR problem for the reduced-order model (\ref{reduced}) is then posed as to minimize
\begin{align}
\tilde{J} := \int_{0}^{\infty} [ \tilde{x}^{T}(t)\tilde{Q}\tilde{x}(t)+\tilde{u}^{T}(t)\tilde{R}\tilde{u}(t)] \mathrm{d}t
\label{LQRreduced} 
\end{align}
with respect to $\tilde{u}(t) = - \tilde{K} \tilde{x}(t)$. Here the feedback matrix $\tilde{K} = \tilde{R}^{-1}\tilde{B}^{T}\tilde{X}$ corresponds to the solution $\tilde{X} \in \mathbb{R}^{r \times r}$ of the reduced-order ARE of (\ref{LQRreduced}), which is written as 
\begin{align}
\tilde{A}^{T}\tilde{X} + \tilde{X}\tilde{A} + \tilde{Q} - \tilde{X}\tilde{G}\tilde{X}=0. \label{reare} 
\end{align}

\subsubsection{Inverse projection to original coordinates} 
The solution $\tilde{X}$ from (\ref{reare}) is projected back to the original coordinates through the inverse projection
\begin{align}
\hat{X} = P^{T}\tilde{X}P.
\label{invert} 
\end{align}
This projected controller can then be implemented in the full-order model (\ref{full}) using $u=-R^{-1}B^{T}\hat{X}x$, which implies that the effective feedback gain matrix is 
\begin{align}
\hat{K} =R^{-1}B^{T}\hat{X}. \label{Khat}
\end{align} 

\subsection{Problem Statement}

The controller $\hat{K}$ is dependent on the projection $P$ through equations (\ref{reduced}), (\ref{LQRreduced}), and (\ref{invert}). The choice of $P$ is guided in the following way. Consider 
\begin{align}
g(s) := (sI_{n} - A + BK)^{-1}B_{d},
\label{fullclptf}
\end{align}
which is the closed-loop transfer matrix from $d$ to $x$ for (\ref{full}) with full-order LQR. Similarly, consider 
\begin{align}
\hat{g}(s) &:= (sI_{n}-A+B\hat{K})^{-1}B_{d},
\label{clptff}
\end{align}
which is the closed-loop transfer matrix from $d$ to $x$ for (\ref{full}) with the projected controller (\ref{Khat}). Using (\ref{fullclptf}) and (\ref{clptff}), we next state our main problem of interest.

{\bf Main problem:} Given system (\ref{full}) and an integer $r>0$, the problem addressed in this paper is to find a clustering set $\mathcal{I}$ and a non-zero vector $w$ such that the corresponding projection matrix $P$ solves the model matching problem\footnote{The initial condition for the reduced-order model (\ref{reduced}) does not need to be related to that of the full-order model (\ref{full}). Our goal is to compare (\ref{fullclptf}) and (\ref{clptff}), both of which have zero initial conditions.} 
\begin{equation}
\begin{aligned}
& \underset{P}{\mathrm{minimize}}
& &  \| g(s)-\hat{g}(s) \|_{\mathcal{H}_{2}} .
\end{aligned}
\label{MM} 
\end{equation} 

However, finding an exact solution for this optimization problem is intractable given that the objective function is an implicit and non-convex function of $P$, and also because $P$ is defined over a combinatorial structure. Our main contribution, therefore, is finding a tractable relaxation for (\ref{MM}) as a quadratic function of $P$, and thereafter designing $P$ to solve the relaxed problem.  We make the following assumption so that $g(s)$ and $\hat{g}(s)$ both have minimal realization.
\begin{assumption}
The pair $(A,B_{d})$ is controllable.
\end{assumption}

{\bf Solution Strategy:} The outline of our solution strategy is as follows. In Section \RNum{3}, we derive an upper bound relaxation for (\ref{MM}) such that its objective function is quadratic in $P$. Ideally speaking, one can solve for $P$ from this relaxed problem. The computational complexity for constructing the objective function is, however, $\mathcal{O}(n^{3})$ since it requires the computation of the controllability Gramian of $g(s)$. To bypass this difficulty, a second round of relaxation is applied by exploiting the low-rank (denoted as $\kappa$) structure of the controllability Gramian. After these two relaxations, the final objective function, still quadratic in $P$, can be constructed in $\mathcal{O}(n\kappa^{2})$ complexity, which is near linear if $\kappa \ll n$. The solution to this optimization is then addressed in two ways - first by finding the clustering set $\mathcal{I}$ with a fixed weight vector $w$ (Section \RNum{4}), and second, by finding $w$ while keeping $\mathcal{I}$ fixed (Section \RNum{5}). We also propose to combine these two approaches by an iterative algorithm. The overall design flow and the numerical complexities for each step are previewed in Fig. \ref{compdiag}. Detailed explanations of these complexities will be provided in the respective sections to follow.

\begin{figure}[H]
\centering
\includegraphics[width=1\columnwidth]{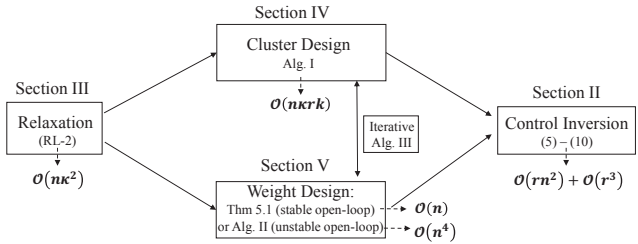}
\caption{Step-by-step execution of the proposed designs}
\label{compdiag}
\end{figure}

\subsection{Benefits of Control Inversion}

\begin{figure*}
    \centering
    \begin{subfigure}[t]{0.3\textwidth}
    \centering
        \includegraphics[width=0.9\columnwidth]{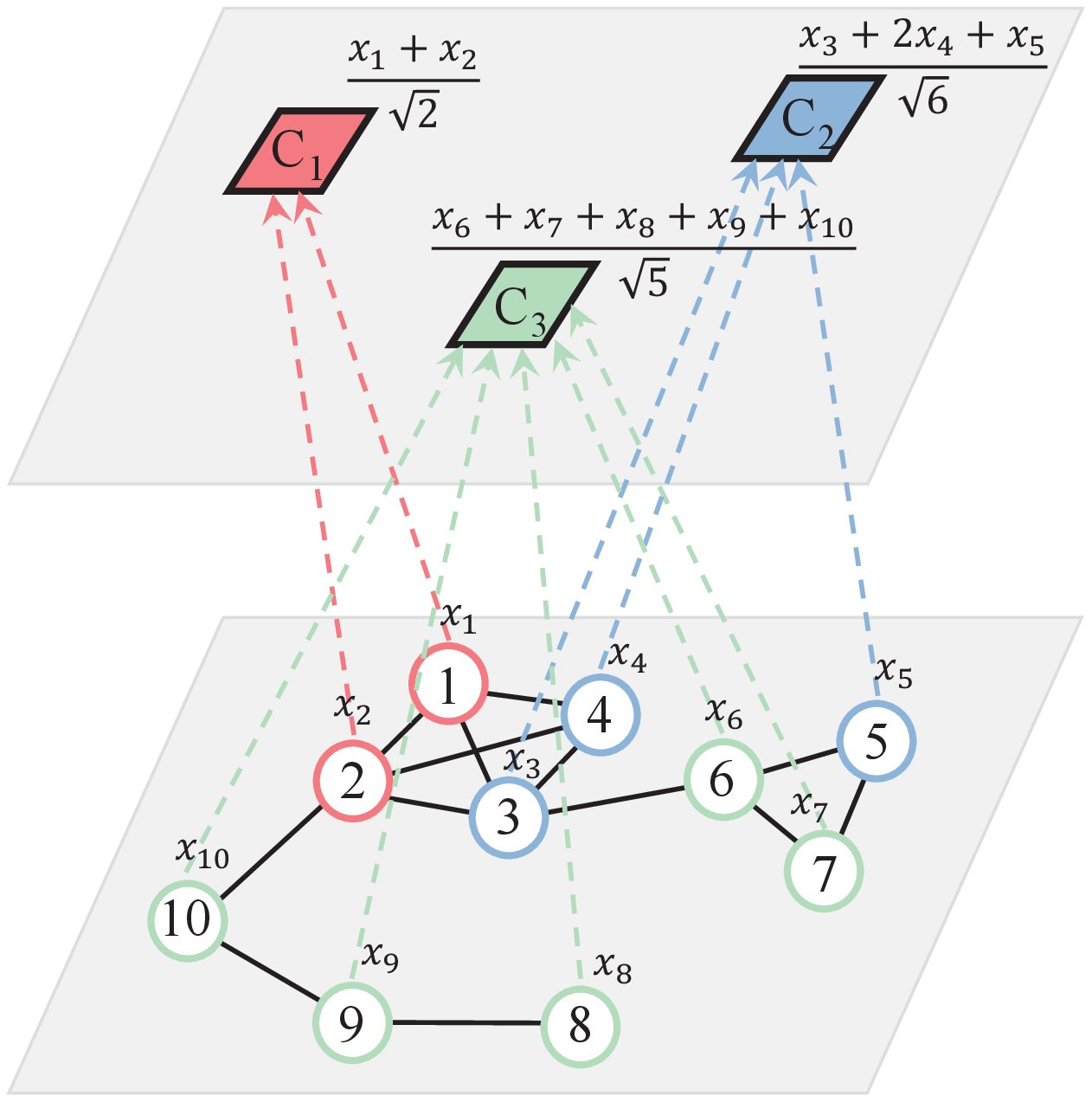}
        \caption{Step 1 - state averaging $Px$}
        \label{cp1}
    \end{subfigure}
    ~        
    \begin{subfigure}[t]{0.3\textwidth}
    \centering
        \includegraphics[width=0.9\columnwidth]{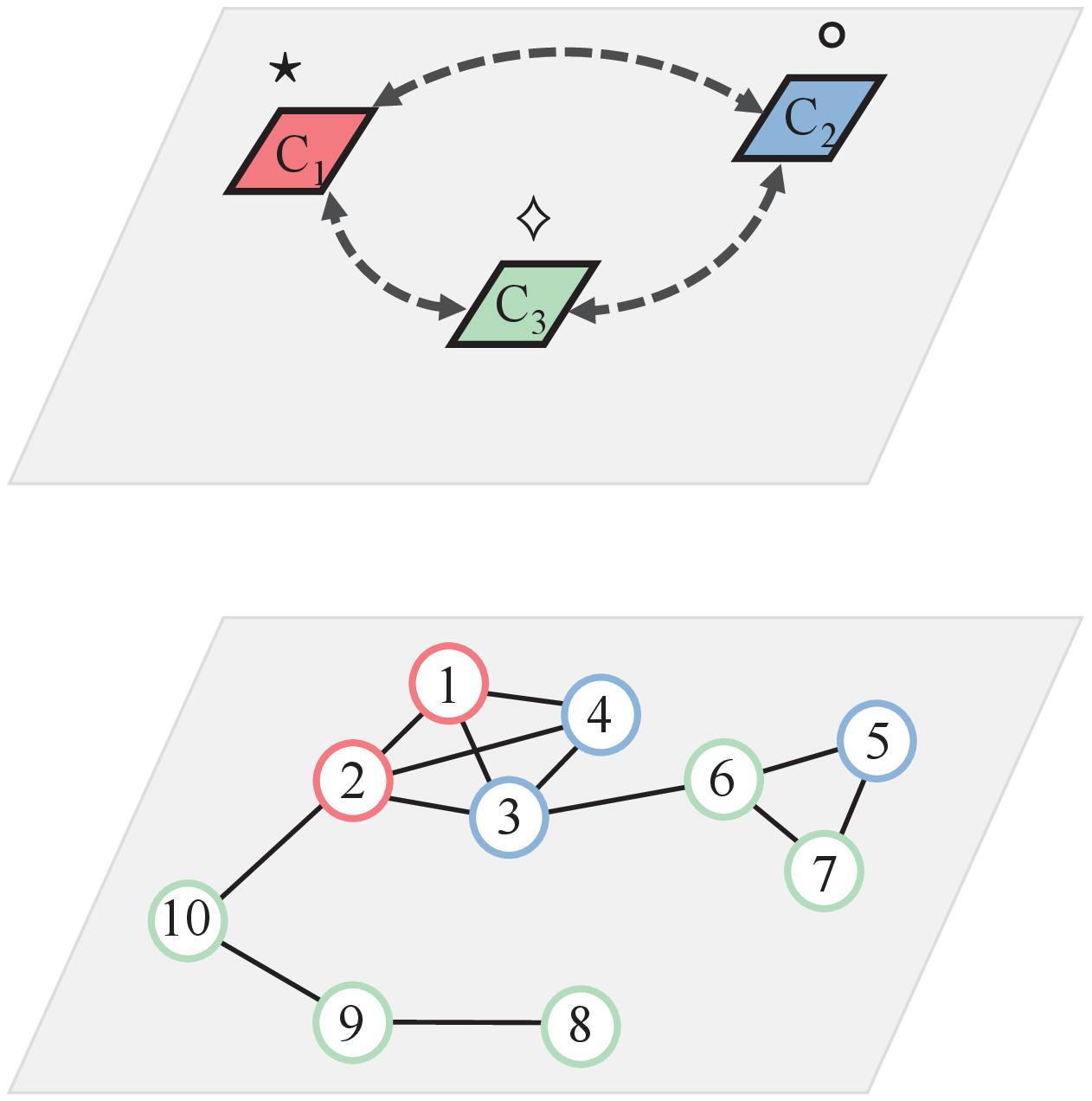}
        \caption{Step 2 - reduced-order control $\tilde{X}Px= [\star \ \circ \ \diamond ]^{T}$}
        \label{cp2}
    \end{subfigure}
    ~        
    \begin{subfigure}[t]{0.3\textwidth}
    \centering
        \includegraphics[width=0.9\columnwidth]{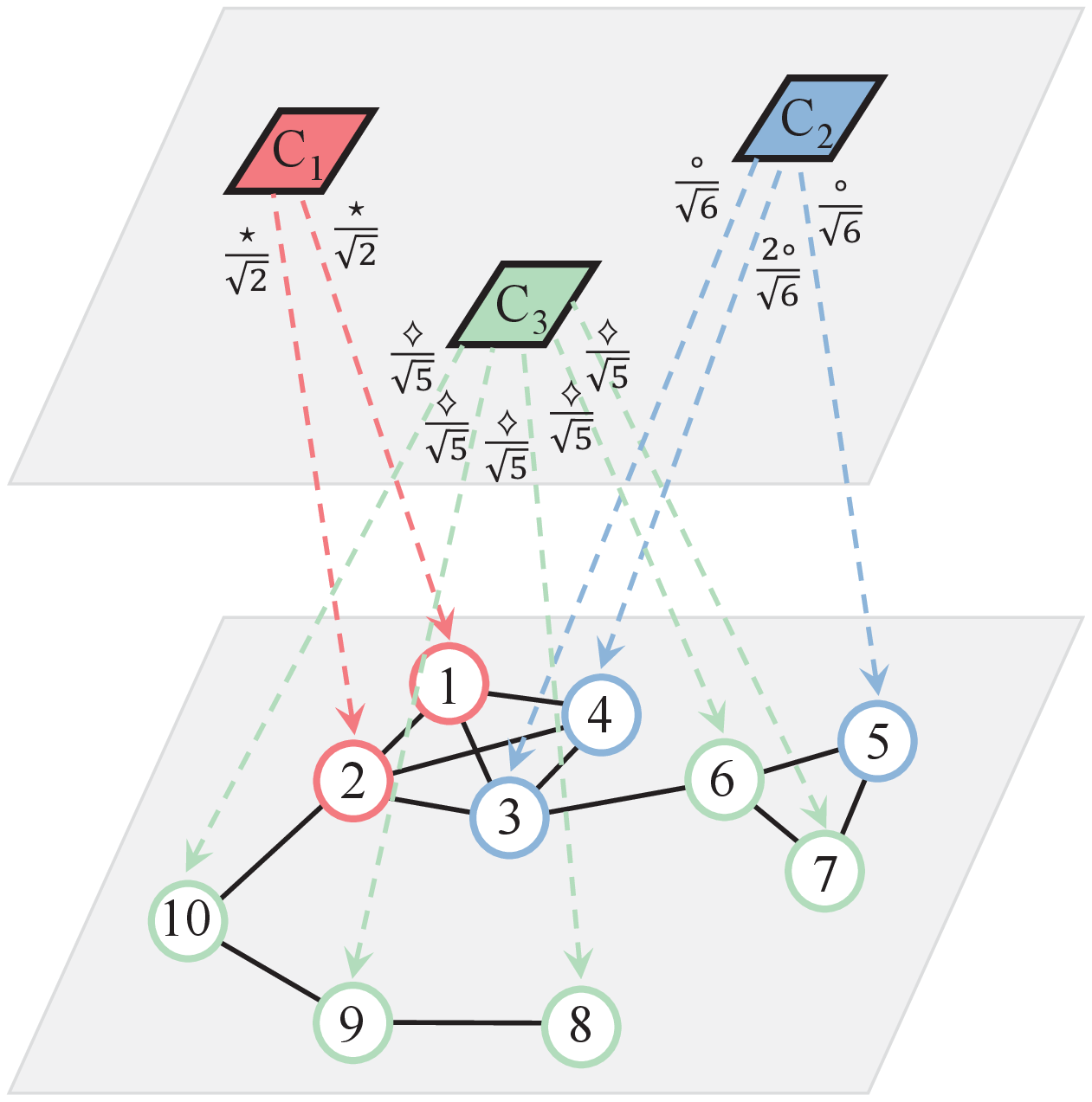}
        \caption{Step 3 - control inversion $u = -\hat{X}x$}
        \label{cp3}
    \end{subfigure}
    \caption{Cyber-physical architecture for implementing the feedback controller $\hat{K}$. Here $\mathcal{I}_{1} = \{ 1,2 \}$, $\mathcal{I}_{2} = \{ 3,4,5 \}$ and $\mathcal{I}_{3} = \{ 6,7,8,9,10 \}$. Solid lines represent physical connections, while dashed lines represent communication links. $C_{1}$, $C_{2}$, $C_{3}$ are coordinators for clusters $1$, $2$ and $3$. For the simplicity of illustration, for this example we assume $R^{-1}B^{T}=I_{n}$.}\label{cp}
\vspace{-1em}
\end{figure*}

An important point to note is that the physical meaning of the state $\tilde{x}(t)$ of the reduced-order model (\ref{reduced}) has no relation to that of the state $x(t)$ of our full-order model (\ref{full}). This is a key difference of the control inversion design from traditional model-reduction based designs where the reduced-order state vector is typically a direct projection of the full-order state vector. The projection in our design is rather applied on the controller $\tilde{X}$ instead of $x(t)$. Two natural benefits of this approach are as follow:

{1. \it Tractability of design:}  The computational complexity for constructing the reduced-order ARE in (\ref{reare}) is $\mathcal{O}(rn^{2})$, while that for solving this ARE is $\mathcal{O}(r^{3})$. The computational complexity required to design $P$ through Sections \RNum{3}, \RNum{4} and \RNum{5} will be shown to be simpler than the $\mathcal{O}(n^{3})$ complexity of a full-order LQR design. Thus, if $r\ll n$, the overall control inversion design becomes numerically more tractable than full-order LQR.

{2. \it Simplicity in implementation:} The projected matrix $\hat{X}=P^{T}\tilde{X}P$ is a structured $r$-ranked matrix, which results in a sequential two-layer hierarchical control architecture. The implementation of the feedback $u(t) =-R^{-1}B^{T}\hat{X}x(t)$ follows three steps. First, a coordinator is assigned to each cluster $\mathcal{I}_{i}$, which collects the measurements of all the states belonging to that cluster. Each coordinator then computes the weighted averaged state $P_{i,:}x$ for its cluster, $i=1,2,..,r$. Next, the coordinators exchange these weighted averages, and each of them compute the $r$-dimensional vector $\tilde{X}Px$. Note that in this process no coordinator will be able to infer individual state measurements from other clusters. Finally, the coordinator of $\mathcal{I}_{i}$ computes the control vector $u_{i}$ by taking linear combinations of the elements of $\tilde{X}Px$. The linear combination follows from $u = -R^{-1}B^{T}P^{T}\tilde{X}Px$. The individual elements of $u_{i}$ are broadcast to the respective input actuators in $\mathcal{I}_{i}$. Note that since the aggregation is applied to the state $x(t)$, and not on the subsystems, the state vector of any subsystem can be partitioned among different clusters. Thus, in practice, a subsystem may need to transmit its states to more than one coordinator, and also receive control inputs from more than one coordinator. We illustrate the three implementation steps by an example in Fig. \ref{cp}.

In the worst-case scenario when every subsystem has a scalar state $x_{i}$ and a scalar control input $u_{i}$, the two-layer control implementation will result in a much sparser communication topology with $n+{{r}\choose{2} }$ bidirectional links compared to an LQR controller which would require ${{n}\choose{2}}$ number of links. This reduction, combined with standard networking protocols such as multi-casting \cite{alhemy}, makes the implementation of our proposed controller convenient and cheap.

\section{Relaxations for Model Matching}

In this section we describe the theoretical derivation of the relaxation for the model matching problem (\ref{MM}). We start by discussing the well-posedness of the projected controller $\hat{K}$ in Section \RNum{3}.A. The final optimization to be solved is then obtained from two stages of relaxations as detailed in Section \RNum{3}.B and \RNum{3}.C respectively. All proofs are presented in the Appendix B.

\subsection{Well-Posedness Conditions}

The well-posedness of $\hat{K}$ is equivalent to two conditions - namely, if the reduced-order ARE (\ref{reare}) admits a solution $\tilde{X}$, and if $\hat{g}(s)$ is stable. We discuss these two factors as follows.

\subsubsection{Existence condition}

Similar to the full-order ARE (\ref{are}), the reduced-order ARE (\ref{reare}) is guaranteed with a unique solution $\tilde{X}\succ 0$ if $(\tilde{A},\tilde{G}^{\frac{1}{2}})$ is stabilizable and $(\tilde{Q}^{\frac{T}{2}},\tilde{A})$ is observable \cite{rc}. However, unlike conventional model reduction techniques that can utilize unstructured projections to preserve the exact stabilizability and observability properties from $(A,G^{\frac{1}{2}})$ and $(Q^{\frac{T}{2}},A)$, the structured projection $P$ in (\ref{eqp}) does not guarantee that. As a result, for a general system the ARE (\ref{reare}) may not admit a unique solution $\tilde{X}$, and therefore $\hat{K}$ may not exist. To bypass this problem, we modify the definitions of $\tilde{Q}$ and $\tilde{G}$ by using a constant shift. That is, in case the pair $(PAP^{T},PG^{\frac{1}{2}})$ is not stabilizable and/or $(Q^{\frac{T}{2}}P^{T},PAP^{T})$ is not observable, we let 
\begin{align}
\tilde{Q} = PQP^{T} + \alpha I_{r} \quad \text{and/or} \quad \tilde{G} = PGP^{T} + \alpha I_{r} \label{QRmod}
\end{align}
for a small constant shift $\alpha >0$. Using (\ref{QRmod}), $\tilde{Q}$ and $\tilde{G}$ become positive-definite matrices, which yield $(\tilde{A},\tilde{G}^{\frac{1}{2}})$ controllable and/or $(\tilde{Q}^{\frac{T}{2}},\tilde{A})$ observable. The existence of $\tilde{X}$ and $\hat{K}$ can thus be guaranteed. The matching error introduced by this shift will be discussed in the next subsection.

\subsubsection{Bound of ARE solution}
We first recall a lemma from \cite{arebound} about the eigenvalue bound of a general ARE solution.
\begin{lemma}
(Lemma 1.5 in \cite{arebound})
The solution $X$ from the ARE (\ref{are}) satisfies the following upper bound:
\begin{align}
\bar{\lambda}(X) \leq \beta(A,G,Q) = \bar{\lambda}(D_{t})\frac{\bar{\lambda}[(Q+K_{t}^{T}K_{t})D_{t}]}{\underline{\lambda}(FD_{t})}, 
\end{align}
where $K_{t}$ is any matrix stabilizing $A + G^{\frac{1}{2}}K_{t}$, and $D_{t}$ and $F$ are positive-definite matrices that satisfy 
\begin{align}
(A{+}G^{\frac{1}{2}}K_{t})^{T}D_{t} + D_{t}(A{+}G^{\frac{1}{2}}K_{t}) \leq -F.
\end{align}
\label{xblemma}
\end{lemma}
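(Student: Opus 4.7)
The plan is to bound $\bar{\lambda}(X)$ by first reducing the quadratic ARE to a Lyapunov inequality via completion of squares, and then translating the resulting Loewner bound into the scalar eigenvalue bound through a diagonal-scaling argument based on $D_{t}$ and $F$.

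Setting $A_{c} := A + G^{\frac{1}{2}} K_{t}$, I would add and subtract $K_{t}^{T}G^{\frac{1}{2}}X + XG^{\frac{1}{2}}K_{t} + K_{t}^{T}K_{t}$ in the ARE (\ref{are}) to obtain
\begin{equation}
A_{c}^{T} X + X A_{c} + (Q + K_{t}^{T} K_{t}) = (G^{\frac{1}{2}}X + K_{t})^{T}(G^{\frac{1}{2}}X + K_{t}) \succeq 0. \label{plcs}
\end{equation}
Since $A_{c}$ is Hurwitz by the choice of $K_{t}$, the Lyapunov equation $A_{c}^{T} Y + Y A_{c} + (Q + K_{t}^{T} K_{t}) = 0$ admits a unique positive semidefinite solution $Y = \int_{0}^{\infty} e^{A_{c}^{T} t}(Q + K_{t}^{T} K_{t}) e^{A_{c} t} \mathrm{d}t$. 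Subtracting from (\ref{plcs}) shows that $Y - X$ solves a Lyapunov equation with a negative semidefinite right-hand side, so $X \preceq Y$ and hence $\bar{\lambda}(X) \leq \bar{\lambda}(Y)$. By the same Lyapunov integration, the hypothesis $A_{c}^{T} D_{t} + D_{t} A_{c} \preceq -F$ yields $D_{t} \succeq \int_{0}^{\infty} e^{A_{c}^{T} t} F e^{A_{c} t} \mathrm{d}t$.

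The main obstacle is the translation of the asymmetric products $(Q+K_{t}^{T} K_{t})D_{t}$ and $F D_{t}$ into Loewner comparisons. The key fact is that for any symmetric $M$ and any $D_{t} \succ 0$, the product $M D_{t}$ is similar to $D_{t}^{\frac{1}{2}} M D_{t}^{\frac{1}{2}}$ and therefore has the same eigenvalues. Applied with $M = Q+K_{t}^{T}K_{t}$ and with $M = F$, followed by congruence with $D_{t}^{-\frac{1}{2}}$, this gives $Q + K_{t}^{T} K_{t} \preceq \bar{\lambda}[(Q+K_{t}^{T} K_{t}) D_{t}] \, D_{t}^{-1}$ and $F \succeq \underline{\lambda}(F D_{t}) \, D_{t}^{-1}$, from which $Q + K_{t}^{T} K_{t} \preceq \frac{\bar{\lambda}[(Q+K_{t}^{T} K_{t}) D_{t}]}{\underline{\lambda}(F D_{t})} F$. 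Substituting into the integral expression for $Y$ and invoking the integral bound on $D_{t}$ obtained above gives $Y \preceq \frac{\bar{\lambda}[(Q+K_{t}^{T} K_{t}) D_{t}]}{\underline{\lambda}(F D_{t})} D_{t}$, and applying $\bar{\lambda}(\cdot)$ to both sides recovers the stated bound. The subtle part of this last step is keeping the similarity transformation and the congruence by $D_{t}^{-\frac{1}{2}}$ consistent on both inequalities, so that the scalars $\bar{\lambda}[(Q+K_{t}^{T} K_{t})D_{t}]$ and $\underline{\lambda}(F D_{t})$ appear in the numerator and denominator rather than quantities evaluated on $D_{t}^{-1}$.
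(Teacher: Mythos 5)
The paper does not actually prove this statement---it is imported verbatim as Lemma~1.5 of \cite{arebound}, so there is no in-paper argument to compare against. Your blind proof is, however, correct and self-contained. The completion of squares giving $A_{c}^{T}X+XA_{c}+(Q+K_{t}^{T}K_{t})=(G^{\frac{1}{2}}X+K_{t})^{T}(G^{\frac{1}{2}}X+K_{t})\succeq 0$ checks out (using that $G^{\frac{1}{2}}$ is the symmetric square root), the comparison $X\preceq Y$ via the Lyapunov integral is the standard comparison-lemma step, and the similarity $MD_{t}\sim D_{t}^{\frac{1}{2}}MD_{t}^{\frac{1}{2}}$ followed by congruence with $D_{t}^{-\frac{1}{2}}$ correctly converts the spectra of the nonsymmetric products into the Loewner bounds $Q+K_{t}^{T}K_{t}\preceq\bar{\lambda}[(Q+K_{t}^{T}K_{t})D_{t}]\,D_{t}^{-1}$ and $F\succeq\underline{\lambda}(FD_{t})\,D_{t}^{-1}$. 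The only points you use implicitly and could state: $\underline{\lambda}(FD_{t})>0$ because $F,D_{t}\succ 0$, so the division is legitimate and order-preserving; and $\bar{\lambda}[(Q+K_{t}^{T}K_{t})D_{t}]\geq 0$, so multiplying the inequality $D_{t}^{-1}\preceq\frac{1}{\underline{\lambda}(FD_{t})}F$ by that scalar does not flip it. With those two remarks the chain $\bar{\lambda}(X)\leq\bar{\lambda}(Y)\leq\frac{\bar{\lambda}[(Q+K_{t}^{T}K_{t})D_{t}]}{\underline{\lambda}(FD_{t})}\bar{\lambda}(D_{t})$ is complete.
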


From this lemma, given any stabilizing controller $K_{t}$ one can always find an upper bound for $\bar{\lambda}(X)$ using the function $\beta(A,G,Q)$. This function will be used next to verify stability of $\hat{g}(s)$.

\subsubsection{Stability condition}

Before stating the main stability criterion for $\hat{g}(s)$, we first make the following assumption on the projection weight $w$ from Definition \ref{Pd}.
\begin{assumption}
The weight vector $w$ satisfies $w^{T}_{\mathcal{I}_{i}}v_{\mathcal{I}_{i}} \neq 0$, $i=1,...,r$ for any $Av = \lambda v$, $Re(\lambda) \geq 0$.
\label{assw}
\end{assumption}

Note that from Definition \ref{Pd}, if $w^{T}_{\mathcal{I}_{i}}v_{\mathcal{I}_{i}} = 0$, $i=1,...,r$ then $Pv = 0$. This would imply $(A-B\hat{K})v = Av = \lambda v$, $Re(\lambda)\geq 0$, which means that the unstable eigenvalues of $A$ are retained in the closed-loop. Assumption \ref{assw} is made to avoid this situation. With Assumption \ref{assw}, we next state a sufficient condition for the stability of $\hat{g}(s)$.
\begin{theorem}
\label{stab}
Given Assumption \ref{asss} and \ref{assw}, the TFM $\hat{g}(s)$ is asymptotically stable if 
\begin{align}
\underline{\sigma}(Q) - 2\bar{\sigma}(X) \bar{\sigma}(G)\bar{\sigma}(E) + \underline{\sigma}(X)^{2} \underline{\sigma}(G)  > 0 ,
\label{stabcond}
\end{align}
where $E=X-\hat{X}$ denotes the error between the full-order ARE solution from (\ref{are}) and the projected solution from (\ref{invert}).
\end{theorem}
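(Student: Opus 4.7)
The plan is a direct Lyapunov argument using the full-order ARE solution $X$ as a common Lyapunov function. First I would write the closed-loop matrix under the projected controller as a perturbation of the benchmark closed-loop matrix. Since $B\hat{K} = BR^{-1}B^{T}\hat{X} = G\hat{X} = G(X-E)$, I get
\begin{equation*}
A - B\hat{K} \;=\; (A - GX) + GE \;=\; A_{cl} + GE,
\end{equation*}
where $A_{cl} := A - BK$ is the benchmark closed-loop matrix, which is Hurwitz under Assumption \ref{asss}. Before this step is meaningful, I need $\hat{X}$ (and hence $E$) to actually be well-defined; this is where Assumption \ref{assw} and the existence discussion of Section \RNum{3}.A.1 (possibly together with the shift \eqref{QRmod}) enter, guaranteeing that $\tilde{X}$ exists and $\hat{X} = P^{T}\tilde{X}P$ is a legitimate symmetric matrix.

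Next I would convert the full-order ARE \eqref{are} into a Lyapunov identity for $A_{cl}$. Substituting $XGX = A^{T}X + XA + Q$ and regrouping,
\begin{equation*}
A_{cl}^{T}X + XA_{cl} \;=\; (A-GX)^{T}X + X(A-GX) \;=\; -Q - XGX.
\end{equation*}
I would then test $X$ as a Lyapunov function for $A_{cl}+GE$:
\begin{equation*}
(A_{cl}+GE)^{T}X + X(A_{cl}+GE) \;=\; -Q - XGX + E^{T}GX + XGE.
\end{equation*}
Asymptotic stability of $\hat{g}(s)$ will follow once I show that the right-hand side is negative definite.

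The remaining work is a singular-value bookkeeping step. Using $X \succ 0$ symmetric and $G \succeq 0$ symmetric, I would lower-bound $\lambda_{\min}(Q+XGX) \geq \underline{\sigma}(Q) + \underline{\sigma}(X)^{2}\underline{\sigma}(G)$ (via $XGX = G^{1/2}X^{2}G^{1/2}$ up to similarity, or simply $v^{T}XGXv \geq \underline{\sigma}(G)\|Xv\|^{2} \geq \underline{\sigma}(G)\underline{\sigma}(X)^{2}\|v\|^{2}$), and upper-bound $\lambda_{\max}(E^{T}GX + XGE) \leq 2\bar{\sigma}(X)\bar{\sigma}(G)\bar{\sigma}(E)$. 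Combining these gives
\begin{equation*}
(A_{cl}+GE)^{T}X + X(A_{cl}+GE) \;\preceq\; -\bigl[\underline{\sigma}(Q) + \underline{\sigma}(X)^{2}\underline{\sigma}(G) - 2\bar{\sigma}(X)\bar{\sigma}(G)\bar{\sigma}(E)\bigr]I,
\end{equation*}
so the hypothesis \eqref{stabcond} makes the right-hand side strictly negative definite, and standard Lyapunov theory forces $A_{cl}+GE = A-B\hat{K}$ to be Hurwitz, i.e. $\hat{g}(s)$ is asymptotically stable.

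I expect the only real obstacle to be the handling of $XGX$: $G$ is only positive \emph{semi}-definite in general, so $\underline{\sigma}(G)$ can be zero and the lower bound $\underline{\sigma}(X)^{2}\underline{\sigma}(G)$ may vanish, in which case \eqref{stabcond} is effectively driven by $\underline{\sigma}(Q)$ alone. The bookkeeping must remain valid in that degenerate case, which is fine because the inequality $v^{T}XGXv \geq \underline{\sigma}(G)\underline{\sigma}(X)^{2}\|v\|^{2}$ still holds (trivially when $\underline{\sigma}(G)=0$). Everything else is standard linear-algebraic manipulation, and no further ingredients beyond the already-stated assumptions and Lemma \ref{xblemma} (which is not actually needed for this particular bound, but is available to estimate $\bar{\sigma}(X)$ if one wanted a fully explicit criterion) are required.
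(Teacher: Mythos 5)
Your proposal is correct and takes essentially the same route as the paper: the same Lyapunov function $V(x)=x^{T}Xx$ with $X$ the full-order ARE solution, the same ARE substitution reducing the stability test to $Q \succ XGE+EGX-XGX$, and the same Weyl-type singular-value bounds on the cross term and on $XGX$. Writing $A-G\hat{X}=A_{cl}+GE$ explicitly is only a cosmetic repackaging of the paper's computation, and your remark about the degenerate case $\underline{\sigma}(G)=0$ is consistent with how the paper's bound behaves.
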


Theorem \ref{stab} provides two options for achieving stability of $\hat{g}(s)$. In order to meet the condition (\ref{stabcond}), one can select $Q$ and $R$ such that $\underline{\sigma}(Q) \gg \bar{\sigma}(G)>0$. In practice, this choice of $(Q, R)$ will make the design more robust towards the uncertainties in $A$ and $B$ \cite{robustLQR}. As a drawback, a larger $Q$ will also result in a high feedback gain, making the system vulnerable to noise. An alternative to satisfy (\ref{stabcond}) is to minimize $\bar{\sigma}(E)$. As will be shown in the next subsection, $\bar{\sigma}(E)$ is proportional to the objective function of our proposed upper bound relaxation for (\ref{MM}). Therefore, solving the relaxation problem will also assist in enlarging the inequality gap in (\ref{stabcond}). The following lemma provides a sufficient condition for (\ref{stabcond}) by which stability of $\hat{g}(s)$ can be verified without knowing $X$.
\begin{lemma}
The stability condition (\ref{stabcond}) holds if 
\begin{align}
 \bar{\sigma}(\tilde{X}) < \frac{\underline{\sigma}(Q)}{2 \bar{\sigma}(G)\beta(A,G,Q)} - \beta(A,G,Q), \label{vcondieq}
\end{align}
where function $\beta(A,G,Q)$ is defined in Lemma \ref{xblemma}.
\label{vcond}
\end{lemma}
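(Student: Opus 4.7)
The plan is to show that the displayed bound on $\bar{\sigma}(\tilde{X})$ forces the left hand side of (\ref{stabcond}) to be positive, by replacing every quantity in (\ref{stabcond}) that depends on the unknown full order solution $X$ or on the projected error $E$ with a quantity that only depends on $\tilde{X}$, $\beta(A,G,Q)$, $\bar{\sigma}(G)$, and $\underline{\sigma}(Q)$.

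First, I would dispose of the term $\underline{\sigma}(X)^{2}\underline{\sigma}(G)$ in (\ref{stabcond}). Since $G=BR^{-1}B^{T}\succeq 0$ and $X\succ 0$ (Assumption \ref{asss}), this term is nonnegative, so a sufficient condition for (\ref{stabcond}) is the simpler inequality
\begin{align*}
\underline{\sigma}(Q) \;>\; 2\,\bar{\sigma}(X)\,\bar{\sigma}(G)\,\bar{\sigma}(E).
\end{align*}
It is enough to upper bound $\bar{\sigma}(X)$ and $\bar{\sigma}(E)$ and then ask that this inequality still hold.

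For $\bar{\sigma}(X)$, I would apply Lemma \ref{xblemma} directly: since $X=X^{T}\succ 0$, $\bar{\sigma}(X)=\bar{\lambda}(X)\le \beta(A,G,Q)$. For the error term I would use $E=X-\hat{X}=X-P^{T}\tilde{X}P$ and the triangle inequality to write
\begin{align*}
\bar{\sigma}(E) \;\le\; \bar{\sigma}(X)+\bar{\sigma}(P^{T}\tilde{X}P).
\end{align*}
The key observation is that $P$ is row-orthonormal ($PP^{T}=I_{r}$ by Definition \ref{Pd}), hence $\bar{\sigma}(P)=\bar{\sigma}(P^{T})=1$, and therefore $\bar{\sigma}(P^{T}\tilde{X}P)\le \bar{\sigma}(\tilde{X})$. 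Combining with the previous bound gives $\bar{\sigma}(E)\le \beta(A,G,Q)+\bar{\sigma}(\tilde{X})$.

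Substituting both bounds into the simplified sufficient condition, it suffices to have
\begin{align*}
\underline{\sigma}(Q) \;>\; 2\,\beta(A,G,Q)\,\bar{\sigma}(G)\,\bigl[\beta(A,G,Q)+\bar{\sigma}(\tilde{X})\bigr],
\end{align*}
which, after dividing by $2\bar{\sigma}(G)\beta(A,G,Q)$ (positive, since $G\neq 0$ whenever the LQR is nontrivial and $\beta>0$) and rearranging, is exactly (\ref{vcondieq}). The main subtlety I anticipate is the clean justification that $\bar{\sigma}(P^{T}\tilde{X}P)\le\bar{\sigma}(\tilde{X})$ and that the step dropping the $\underline{\sigma}(X)^{2}\underline{\sigma}(G)$ term is legitimate (i.e., that everything in sight is real and signs are preserved); everything else is algebraic rearrangement. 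No additional assumptions beyond Assumption \ref{asss} and the existence of the bound $\beta$ in Lemma \ref{xblemma} are needed.
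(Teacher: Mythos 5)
Your proposal is correct and follows essentially the same route as the paper: drop the nonnegative $\underline{\sigma}(X)^{2}\underline{\sigma}(G)$ term, bound $\bar{\sigma}(X)\le\beta(A,G,Q)$ via Lemma \ref{xblemma}, bound $\bar{\sigma}(E)\le\bar{\sigma}(X)+\bar{\sigma}(P^{T}\tilde{X}P)\le\beta(A,G,Q)+\bar{\sigma}(\tilde{X})$, and rearrange. Your explicit justification of $\bar{\sigma}(P^{T}\tilde{X}P)\le\bar{\sigma}(\tilde{X})$ via $PP^{T}=I_{r}$ is a detail the paper leaves implicit (it later uses the equality $\bar{\sigma}(P^{T}\tilde{X}P)=\bar{\sigma}(\tilde{X})$), but it is the same argument.
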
 

\begin{remark}
For the case when $A$ is unstable, computing $\beta(A,G,Q)$ would require the knowledge of $K_{t}$ according to Lemma \ref{xblemma}. Similarly for $w$ to meet Assumption \ref{assw}, one also needs to know the eigenvectors of all unstable modes of $A$. Verifying stability of $\hat{g}(s)$ for an unstable $A$ is, therefore, admittedly more computationally expensive than for a stable $A$. This computational burden obviously does not exist when $A$ is stable, and may not also exist when $A$, despite having zero eigenvalues, has specific structural properties. One such example is when $A$ is a weighted Laplacian matrix (i.e., when (1) is a consensus network). We will illustrate this special case of consensus network in Appendix A.  
\end{remark}

\subsection{Relaxation I: Upper Bound Minimization}

We present our first stage of relaxation assuming that the sufficient condition for closed-loop stability from Theorem \ref{stab} holds. As mentioned before, finding a $P$ that exactly minimizes $\| g(s)-\hat{g}(s) \|_{\mathcal{H}_{2}}$ is an intractable problem. To relax this problem, we find an upper bound for $\| g(s)-\hat{g}(s) \|_{\mathcal{H}_{2}}$. We denote the error system by $g_{e}(s)$, which can be written as
\begin{align}
\nonumber
g_{e}(s) :&= g(s) - \hat{g}(s) = \left[
\begin{array}{c|c}
A_{e} & B_{e} \\ \hline
C_{e} & 0
\end{array}
\right] \\
&= \left[
\begin{array}{cc|c}
A-GX & & B_{d} \\ 
 & A-G\hat{X} & -B_{d} \\ \hline
I_{n} & I_{n} & 0
\end{array}
\right].
\label{gerr}
\end{align}
From a similarity transformation of $T=\begin{bmatrix} I_{n} & I_{n} \\ I_{n} & 0 \end{bmatrix}$ and $T^{-1}=\begin{bmatrix} 0 & I_{n} \\ I_{n} & -I_{n} \end{bmatrix}$, (\ref{gerr}) yields
\begin{align}
\nonumber
& g_{e}(s)  {=} \left[
\begin{array}{c|c}
TA_{e}T^{-1} & TB_{e} \\ \hline
C_{e}T^{-1} & 0
\end{array}
\right] {=} \left[
\begin{array}{cc|c}
A{-}G\hat{X} & G(\hat{X}{-}X) & 0 \\ 
 & A{-}GX & B_{d} \\ \hline
I_{n} & 0 & 0
\end{array}
\right] \\ 
& = -(sI_{n}-A+G\hat{X})^{-1}GE(sI_{n}-A+GX)^{-1}B_{d}, \label{ger}
\end{align}
where $E=X-\hat{X}$. By taking norms on both sides of (\ref{ger}), we get
\begin{align*}
\| g_{e}(s) \|_{\mathcal{H}_{2}} \leq \| (sI_{n}-A+G\hat{X})^{-1}G \|_{\mathcal{H}_{\infty}} \| Eg(s)\|_{\mathcal{H}_{2}},
\end{align*}
which from the bounded real lemma \cite{rc} and the definition of $\mathcal{H}_{2}$ norm reduces to
\begin{align}
\| g_{e}(s) \|_{\mathcal{H}_{2}} \leq \gamma \| E\Phi^{\frac{1}{2}}\|_{F},
\label{bieq}
\end{align}
where $\gamma$ is any positive real number such that a real-valued matrix $\Gamma = \Gamma^{T} \succeq 0$ exists and satisfies
\begin{align*}
\begin{bmatrix}
\Gamma (A-G\hat{X})+(A-G\hat{X})^{T}\Gamma & \Gamma G & I_{n} \\
G\Gamma & -\gamma I_{n} & 0 \\
I_{n} & 0 & -\gamma I_{n}
\end{bmatrix} \prec 0,
\end{align*}
and $\Phi :=\Phi^{\frac{1}{2}} \Phi^{\frac{T}{2}} = \int_{0}^{\infty} e^{(A-GX)\tau}B_{d}B_{d}^{T}e^{(A-GX)^{T}\tau} d\tau \succ 0$ is the solution of the Lyapunov equation
\begin{align}
(A-GX)\Phi + \Phi (A-GX)^{T} + B_{d}B_{d}^{T} = 0.
\label{lyap}
\end{align}
Inequality (\ref{bieq}) shows that $\| g_{e}(s)\|_{\mathcal{H}_{2}}$ is linearly bounded by $\| E\Phi^{\frac{1}{2}}\|_{F}$. Therefore, one way to solve the original model matching problem (\ref{MM}) will be to find a $P$ that minimizes $\| E\Phi^{\frac{1}{2}}\|_{F}$. By doing so, $\bar{\sigma}(E)$ can also be minimized to some extent since $\bar{\sigma}(E) \leq \|E\|_{F} \leq \bar{\sigma}(\Phi^{-\frac{1}{2}}) \| E\Phi^{\frac{1}{2}}\|_{F}$, which will help in meeting the stability condition (\ref{stabcond}). This type of bound minimization is common in model and controller reduction, and has been attempted (see \cite{are} and the references therein) under the assumption that $P$ is unstructured, or more specifically $P$ is an $r$-dimensional Krylov subspace from $\mathcal{K}(A,Q^{\frac{1}{2}},r)=\begin{bmatrix}
Q^{\frac{1}{2}} & AQ^{\frac{1}{2}} & \cdots & A^{r-1}Q^{\frac{1}{2}}
\end{bmatrix}$. By this assumption, $E$ can be found as an explicit function of $P$ associated with a Householder transformation. In our case, however, $P$ has a structure as in (\ref{eqp}), due to which this explicit functional relationship does not hold anymore. We, therefore, apply perturbation theory of ARE to further relax the bound in (\ref{bieq}), and derive a new upper bound on $\| E\Phi^{\frac{1}{2}}\|_{F}$ as an explicit function of $P$ in the following theorem.

\begin{theorem}
Denote $\xi = \| (I_{n}-P^{T}P)\Phi^{\frac{1}{2}} \|_{F}$. The norm of the weighted error $E\Phi^{\frac{1}{2}}$ satisfies the inequality
\begin{align}
& \| E\Phi^{\frac{1}{2}}\|_{F} \leq f(\xi) = \epsilon_{1} \bar{\sigma}(Q) \xi^{2} + 2 \epsilon_{1} \epsilon_{2} \xi  + \alpha \epsilon_{1}\epsilon_{3},
\label{rboundn}
\end{align}
where $\epsilon_{1} = \frac{\bar{\sigma}(\Phi^{-\frac{1}{2}}) }{\underline{\sigma}[\Phi^{-\frac{1}{2}}(A-GX)\Phi^{\frac{1}{2}}]}$, $\epsilon_{2} = \tilde{\beta} \bar{\sigma}(A) \bar{\sigma}(\Phi^{\frac{1}{2}}) + \bar{\sigma}(Q\Phi^{\frac{1}{2}})$, $\epsilon_{3} = (\tilde{\beta}^{2} + 1 )\bar{\sigma}(\Phi) $, and $\tilde{\beta} = \mathrm{sup}_{P}\ \beta(\tilde{A},\tilde{G},\tilde{Q})$ are positive scalars that are independent of $P$, and $\alpha$ is defined in (\ref{QRmod}).
\label{tmain}
\end{theorem}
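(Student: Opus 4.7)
My plan is to view $\hat X=P^{T}\tilde XP$ as an approximate solution of the full-order ARE (\ref{are}) and quantify the error $E=X-\hat X$ by perturbation theory. First, pre- and post-multiplying the reduced ARE (\ref{reare}) by $P^{T}$ and $P$, and using $P^{T}\tilde XP=\hat X$, $PP^{T}=I_{r}$, $\Pi\hat X=\hat X\Pi=\hat X$ with $\Pi:=P^{T}P$, together with the shifted definitions in (\ref{QRmod}), I obtain $A^{T}\hat X+\hat X A+Q-\hat X G\hat X=R_{e}$ with the explicit residual $R_{e}=\bar\Pi A^{T}\hat X+\hat X A\bar\Pi+(Q-\Pi Q\Pi)+\alpha(P^{T}\tilde X^{2}P-\Pi)$, where $\bar\Pi:=I_{n}-\Pi$.

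\textbf{Lyapunov-type equation for the error.} Subtracting the full ARE (\ref{are}) from the residual equation, using $XGX-\hat XG\hat X=EGX+\hat XGE$ and $(A-G\hat X)^{T}=(A-GX)^{T}+EG$ (the latter from $E=E^{T}$), gives $(A-GX)^{T}E+E(A-GX)=-R_{e}-EGE$. Pre- and post-multiplying by $\Phi^{1/2}$ and introducing $M:=\Phi^{-1/2}(A-GX)\Phi^{1/2}$ (Hurwitz since $A-GX$ is) and $\tilde G_{\Phi}:=\Phi^{-1/2}G\Phi^{-1/2}$ converts this into a Riccati-type equation $M^{T}Z+ZM+Z\tilde G_{\Phi}Z=-\Phi^{1/2}R_{e}\Phi^{1/2}$ for the symmetric unknown $Z:=\Phi^{1/2}E\Phi^{1/2}$.

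\textbf{Error bound via the residual.} Applying a standard Lyapunov perturbation bound, with the quadratic $Z\tilde G_{\Phi}Z$ treated as a higher-order term absorbed into the estimate, I would obtain $\|Z\|_{F}\le\|\Phi^{1/2}R_{e}\Phi^{1/2}\|_{F}/\underline\sigma(M)$, whence $\|E\Phi^{1/2}\|_{F}\le\bar\sigma(\Phi^{-1/2})\|Z\|_{F}=\epsilon_{1}\|\Phi^{1/2}R_{e}\Phi^{1/2}\|_{F}$. It then remains to bound $\|\Phi^{1/2}R_{e}\Phi^{1/2}\|_{F}\le\bar\sigma(Q)\xi^{2}+2\epsilon_{2}\xi+\alpha\epsilon_{3}$ piece by piece. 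Each of the two $A$-terms contributes $\tilde\beta\bar\sigma(A)\bar\sigma(\Phi^{1/2})\xi$, using $\bar\sigma(\hat X)\le\bar\sigma(\tilde X)\le\tilde\beta$ from Lemma \ref{xblemma} applied to the reduced ARE. For the $Q$-term, I would decompose $Q-\Pi Q\Pi=\bar\Pi Q\bar\Pi+\bar\Pi Q\Pi+\Pi Q\bar\Pi$; the ``doubly projected'' piece furnishes the leading quadratic contribution via $\|(\bar\Pi\Phi^{1/2})^{T}Q(\bar\Pi\Phi^{1/2})\|_{F}\le\bar\sigma(Q)\|\bar\Pi\Phi^{1/2}\|_{F}^{2}=\bar\sigma(Q)\xi^{2}$, which follows by expanding $Q\succeq0$ in its spectral basis and applying the triangle inequality over its rank-one components, while the two cross pieces each contribute $\bar\sigma(Q\Phi^{1/2})\xi$. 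Finally, the $\alpha$-term is bounded by $\alpha(\tilde\beta^{2}+1)\bar\sigma(\Phi)=\alpha\epsilon_{3}$, using $\bar\sigma(P^{T}\tilde X^{2}P)\le\tilde\beta^{2}$ and $\bar\sigma(\Pi)\le1$. Combining produces (\ref{rboundn}).

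\textbf{Main obstacle.} The delicate step is the Lyapunov bound in the third paragraph: obtaining the constant $\underline\sigma(M)$ (rather than a two-matrix separation such as $\mathrm{sep}(M,-M^{T})$) while cleanly absorbing the quadratic $Z\tilde G_{\Phi}Z$ term without producing a circular inequality in $\|E\Phi^{1/2}\|_{F}$. A secondary subtlety is that the uniform bound $\bar\sigma(\tilde X)\le\tilde\beta$ used in the residual estimate must hold for every admissible $P$; this is exactly what the $\alpha$-shift in (\ref{QRmod}) guarantees by enforcing stabilizability/observability of the reduced pair and hence existence of $\tilde X$.
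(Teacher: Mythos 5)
Your overall architecture matches the paper's proof closely: you form the same residual $\mathcal{R}=R_{e}$ of the approximate ARE (the paper writes it as $\alpha\hat X^{2}+U^{T}UA^{T}\hat X+\hat XAU^{T}U+Q-P^{T}\tilde QP$ with $U^{T}U=I_{n}-P^{T}P$, which is algebraically identical to yours), you arrive at the same $\epsilon_{1}\|\Phi^{\frac{T}{2}}\mathcal{R}\Phi^{\frac{1}{2}}\|_{F}$ structure, and your term-by-term estimate of the weighted residual (quadratic $\bar\sigma(Q)\xi^{2}$ piece from the doubly-projected block, two cross terms each worth $\epsilon_{2}\xi$ using $\bar\sigma(\hat X)=\bar\sigma(\tilde X)\le\tilde\beta$, and the $\alpha\epsilon_{3}$ constant) is exactly the paper's Step 3. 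The uniformity of $\bar\sigma(\tilde X)\le\tilde\beta$ over $P$ is handled in the paper simply by defining $\tilde\beta=\sup_{P}\beta(\tilde A,\tilde G,\tilde Q)$, so your secondary worry is not an issue.

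However, the step you yourself flag as the ``main obstacle'' is a genuine gap, and it does not go through as written. By splitting $XGX-\hat XG\hat X$ symmetrically around $X$ you obtain $(A-GX)^{T}E+E(A-GX)=-R_{e}-EGE$, and the quadratic term $EGE$ cannot be ``absorbed into the estimate'': any bound of the form $\|Z\|_{F}\le(\|\Phi^{\frac{T}{2}}R_{e}\Phi^{\frac{1}{2}}\|_{F}+c\|Z\|_{F}^{2})/\underline\sigma(M)$ is circular and only closes under an additional smallness hypothesis on the residual (a fixed-point or contraction argument), which the theorem does not assume. The paper avoids the quadratic remainder entirely by using the asymmetric splitting $XGX-\hat XG\hat X=XGE+EG\hat X$, which yields the exactly linear Sylvester equation $(A-G\hat X)^{T}E+E(A-GX)=-\mathcal{R}$ with two \emph{different} Hurwitz coefficient matrices. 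It then bounds the Sylvester operator by vectorization: with $\mathcal{L}=I_{n}\otimes\mathcal{A}+\mathcal{B}^{T}\otimes I_{n}$, $\mathcal{A}=\Phi^{\frac{T}{2}}(A-G\hat X)^{T}\Phi^{-\frac{T}{2}}$, $\mathcal{B}=\Phi^{-\frac{1}{2}}(A-GX)\Phi^{\frac{1}{2}}$, Weyl's inequality gives $\underline\sigma^{2}(\mathcal{L})\ge\underline\sigma^{2}(\mathcal{A})+\underline\sigma^{2}(\mathcal{B})+2\bar\lambda(\mathcal{A})\bar\lambda(\mathcal{B})\ge\underline\sigma^{2}(\mathcal{B})$, since both spectra lie in the open left half-plane. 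This is precisely why only $\underline\sigma[\Phi^{-\frac{1}{2}}(A-GX)\Phi^{\frac{1}{2}}]$ (one of the two matrices, not a $\mathrm{sep}$-type quantity and not anything involving $\hat X$) appears in $\epsilon_{1}$ --- the feature your approach could not reproduce. To repair your proof, replace your symmetric linearization with this asymmetric one; the rest of your argument then goes through essentially verbatim.
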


The constant $\alpha \epsilon_{1}\epsilon_{3}$ in (\ref{rboundn}) represents the matching error introduced by the constant shift $\alpha$ from (\ref{QRmod}). This error can be disregarded if the reduced-order ARE (\ref{reare}) admits a solution. From (\ref{bieq}) and (\ref{rboundn}), it then follows that $\| g_{e}(s)\|_{\mathcal{H}_{2}} \leq \gamma f(\xi)$, due to which we approach the minimization of $\| g_{e}(s)\|_{\mathcal{H}_{2}}$ by minimizing $f(\xi)$ with respect to $P$. Since $f(\xi)$ is a monotonic function of $\xi$, the minimization of $f(\xi)$ is equivalent to minimizing the value of $\xi$ as
\begin{equation}
\begin{aligned}
& \underset{P}{\mathrm{minimize}}
& &  \xi = \| \Phi^{\frac{1}{2}}-P^{T}P\Phi^{\frac{1}{2}}\|_{F}.
\end{aligned}
\label{xi} \tag{RL-1}
\end{equation}
The optimization (\ref{xi}), therefore, serves as an upper bound relaxation for the original model matching problem (\ref{MM}). Note that, in general, it is impossible to exactly quantify the optimality gap between (\ref{MM}) and (\ref{xi}) since (\ref{MM}) is non-convex even without posing any combinatorial constraints on $P$. The optimality gap will be small if the minimum value of $\xi$ is close to zero, in which case the error $\| g_{e}(s)\|_{\mathcal{H}_{2}}$ in (\ref{MM}) will be nearly zero as well following (\ref{bieq}) and (\ref{rboundn}). 

The ideal case $\xi = 0$ will happen when $\Phi^{\frac{1}{2}} = P^{T}P\Phi^{\frac{1}{2}}$, meaning that $\Phi^{\frac{1}{2}}$ is invariant to the mapping $P^{T}P$. This holds for the trivial case where $r=n$ and $P=I_{n}$. To achieve a sufficiently small minimum for (\ref{xi}) for $r< n$, we will develop two designs in Sections \RNum{4} and \RNum{5}.

\subsection{Relaxation II: Low-Rank Approximation}

We next discuss the numerical complexity in constructing the optimization problem (\ref{xi}), and how this complexity can be simplified by making appropriate approximations on $\Phi$. In the most general case, $\Phi$ required for (\ref{xi}) needs to be computed through the following procedures. First, recall the definition of the Hamiltonian matrix 
\begin{align}
H :=\begin{bmatrix}
A & -G \\
-Q & -A^{T}
\end{bmatrix}.
\label{hami}
\end{align}
The eigenvalues of $H$ are symmetric about the imaginary axis. Suppose $H$ is diagonalizable and that the columns of the matrix $\begin{bmatrix} Y \\ Z \end{bmatrix}_{2n\times n}$ span the stable invariant subspace of $H$, i.e.
\begin{align}
H\begin{bmatrix}
Y \\ Z
\end{bmatrix} = \begin{bmatrix}
Y \\ Z
\end{bmatrix}\Lambda^{-},
\label{hamieig}
\end{align}
where $\Lambda^{-} = diag([\lambda_{1}^{-},...,\lambda_{n}^{-}])$ consists of all the eigenvalues of $H$ in the left-half plane, i.e. $0>\lambda_{1}^{-}>...>\lambda_{n}^{-}$. The stabilizing solution of ARE can thus be found by $X=ZY^{-1}$ \cite{rc}. The first $n$ rows of (\ref{hamieig}) are expanded as $A - GZY^{-1} = Y\Lambda^{-}Y^{-1}$, which means $\Lambda^{-}$ and $Y$ are the eigenvalues and right eigenspace of the closed-loop state matrix $A-GX$. Then from the Lyapunov equation (\ref{lyap}), we can write $\Phi$ directly in terms of $Y$ and $\Lambda^{-}$ as \cite{antoulas}
\begin{align}
\Phi = Y(Y^{-1}B_{d}B_{d}^{T}Y^{-T}\circ \mathcal{C})Y^{T}, \label{gramsolu}
\end{align}
where $\mathcal{C} \in \mathbb{R}^{n\times n}$ is a Cauchy matrix with
$$\mathcal{C}_{i,j} =\begin{bmatrix}
-\frac{1}{\lambda_{i}^{-}+\lambda_{j}^{-}}
\end{bmatrix},$$ 
and subsequently obtain $\Phi^{\frac{1}{2}}$ from the Cholesky decomposition. Therefore, to compute $\Phi$ and then $\Phi^{\frac{1}{2}}$, one will need to compute the full stable eigenspace $Y$ from $H$ following (\ref{hamieig}). This computation is as expensive as solving a full-order LQR with $\mathcal{O}(n^{3})$ complexity for both computation and memory \cite{golub}. This may defeat the purpose of our design since we want our controller to be numerically much simpler than the full-order LQR. To bypass this difficulty, we next show that $\Phi$ can be approximated by a matrix $\Phi_{\kappa}$ that follows from a $\kappa$-dimensional ($\kappa < n$, not necessarily equal to $r$) invariant subspace of $Y$. Ideally $\kappa$ should be at most $r$ to justify the computational benefit of our design while preserving an acceptable accuracy in the error norm $\xi$. This matrix $\Phi_{\kappa}$ is constructed as follows. 
\begin{definition}
Define $\Phi_{\kappa}\in \mathbb{R}^{n\times n}$ as 
\begin{align}
\Phi_{\kappa} := \Phi_{\kappa}^{\frac{1}{2}}\Phi_{\kappa}^{\frac{T}{2}} =Y_{1}(\Omega_{1}B_{d}B_{d}^{T}\Omega_{1}^{T}\circ \mathcal{C}_{1,1})Y_{1}^{T},
\end{align} 
where $Y_{1} {=} Y_{:,1:\kappa}$, $\Omega_{1} {=} Y^{-1}_{1:\kappa,:}$ and $\mathcal{C}_{1,1} {=} \mathcal{C}_{1:\kappa,1:\kappa}$ are respectively the $\kappa$-dimensional partitions of $Y$, $Y^{-1}$ and $\mathcal{C}$. 
\label{Phik}
\end{definition}

By definition of $\Phi_{\kappa}$, one only needs to compute the first $\kappa$ eigenvalues $\lambda^{-}_{1},...,\lambda^{-}_{\kappa}$ of $H$, and the $Y_{1}$ component of the first $\kappa$ eigenvectors. $\Omega_{1}$ can be approximated by the pseudo-inverse of $Y_{1}$. These $\kappa$ smallest eigenvalues and eigenvectors can be solved by Krylov subspace-based techniques such as Arnoldi algorithm in $\mathcal{O}(n\kappa^{2})$ time \cite{golub}. Therefore, instead of (\ref{xi}), we consider solving a computationally simpler approximation of (\ref{xi}) as
\begin{equation}
\begin{aligned}
& \underset{P}{\mathrm{minimize}}
& &  \xi_{\kappa} = \| \Phi_{\kappa}^{\frac{1}{2}}-P^{T}P\Phi_{\kappa}^{\frac{1}{2}}\|_{F},
\end{aligned}
\label{xik} \tag{RL-2}
\end{equation}
where $\Phi_{\kappa}^{\frac{1}{2}} \in \mathbb{R}^{n\times \kappa}$ from Definition \ref{Phik} can be computed as $\Phi_{\kappa}^{\frac{1}{2}} = Y_{1}(\Omega_{1}B_{d}B_{d}^{T}\Omega_{1}^{T}\circ \mathcal{C}_{1,1})^{\frac{1}{2}}$. The optimality gap between optimizations (\ref{xi}) and (\ref{xik}) can be quantified by the following lemma.
\begin{lemma}
Assume $Y^{-1}$ has a moderate condition number $\eta$, and each column of $B_{d}$ has a unitary norm. The minimum $\xi^{*}$ from (\ref{xi}) and $\xi_{\kappa}^{*}$ from (\ref{xik}) satisfy 
\begin{align}
 \xi^{*} -  \xi_{\kappa}^{*} \ \leq \ \sqrt{\eta^{2} n_{b} \sum_{i=\kappa+1}^{n} -\frac{1}{2\lambda^{-}_{i}}}.
\label{xierror}
\end{align}
\label{kerror}
\end{lemma}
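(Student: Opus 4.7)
The plan is to compare $\xi(P)$ with $\xi_\kappa(P)$ pointwise in $P$ and then pass to the minima. Because $P^{*}$ minimizes $\xi$, we have $\xi^{*}\le\xi(P^{*}_\kappa)$, so
$$
\xi^{*}-\xi^{*}_\kappa \;\le\; \xi(P^{*}_\kappa)-\xi_\kappa(P^{*}_\kappa),
$$
and it suffices to bound $|\xi(P)-\xi_\kappa(P)|$ uniformly in $P$. The one technicality is that $\Phi_\kappa^{1/2}$ is $n\times\kappa$ while a natural square root of $\Phi$ is $n\times n$; I would zero-pad $\Phi_\kappa^{1/2}$ to $\tilde{\Phi}_\kappa^{1/2}:=[\,\Phi_\kappa^{1/2}\ |\ 0\,]\in\mathbb{R}^{n\times n}$, which does not change $\xi_\kappa(P)$. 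Using the freedom that $\xi(P)$ depends only on $\Phi$ (not on the particular square root), the reverse triangle inequality on $\|\cdot\|_{F}$ together with $\bar{\sigma}(I-P^{T}P)\le 1$ gives
$$
|\xi(P)-\xi_\kappa(P)|\;\le\;\|\Phi^{1/2}-\tilde{\Phi}_\kappa^{1/2}\|_{F}.
$$

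The key structural step is a matched choice of the two square roots. Writing $\Phi=YMY^{T}$ with $M=UU^{T}\circ\mathcal{C}\succeq 0$ and $U:=Y^{-1}B_{d}$, I would take the block Cholesky factorization
$$
M=LL^{T},\ \ L=\begin{bmatrix}M_{11}^{1/2}&0\\L_{21}&L_{22}\end{bmatrix},\ L_{21}=M_{21}M_{11}^{-1/2},\ L_{22}L_{22}^{T}=M_{22}-M_{21}M_{11}^{-1}M_{12},
$$
and set $\Phi^{1/2}:=YL$. The first $\kappa$ columns of $\Phi^{1/2}$ are $Y_{1}M_{11}^{1/2}+Y_{2}L_{21}$, so that $\Phi^{1/2}-\tilde{\Phi}_\kappa^{1/2}=Y_{2}\,[L_{21}\ |\ L_{22}]$. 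Applying $\|AB\|_{F}\le\bar{\sigma}(A)\|B\|_{F}$ together with the Schur-complement identity $L_{21}L_{21}^{T}+L_{22}L_{22}^{T}=M_{22}$ yields
$$
\|\Phi^{1/2}-\tilde{\Phi}_\kappa^{1/2}\|_{F}^{2}\;\le\;\bar{\sigma}(Y_{2})^{2}\,tr(M_{22})\;\le\;\bar{\sigma}(Y)^{2}\,tr(M_{22}).
$$

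To close, I would bound $tr(M_{22})=\sum_{i=\kappa+1}^{n}\|U_{i,:}\|_{2}^{2}\cdot(-1/(2\lambda^{-}_{i}))$ using the unit-norm assumption on the columns of $B_{d}$: each column of $U$ has $2$-norm at most $\bar{\sigma}(Y^{-1})$, so $|U_{jk}|\le\bar{\sigma}(Y^{-1})$ and $\|U_{j,:}\|_{2}^{2}\le n_{b}\bar{\sigma}(Y^{-1})^{2}$. Hence
$$
tr(M_{22})\;\le\;n_{b}\bar{\sigma}(Y^{-1})^{2}\sum_{i=\kappa+1}^{n}\left(-\tfrac{1}{2\lambda^{-}_{i}}\right),
$$
and substituting $\eta=\bar{\sigma}(Y)\bar{\sigma}(Y^{-1})$ recovers the claimed inequality.

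The main obstacle is the construction of matched square roots: the naive ``take the first $\kappa$ columns of $\Phi^{1/2}$ equal to $\Phi_\kappa^{1/2}$'' fails because $\Phi-\Phi_\kappa$ need not be PSD, but the block Cholesky of $M$ supplies a tractable substitute whose residual is exactly the Schur complement in the last $n-\kappa$ modes. A minor technicality is that this factorization requires $M_{11}\succ 0$; in degenerate cases one can regularize $M_{11}\mapsto M_{11}+\varepsilon I$, push the bound through, and let $\varepsilon\to 0$.
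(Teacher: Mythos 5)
Your proposal is correct and follows essentially the same route as the paper: compare $\xi$ and $\xi_{\kappa}$ at the minimizer of $\xi_{\kappa}$, write a square root of $\Phi$ as a part that is Gram-equivalent to $\Phi_{\kappa}^{\frac{1}{2}}$ plus a residual supported on the columns of $Y_{2}$, and bound that residual by $\sqrt{\eta^{2}n_{b}\sum_{i=\kappa+1}^{n}-\frac{1}{2\lambda^{-}_{i}}}$. The only cosmetic difference is that the paper applies the Cholesky factorization to the Cauchy matrix $\mathcal{C}$ alone and keeps the $diag(Y^{-1}b_{i})$ factors separate in a fat $n\times n n_{b}$ square root, whereas you fold everything into the block Cholesky of $M=UU^{T}\circ\mathcal{C}$; the residual in either case is the same Schur-complement object, so the two arguments coincide.
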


From Lemma \ref{kerror}, the optimality gap between (\ref{xi}) and (\ref{xik}) will be negligible when the error $\sqrt{\eta^{2}n_{b}\sum_{i=\kappa+1}^{n} -\frac{1}{2\lambda^{-}_{i}}}$ is kept small. In practice, this situation happens when there exists $\kappa$ ($\kappa\ll n$) dominant eigenvalues in the Hamiltonian matrix, i.e., $H$ has the following spectral gap
\begin{align*}
0 < |\lambda_{1}^{-}| < ... < |\lambda_{\kappa}^{-}| \ll  |\lambda_{\kappa+1}^{-}| < ... < |\lambda_{n}^{-}| .
\end{align*}
This gap can exist if the open-loop network (\ref{full}) exhibits coherent behavior \cite{chow}. As a result of this spectral gap, the RHS of (\ref{xierror}) can become sufficiently small, in which case the optimal value of (\ref{xik}) will closely resemble that of (\ref{xi}) while the computation of the objective function requiring a much tractable complexity of $\mathcal{O}(n\kappa^{2})$, $\kappa < n$. The idea of utilizing a $\kappa$-dimensional subspace for computing $\Phi_{\kappa}$ is similar in spirit to finding an unstructured approximate ARE solution as proposed in \cite{are}. However, it should be noted that unlike \cite{are} where the selection of the $\kappa$ eigenvectors is undetermined, for our problem the error bound in Lemma \ref{kerror} clearly guides the choice of the $\kappa$ eigenvectors in terms of tightening the optimality gap between (\ref{xi}) and (\ref{xik}).

\section{Design \RNum{1}: Cluster Design}

In this section we present an algorithm to design $P$ by solving (\ref{xik}). Note that $P$ has two degrees of freedom - $\mathcal{I}$ and $w$. For the design in this section, we keep $w$ fixed, and minimize (\ref{xik}) over $\mathcal{I}$. Although inherently this is an NP-hard problem, fortunately the specific structure of the objective function $\xi_{\kappa}$, together with the structure imposed on $P$ in (\ref{eqp}), enables (\ref{xik}) to be solved by efficient numerical algorithms such as weighted k-means \cite{kmeans}. We show these results as follows.

To establish the equivalency of (\ref{xik}) to the weighted k-means optimization, it is useful to borrow a nominal projection matrix $\bar{P}$ as
\begin{align}
\bar{P}_{i,j} := \begin{cases} 
\frac{1}{\| w_{\mathcal{I}_{i}} \|_{2}} & \quad  j \in \mathcal{I}_{i} \\ 
0 & \quad \text{otherwise} 
\end{cases},\ i=1,...,r.
\label{Pn}
\end{align}
From Definition \ref{Pd}, $\bar{P}$ satisfies $P=\bar{P}W$, where $W=diag(w)$. With this notation, we have 
\begin{align}
\Phi^{\frac{1}{2}}_{\kappa} - P^{T}P\Phi^{\frac{1}{2}}_{\kappa} = W(\Psi- \bar{P}^{T}\bar{P}W^{2}\Psi ),
\label{kwo}
\end{align}
where $\Psi  = [\psi_{1},...,\psi_{n}]^{T}$ denotes the matrix $W^{-1}\Phi^{\frac{1}{2}}_{\kappa}$. Therefore, the entries of the matrix $\bar{P}^{T}\bar{P}W^{2}$ can be found by
\begin{align*}
[\bar{P}^{T}\bar{P}W^{2}]_{j,k} = \begin{cases} 
\frac{w^{2}_{k}}{\| w_{\mathcal{I}_{i}}\|_{2}^{2}} & \quad  j \in \mathcal{I}_{i} \ \& \ k \in \mathcal{I}_{i}\\ 
0 & \quad \text{otherwise} 
\end{cases}
\end{align*}
for $i=1,...,r$. Thus the $j^{th}$ row of the matrix $\bar{P}^{T}\bar{P}W^{2}\Psi$ can be written as 
\begin{align}
[\bar{P}^{T}\bar{P}W^{2}\Psi]_{j,:} = c_{i}^{T} =  \frac{\sum_{k \in \mathcal{I}_{i}} w^{2}_{k}\psi_{k}^{T}}{\sum_{k \in \mathcal{I}_{i}} w^{2}_{k}},
\label{centroid}
\end{align}
for $j\in \mathcal{I}_{i}$. It is clear from above that for all the index $j$ that are assigned to the same cluster $\mathcal{I}_{i}$, $c_{i}^{T}$ is a weighted average (or a weighted centroid) of the row vectors $\psi_{j}^{T}$. Moreover, the matrix $\bar{P}^{T}\bar{P}W^{2}\Psi$ will have identical rows for those whose indices are inside the same cluster. Therefore, (\ref{xik}) can be posed as an unsupervised clustering problem
\begin{equation}
\begin{aligned}
& \underset{\mathcal{I}_{1},...,\mathcal{I}_{r}}{\mathrm{minimize}}
& &  \xi^{2}_{\kappa} = \sum_{j=1}^{n} w^{2}_{j}\| \psi_{j} - c_{i}\|_{2}^{2}.
\end{aligned}
\label{wkm}
\end{equation}
The optimization problem in (\ref{wkm}) is in the same form as a weighted k-means optimization, which minimizes the Euclidean distance weighted by $w^{2}_{j}$ between each data point $\psi_{j}$ and its centroid $c_{i}$. Thus, data points which are close to each other in the weighted distance are assigned to the same cluster. A standard method for solving this problem is Lloyd's algorithm \cite{kmeans}, using which we present the weighted k-means clustering for (\ref{wkm}) in Algorithm \ref{algk}. With $\Phi^{\frac{1}{2}}_{\kappa}\in \mathbb{R}^{n\times \kappa}$ as the input, the running time of Lloyd's algorithm is $\mathcal{O}(n\kappa rk)$, where $k$ is the total number of iterations. As in any heuristic algorithm, Algorithm \ref{algk} does not guarantee convergence to a global minimum. Hence, if one is not satisfied by the k-means solution one can apply any state-of-art clustering algorithms for solving (\ref{wkm}). Using the resulting clustering set $\mathcal{I}$ and the fixed weight $w$, we can then construct the projection $P$ and execute the control inversion design to get $\hat{K}$. The clustering weight $w$ can be selected as any vector that satisfies Assumption \ref{assw}. When $A$ is Hurwitz, a simple choice of $w$ can be the vector of all ones. 

We conclude this section by summarizing the total numerical complexity for our design based on Algorithm \ref{algk}. The chain of approximations involved in this design till the output of Algorithm \ref{algk} follows the sequence of equations: (\ref{bieq}), (\ref{rboundn}), (\ref{xi}), (\ref{xik}) and (\ref{wkm}). The total complexity amounts to $\mathcal{O}(n\kappa^{2}) + \mathcal{O}(n\kappa rk) + \mathcal{O}(n^{2}r) + \mathcal{O}(r^{3})$, which includes construction of the objective function for (\ref{xik}), execution of Algorithm \ref{algk}, computation of reduced-order matrices triple $(\tilde{A},\tilde{Q},\tilde{G})$, and solving the reduced-order LQR (\ref{LQRreduced}), respectively. This complexity can be at most $\mathcal{O}(n^{2}r)$ if $\kappa \leq r$, which is more tractable compared to the $\mathcal{O}(n^{3})$ complexity of full-order LQR, especially when $r$ and $\kappa$ are sufficiently small.

\begin{algorithm}
    \SetKwInOut{Input}{Input}
    \SetKwInOut{Output}{Output}
    \Input{$\Phi^{\frac{1}{2}}_{\kappa}$, $w$ and $r$}
 {\bf{Initialization}:} Assign $r$ random rows from $\Psi  = W^{-1}\Phi^{\frac{1}{2}}_{\kappa} = [\psi_{1},...,\psi_{n}]^{T}$ as the initial centroids $c_{1}^{0},...,c_{r}^{0}$\;
 Find initial clustering sets $\mathcal{I}^{0} = \{ j \to \mathcal{I}_{i}^{0} \ | \ \underset{i=1,...,r} {\mathrm{argmin}}\ w^{2}_{j} \|\psi_{j}-c_{i}^{0}\|_{2}^{2},\ j=1,...,n \}$\;
 Update the centroids: $c_{i}^{0}= \frac{\sum_{j \in \mathcal{I}_{i}^{0}} w^{2}_{j}\psi_{j}}{\sum_{j \in \mathcal{I}_{i}^{0}} w^{2}_{j}}$, $i=1,...,r$ \;
 $k=1$\;
 \While{$\mathcal{I}^{k-1}\neq \mathcal{I}^{k}$ or within maximum iterations}{
  Update clustering sets $\mathcal{I}^{k} = \{ j \to \mathcal{I}_{i}^{k} \ | \ \underset{i=1,...,r} {\mathrm{argmin}}\ w^{2}_{j}\|\psi_{j}-c_{i}^{k-1}\|_{2}^{2},\ j=1,...,n \}$\;
  Update the centroids: $c_{i}^{k}= \frac{\sum_{j \in \mathcal{I}_{i}^{k}} w^{2}_{j}\psi_{j}}{\sum_{j \in \mathcal{I}_{i}^{k}} w^{2}_{j}}$, $i=1,...,r$ \;
  $k=k+1$ \;
 }
 \Output{$\mathcal{I} = \mathcal{I}^{k}$}
 \caption{$\mathcal{H}_{2}$ closed-loop clustering} \label{algk}
\end{algorithm}

\section{Design \RNum{2}: Weight Design}

\begin{figure*}
    \centering
    \begin{subfigure}[t]{0.3\textwidth}
    \centering
        \includegraphics[width=0.8\columnwidth]{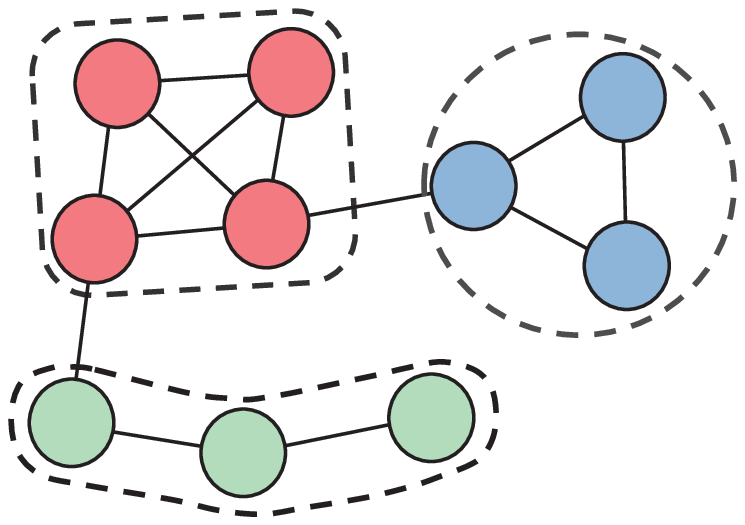}
        \caption{Spatial groups partitioned by the closeness in terms of geometric distances}
        \label{cw1}
    \end{subfigure}
    ~        
    \begin{subfigure}[t]{0.3\textwidth}
    \centering
        \includegraphics[width=0.8\columnwidth]{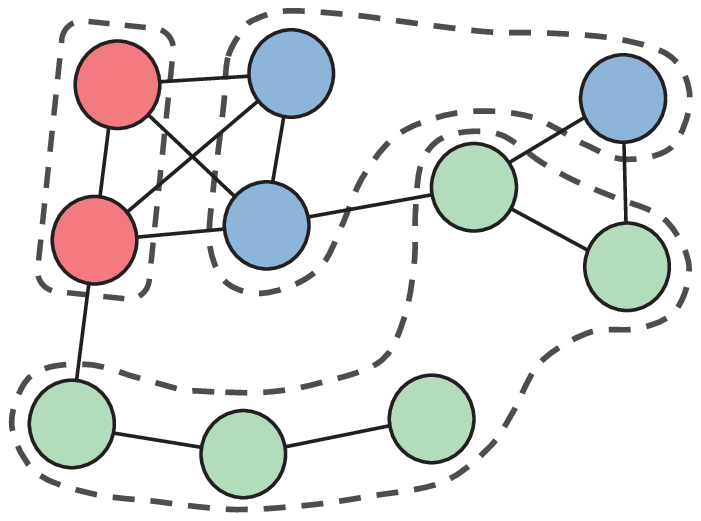}
        \caption{Node clusters resulting from cluster design as in Section \RNum{4}}
        \label{cw2}
    \end{subfigure}
    ~        
    \begin{subfigure}[t]{0.3\textwidth}
    \centering
        \includegraphics[width=0.8\columnwidth]{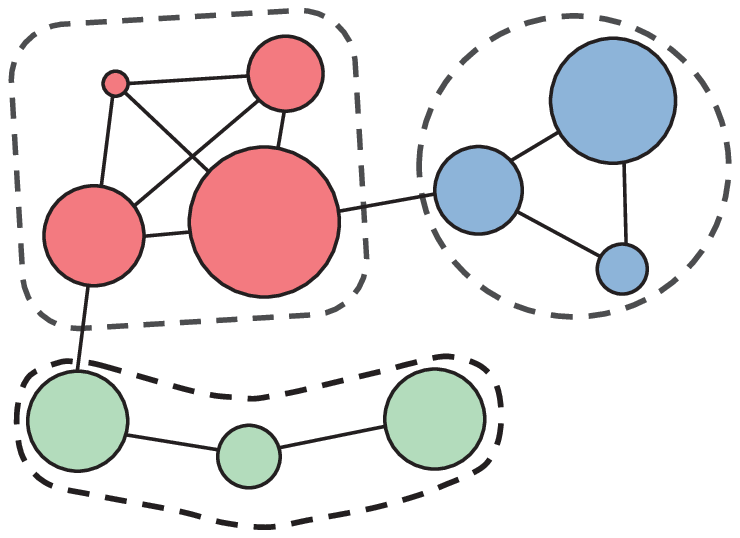}
        \caption{Imposing projection weights while fixing the clustering sets to the spatial clusters}
        \label{cw3}
    \end{subfigure}
    \caption{Illustration of cluster design and weight design.}\label{cw}
    \vspace{-1em}
\end{figure*}

We next state a variant of our proposed controller where we solve (\ref{xik}) by fixing the clustering set $\mathcal{I}$ and varying the projection weights in $w$ instead. This type of a controller may be needed when a subsystem prefers to have all of its states assigned to the same cluster. Multiple subsystems in the network may also exhibit spatial clustering based on their geographical proximities, in which case they may prefer to have a dedicated coordinator for themselves. This scenario commonly arises in power system networks. Utility companies always prefer to send the state information of their generators to only their own local control centers instead of sharing that information with any other company. Therefore, in this section we develop a new set of algorithms where we fix $\mathcal{I}$ to represent the identities of the desired clusters, and minimize $\xi_{\kappa}$ in (\ref{xik}) over $w$. A visual interpretation of this approach and its comparison to the design in Section \RNum{4} are shown in Fig. \ref{cw}. Note that the optimal values of $w$ so obtained denote the relative {\it importance} of the network nodes in the closed-loop system with the projected controller. In Fig. \ref{cw3} we show this relative importance by shrinking or expanding the size of the nodes. We describe the design for two cases depending on whether (\ref{full}) is stable or not.

\subsection{Case \RNum{1}: Stable Open-Loop}

We consider the same optimization objective as in (\ref{xik}) but now minimize it with respect to $w$ as
\begin{equation}
\begin{aligned}
& \underset{w}{\mathrm{minimize}}
& &  \xi_{\kappa} = \| \Phi_{\kappa}^{\frac{1}{2}}-P^{T}P\Phi_{\kappa}^{\frac{1}{2}}\|_{F}.
\end{aligned}
\label{xiw}
\end{equation}
To solve (\ref{xiw}), we consider a binary projection matrix $\hat{P}$ as
\begin{align*}
\hat{P}(i,j) := \begin{cases} 1 & \quad  j \in \mathcal{I}_{i} \\ 0 & \quad \text{otherwise}\\ \end{cases}.
\end{align*}
As can be verified from Definition \ref{Pd}, $\hat{P}$ here also satisfies $P=\hat{P}\hat{W}$ with $\hat{W}=diag(\hat{w})$, where this $\hat{w}$ is defined by
\begin{align}
\hat{w}_{j} = \frac{w_{j}}{\| w_{\mathcal{I}_{i}} \|_{2}},\ j \in \mathcal{I}_{i},
\label{wbar}
\end{align}
such that $\hat{w}_{\mathcal{I}_{i}}^{T}\hat{w}_{\mathcal{I}_{i}}=1$, for $i=1,...,r$. Using these notations, we can rewrite the objective function in (\ref{xiw}) as
\begin{align*}
\nonumber
\xi^{2}_{\kappa} & = tr(\Phi_{\kappa} - P^{T}P\Phi_{\kappa}) = tr(\Phi_{\kappa}) - tr(\hat{W}\hat{P}^{T}\hat{P}\hat{W}\Phi_{\kappa}) \\
& = tr(\Phi_{\kappa}) - \hat{w}^{T}(\hat{P}^{T}\hat{P}\circ \Phi_{\kappa})\hat{w}.
\end{align*}
Since $tr(\Phi_{\kappa})$ is a constant number, an equivalent form of (\ref{xiw}) follows as
\begin{equation}
\begin{aligned}
& \underset{\hat{w}}{\mathrm{maximize}}
& & \hat{w}^{T}(\hat{P}^{T}\hat{P}\circ \Phi_{\kappa})\hat{w} \\
& \mathrm{subject\ to}
& & \| \hat{w}_{\mathcal{I}_{i}}\|_{2}=1,\ i=1,...,r.
\end{aligned}
\label{op1}
\end{equation}
The Hadamard product $\hat{P}^{T}\hat{P}\circ \Phi_{\kappa}$ preserves the structure from $\hat{P}^{T}\hat{P}$, or equivalently from the clustering set $\mathcal{I}$ in the objective function. As a result, the optimization problem (\ref{op1}) boils down to $r$ decoupled optimizations
\begin{equation}
\label{op2} \tag{opt1}
\begin{aligned}
& \underset{\hat{w}_{\mathcal{I}_{i}}}{\mathrm{maximize}}
& & \hat{w}_{\mathcal{I}_{i}}^{T}\Phi_{\kappa[\mathcal{I}_{i},\mathcal{I}_{i}]}\hat{w}_{\mathcal{I}_{i}} \\
& \mathrm{subject\ to}
& & \| \hat{w}_{\mathcal{I}_{i}}\|_{2}=1,
\end{aligned}
\end{equation}
for $i=1,...,r$, where $\Phi_{\kappa[\mathcal{I}_{i},\mathcal{I}_{i}]}$ denotes the submatrix of $\Phi$ corresponding to the indices in $\mathcal{I}_{i}$. This decoupling can be illustrated by the same example we used before. 
\begin{example}
(Continued) Consider the sets $\mathcal{I}_{1}$, $\mathcal{I}_{2}$ and $\mathcal{I}_{3}$, and the matrix
\begin{align*}
\hat{P} = \begin{bmatrix}
1 & 1 & 0 & 0 & 0 & 0 & 0 & 0 & 0 & 0 \\
0 & 0 & 1 & 1 & 1 & 0 & 0 & 0 & 0 & 0\\
0 & 0 & 0 & 0 & 0 & 1 & 1 & 1 & 1 & 1
\end{bmatrix}.
\end{align*}
The objective function in (\ref{op1}) can be block diagonalized as %
\begin{small}
\begin{align*}
& \ \hat{w}^{T}(\hat{P}^{T}\hat{P} {\circ} \Phi_{\kappa})\hat{w} {=} \hat{w}^{T} \begin{bmatrix}
\Phi_{\kappa [1:2,1:2]} & 0 & 0 \\
0 & \Phi_{\kappa [3:5,3:5]} & 0 \\
0 & 0 & \Phi_{\kappa [6:10,6:10]} 
\end{bmatrix} \hat{w} {=} \\
& \begin{bmatrix}
\hat{w}^{T}_{1:2}\Phi_{\kappa [1:2,1:2]}\hat{w}_{1:2} & 0 & 0 \\
0 & \hat{w}^{T}_{3:5}\Phi_{\kappa [3:5,3:5]}\hat{w}_{3:5} & 0 \\
0 & 0 & \hat{w}^{T}_{6:10}\Phi_{\kappa [6:10,6:10]}\hat{w}_{6:10}
\end{bmatrix}.
\end{align*}
\end{small} %
\end{example}

Since Assumption \ref{assw} holds trivial for this case, we state the following theorem. 
\begin{theorem}
The global optimum for (\ref{op2}) is obtained at $\hat{w}_{\mathcal{I}_{i}} = \bar{v}(\Phi_{\kappa [\mathcal{I}_{i},\mathcal{I}_{i}}])$, for $i=1,..,r$.
\label{opsol}
\end{theorem}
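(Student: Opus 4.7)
My plan is to recognize each of the $r$ decoupled problems in (\ref{op2}) as a standard constrained Rayleigh quotient maximization and apply the Rayleigh--Ritz theorem. The first step is to verify that the matrix $\Phi_{\kappa[\mathcal{I}_i,\mathcal{I}_i]}$ is symmetric (in fact positive semi-definite), which I expect to follow immediately from Definition \ref{Phik}: $\Phi_\kappa = \Phi_\kappa^{1/2}\Phi_\kappa^{T/2}$ is a symmetric PSD matrix by construction, and any principal submatrix of a symmetric PSD matrix inherits both properties. This step is routine but worth stating explicitly, since the Rayleigh--Ritz characterization of the maximum requires symmetry.

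Once symmetry is in hand, the plan is to invoke the Courant--Fischer / Rayleigh--Ritz theorem: for a symmetric matrix $M \in \mathbb{R}^{k \times k}$, one has $\max_{\|x\|_2 = 1} x^T M x = \bar{\lambda}(M)$, and this maximum is attained precisely at any unit eigenvector associated with $\bar{\lambda}(M)$. Applied to $M = \Phi_{\kappa[\mathcal{I}_i,\mathcal{I}_i]}$ and $x = \hat{w}_{\mathcal{I}_i}$, this gives the optimal value as the largest eigenvalue of $\Phi_{\kappa[\mathcal{I}_i,\mathcal{I}_i]}$, attained at $\hat{w}_{\mathcal{I}_i} = \bar{v}(\Phi_{\kappa[\mathcal{I}_i,\mathcal{I}_i]})$, which is exactly the claim. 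Since the constraint set $\{\hat{w}_{\mathcal{I}_i} : \|\hat{w}_{\mathcal{I}_i}\|_2 = 1\}$ is the unit sphere in $\mathbb{R}^{|\mathcal{I}_i|_c}$, which is compact, existence of the maximizer is automatic.

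Finally, because the $r$ problems in (\ref{op2}) are fully decoupled across the clusters $\mathcal{I}_1,\ldots,\mathcal{I}_r$ (as established by the block-diagonalization illustrated in the continued Example~1), solving each one independently yields the global optimum of the coupled problem (\ref{op1}) in (\ref{xiw}), and the recovery of $w$ from $\hat{w}$ via (\ref{wbar}) is well-defined cluster-wise. Strictly speaking there is a sign ambiguity --- $\pm \bar{v}(\Phi_{\kappa[\mathcal{I}_i,\mathcal{I}_i]})$ both achieve the maximum --- but since $P$ in (\ref{eqp}) depends on $w$ only through the normalized ratios $w_j / \|w_{\mathcal{I}_i}\|_2$ within each cluster, this ambiguity does not affect the resulting projection $P$.

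I do not anticipate a real obstacle here; the work is essentially bookkeeping. The only point requiring minor care is confirming that Assumption \ref{assw} plays no active role in this case --- which the paragraph preceding the theorem already notes holds trivially for stable $A$ --- so that the optimizer is unconstrained beyond the unit-norm condition and the pure Rayleigh--Ritz argument applies without modification.
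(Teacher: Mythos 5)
Your proposal is correct and follows essentially the same route as the paper: both reduce each decoupled subproblem to a Rayleigh quotient maximization over the unit sphere for the symmetric positive-semidefinite matrix $\Phi_{\kappa[\mathcal{I}_{i},\mathcal{I}_{i}]}$ and conclude via the Rayleigh--Ritz characterization that the maximum is attained at the dominant eigenvector. The only cosmetic difference is that the paper establishes positive semidefiniteness through the Schur product theorem applied to $\hat{P}^{T}\hat{P}\circ\Phi_{\kappa}$, whereas you use the (equally valid) fact that principal submatrices of a PSD matrix are PSD.
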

\begin{proof}
Given $\hat{P}^{T}\hat{P} \succeq 0$ and $\Phi \succeq 0$, $\hat{P}^{T}\hat{P}\circ \Phi_{\kappa}$ is positive-semidefinite according to the Schur product theorem. Also, the objective function in (\ref{op2}) is a standard Rayleigh quotient for symmetric eigenvalue problem. Therefore, the maximum of objective function in (\ref{op2}) is obtained at the largest singular value of $\Phi_{\kappa [\mathcal{I}_{i},\mathcal{I}_{i}]}$. Since $\Phi_{\kappa [\mathcal{I}_{i},\mathcal{I}_{i}]} \succeq 0$, its largest singular value is the same as its largest eigenvalue $\bar{\lambda}(\Phi_{\kappa [\mathcal{I}_{i},\mathcal{I}_{i}]})$, and hence the optimum is obtained at its dominant eigenvector $\bar{v}(\Phi_{\kappa [\mathcal{I}_{i},\mathcal{I}_{i}]})$.
\end{proof}

Given that $\Phi_{\kappa [\mathcal{I}_{i},\mathcal{I}_{i}]}$ is symmetric, the eigenvector of its largest eigenvalue can be computed efficiently by Krylov subspace-based techniques, e.g. Lanczos algorithm \cite{golub} in this case. Solving (\ref{op2}) for $i=1,...,r$, therefore, requires a worst case complexity of $\mathcal{O}(n)$ in total.  Once $\hat{w}$ is solved from Theorem \ref{opsol}, one can then choose any $w$ that satisfies (\ref{wbar}), and this $w$ would serve as global optimum for (\ref{xiw}). 

\subsection{Case \RNum{2}: Unstable Open-Loop}

For an unstable open-loop system (\ref{full}), $w$ solved from Theorem \ref{opsol} is not guaranteed to conform to Assumption \ref{assw}, as a result of which $\hat{g}(s)$ can become unstable. To avoid such a hazardous situation, we add an extra penalty on (\ref{op2}) to restrict $w$ under Assumption \ref{assw}. This penalty term is formulated as follows. 

Recall the reduced-order ARE (\ref{reare}). By pre- and post-multiplying it with $P^{T}$ and $P$, and after a few calculations, we get $A^{T}\hat{X} + \hat{X}A + \hat{Q} - \hat{X}G\hat{X} = 0$, with
\begin{align}
\hat{Q} := P^{T}\tilde{Q}P - (U^{T}UA^{T}\hat{X} + \hat{X}AU^{T}U), \label{Qhat}
\end{align}
where $U$ is the complement of $P$. The ARE above implies that our proposed controller $\hat{K} = R^{-1}B^{T}\hat{X}$ is equivalent to an LQR problem $\mathrm{min}\ \int_{0}^{\infty} ( x^{T}\hat{Q}x+u^{T}Ru ) \mathrm{d}t$. Denote the eigenvalue decomposition of $A$ as
\begin{align}
A\begin{bmatrix}
V & \bar{V}
\end{bmatrix} = \begin{bmatrix}
V & \bar{V}
\end{bmatrix} \begin{bmatrix}
\Lambda & \\
 & \bar{\Lambda}
\end{bmatrix}, \label{vbar}
\end{align}
where $\bar{\Lambda} \succeq 0$ contains all the unstable eigenvalues of $A$. Applying the transformation $z = [V \ \ \bar{V}]^{-1}x$, one can write
\begin{align*}
x^{T}\hat{Q}x = z^{T}\begin{bmatrix}
V^{T}\hat{Q}V & V^{T}\hat{Q}\bar{V} \\
\bar{V}^{T}\hat{Q}V & \bar{V}^{T}\hat{Q}\bar{V}
\end{bmatrix}z.
\end{align*} 
The matrix $\bar{V}^{T}\hat{Q}\bar{V}$ corresponds to the LQR weight for the unstable dynamics of the state trajectories. Therefore, to assure that the unstable modes of $A$ are indeed penalized by $\hat{Q}$, we consider finding $w$ that maximizes $tr(\bar{V}^{T}\hat{Q}\bar{V})$. Given that the matrix $U^{T}UA^{T}\hat{X} + \hat{X}AU^{T}U$ in (\ref{Qhat}) is bounded but not sign-definite, we consider maximizing $tr(\bar{V}^{T}\hat{Q}\bar{V})$ by  simply maximizing its first part as
\begin{equation}
\begin{aligned}
& \underset{\hat{w}}{\mathrm{maximize}}
& & f_{\bar{V}} = tr(\bar{V}^{T}P^{T}\tilde{Q}P\bar{V}) \\
& \mathrm{subject\ to}
& & \| \hat{w}_{\mathcal{I}_{i}}\|_{2}=1,\ i=1,...,r.
\end{aligned}
\label{opt2} \tag{opt2}
\end{equation}
Combining (\ref{opt2}) with the main optimization (\ref{op2}), one can formulate the weight design for the unstable case as
\begin{equation}
\nonumber
\begin{aligned}
& \underset{\hat{w}}{\mathrm{maximize}}
& & [\hat{w}^{T}(\hat{P}^{T}\hat{P}\circ \Phi_{\kappa})\hat{w}]^{2} + \rho \cdot f_{\bar{V}} \\
& \mathrm{subject\ to}
& & \| \hat{w}_{\mathcal{I}_{i}}\|_{2}=1,\ i=1,...,r,
\end{aligned}
\label{opta} \tag{opt3}
\end{equation}
where $\rho >0$ is a penalty factor. In (\ref{opta}), the objective function from (\ref{op2}) is squared to match with the order of $f_{\bar{V}}$. This optimization problem is in the form of a fourth-order sum of squares (SOS) over $r>1$ sphere constraints, for which finding even a local optimal is very difficult. One way to bypass this can be to approximate matrix $P^{T}\tilde{Q}P$ in $f_{\bar{V}}$ as $P^{T}\tilde{Q}P \circ \hat{P}^{T}\hat{P}$, meaning only to retain the block-diagonal component of $Q$ only. In this way, (\ref{opta}) is reduced to $r$ SOS sub-problems with one sphere constraint for each as,
\begin{equation}
\begin{aligned}
& \underset{\hat{w}}{\mathrm{maximize}}
& &  (\hat{w}_{\mathcal{I}_{i}} \otimes \hat{w}_{\mathcal{I}_{i}})^{T} (\Phi_{\kappa [\mathcal{I}_{i},\mathcal{I}_{i}]} \otimes \Phi_{\kappa [\mathcal{I}_{i},\mathcal{I}_{i}]}  \\
& & &  + \rho Q_{\mathcal{I}_{i},\mathcal{I}_{i}}\otimes \bar{V}_{\mathcal{I}_{i},:}\bar{V}_{\mathcal{I}_{i},:}^{T})(\hat{w}_{\mathcal{I}_{i}} \otimes \hat{w}_{\mathcal{I}_{i}}) \\
& \mathrm{subject\ to}
& & \| \hat{w}_{\mathcal{I}_{i}}\|_{2}=1,
\end{aligned}
\label{optf} \tag{opt*} 
\end{equation}
for $i=1,...,r$.
While this approximation will follow naturally if $Q$ is block-diagonal, the upshot is that the closed-loop performance of the projected system may suffer if $Q$ has dominant off-block-diagonals. In practical networks, however, it is quite common to simply minimize the energy of a node itself, or the energy within a cluster, which implies that $Q$ is very commonly a diagonal or block diagonal matrix. In fact, $Q$ would indeed be preferred as block-diagonal for the scenario described in this section since network operators will always try to discourage closed-loop coupling of their own cluster nodes with other clusters. In those cases, (\ref{opta}) and (\ref{optf}) become equivalent problems, yielding the same solution. However, irrespective of whether $Q$ is block-diagonal or not, the following theorem shows that the solution of (\ref{optf}) will satisfy Assumption \ref{assw}.
\begin{theorem}
Given $\hat{w}_{\mathcal{I}_{i}}$, $i=1,...,r$ solved from (\ref{optf}) with $\rho > 0$, Assumption \ref{assw} holds for any $w$ satisfying (\ref{wbar}).
\end{theorem}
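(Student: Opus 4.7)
The plan is to expand the tensor product inside the objective of (\ref{optf}) and then argue by contradiction. Using the Kronecker identity $(a\otimes a)^{T}(A\otimes B)(a\otimes a) = (a^{T}Aa)(a^{T}Ba)$, the decoupled subproblem indexed by $i$ becomes
\begin{align*}
\max_{\|\hat{w}_{\mathcal{I}_{i}}\|_{2} = 1}\ (\hat{w}_{\mathcal{I}_{i}}^{T}\Phi_{\kappa [\mathcal{I}_{i},\mathcal{I}_{i}]}\hat{w}_{\mathcal{I}_{i}})^{2} + \rho\, (\hat{w}_{\mathcal{I}_{i}}^{T}Q_{\mathcal{I}_{i},\mathcal{I}_{i}}\hat{w}_{\mathcal{I}_{i}})\,\|\bar{V}_{\mathcal{I}_{i},:}^{T}\hat{w}_{\mathcal{I}_{i}}\|_{2}^{2}.
\end{align*}
Under (\ref{wbar}) we have $w_{\mathcal{I}_{i}} = \|w_{\mathcal{I}_{i}}\|_{2}\,\hat{w}_{\mathcal{I}_{i}}$, so verifying Assumption \ref{assw} reduces to showing $\hat{w}_{\mathcal{I}_{i}}^{T}v_{\mathcal{I}_{i}} \neq 0$ at the optimizer $\hat{w}_{\mathcal{I}_{i}}^{\ast}$ for any unstable eigenvector $v$ of $A$, which in turn (by Definition \ref{Pd}) is implied by $\bar{V}_{\mathcal{I}_{i},:}^{T}\hat{w}_{\mathcal{I}_{i}}^{\ast} \neq 0$.

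Assume for contradiction that $\bar{V}_{\mathcal{I}_{i},:}^{T}\hat{w}_{\mathcal{I}_{i}}^{\ast} = 0$ for some $i$. Then the $\rho$-term and its gradient vanish at $\hat{w}_{\mathcal{I}_{i}}^{\ast}$, so the KKT stationarity condition collapses to
\begin{align*}
2(\hat{w}_{\mathcal{I}_{i}}^{\ast T}\Phi_{\kappa [\mathcal{I}_{i},\mathcal{I}_{i}]}\hat{w}_{\mathcal{I}_{i}}^{\ast})\,\Phi_{\kappa [\mathcal{I}_{i},\mathcal{I}_{i}]}\hat{w}_{\mathcal{I}_{i}}^{\ast} = \mu\,\hat{w}_{\mathcal{I}_{i}}^{\ast},
\end{align*}
forcing $\hat{w}_{\mathcal{I}_{i}}^{\ast}$ to be an eigenvector of $\Phi_{\kappa [\mathcal{I}_{i},\mathcal{I}_{i}]}$; global optimality of the first term then pins it down to the dominant eigenvector $\bar{v}(\Phi_{\kappa [\mathcal{I}_{i},\mathcal{I}_{i}]})$, attaining the value $\bar{\lambda}(\Phi_{\kappa [\mathcal{I}_{i},\mathcal{I}_{i}]})^{2}$.

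To close the contradiction, I would test a perturbation $\hat{w}'_{\mathcal{I}_{i}} = \cos\theta\,\hat{w}_{\mathcal{I}_{i}}^{\ast} + \sin\theta\,u$, with $u$ a unit vector chosen in $\mathrm{range}(\bar{V}_{\mathcal{I}_{i},:})$ and orthogonal to $\hat{w}_{\mathcal{I}_{i}}^{\ast}$. Since $\Phi_{\kappa [\mathcal{I}_{i},\mathcal{I}_{i}]}\hat{w}_{\mathcal{I}_{i}}^{\ast} = \bar{\lambda}\,\hat{w}_{\mathcal{I}_{i}}^{\ast}$, the cross terms vanish, and a Taylor expansion gives
\begin{align*}
\Delta f \approx \bigl[\rho\,(\hat{w}_{\mathcal{I}_{i}}^{\ast T}Q_{\mathcal{I}_{i},\mathcal{I}_{i}}\hat{w}_{\mathcal{I}_{i}}^{\ast})(u^{T}\bar{V}_{\mathcal{I}_{i},:}\bar{V}_{\mathcal{I}_{i},:}^{T}u) - 2\bar{\lambda}\,(\bar{\lambda}-u^{T}\Phi_{\kappa [\mathcal{I}_{i},\mathcal{I}_{i}]}u)\bigr]\theta^{2}.
\end{align*}
The gain term is strictly positive whenever $\bar{V}_{\mathcal{I}_{i},:}$ has at least one nonzero row (otherwise the assumption is vacuous on $\mathcal{I}_{i}$) and $u$ is aligned with $\bar{v}(\bar{V}_{\mathcal{I}_{i},:}\bar{V}_{\mathcal{I}_{i},:}^{T})$; with $\rho>0$, this quadratic gain dominates the first-term loss, yielding $\Delta f > 0$ and contradicting optimality of $\hat{w}_{\mathcal{I}_{i}}^{\ast}$. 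Hence $\bar{V}_{\mathcal{I}_{i},:}^{T}\hat{w}_{\mathcal{I}_{i}}^{\ast}\neq 0$, and lifting through Definition \ref{Pd} and (\ref{wbar}) yields $Pv\neq 0$ for every unstable eigenvector of $A$, establishing Assumption \ref{assw}.

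The main obstacle is the balancing step: the loss in the first term is governed by the spectral gap of $\Phi_{\kappa [\mathcal{I}_{i},\mathcal{I}_{i}]}$ along $u$, whereas the gain in the penalty scales linearly with $\rho$, so a careful choice of $u$ inside $\mathrm{range}(\bar{V}_{\mathcal{I}_{i},:})$ — essentially the direction that simultaneously maximizes $u^{T}\bar{V}_{\mathcal{I}_{i},:}\bar{V}_{\mathcal{I}_{i},:}^{T}u$ and keeps $u^{T}\Phi_{\kappa [\mathcal{I}_{i},\mathcal{I}_{i}]}u$ as close to $\bar{\lambda}$ as possible — is where the argument needs the most care. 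Exploiting the fact that $\Phi_{\kappa [\mathcal{I}_{i},\mathcal{I}_{i}]}\succ 0$ and $Q_{\mathcal{I}_{i},\mathcal{I}_{i}}\succeq 0$ keeps all quadratic forms well-posed, and the block-diagonal separation across $i$ provided by the Hadamard structure $\hat{P}^{T}\hat{P}\circ(\cdot)$ ensures that the perturbation inside one cluster does not disturb the others, so the per-cluster contradiction suffices.
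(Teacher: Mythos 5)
Your reduction of the objective of (\ref{optf}) to $(\hat{w}^{T}\Phi_{\kappa[\mathcal{I}_{i},\mathcal{I}_{i}]}\hat{w})^{2}+\rho\,(\hat{w}^{T}Q_{\mathcal{I}_{i},\mathcal{I}_{i}}\hat{w})\,\|\bar{V}_{\mathcal{I}_{i},:}^{T}\hat{w}\|_{2}^{2}$ is correct, and so is the observation that under the contradiction hypothesis $\bar{V}_{\mathcal{I}_{i},:}^{T}\hat{w}^{\ast}=0$ the stationarity condition pins $\hat{w}^{\ast}$ to $\bar{v}(\Phi_{\kappa[\mathcal{I}_{i},\mathcal{I}_{i}]})$. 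The gap is precisely the balancing step you flag at the end: your own expansion shows the gain and the loss are \emph{both} of order $\theta^{2}$, with coefficients $\rho\,(\hat{w}^{\ast T}Q_{\mathcal{I}_{i},\mathcal{I}_{i}}\hat{w}^{\ast})(u^{T}\bar{V}_{\mathcal{I}_{i},:}\bar{V}_{\mathcal{I}_{i},:}^{T}u)$ and $2\bar{\lambda}(\bar{\lambda}-u^{T}\Phi_{\kappa[\mathcal{I}_{i},\mathcal{I}_{i}]}u)$, so ``$\rho>0$'' alone cannot give $\Delta f>0$. Since $u\perp\hat{w}^{\ast}$ and $\hat{w}^{\ast}$ spans the dominant eigenspace, the loss coefficient is at least $2\bar{\lambda}$ times the spectral gap of $\Phi_{\kappa[\mathcal{I}_{i},\mathcal{I}_{i}]}$, which is strictly positive when the top eigenvalue is simple, while the gain coefficient shrinks to zero with $\rho$ (and vanishes identically if $\hat{w}^{\ast}\in ker(Q_{\mathcal{I}_{i},\mathcal{I}_{i}})$, which $Q\succeq 0$ does not exclude). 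A two-dimensional instance shows the perturbation cannot be rescued by any choice of $u$: with $\Phi_{\kappa[\mathcal{I}_{i},\mathcal{I}_{i}]}=diag(2,1)$, $Q_{\mathcal{I}_{i},\mathcal{I}_{i}}=I_{2}$ and $\bar{V}_{\mathcal{I}_{i},:}=[0\ \ 1]^{T}$, the objective along $\hat{w}=[\cos\phi\ \ \sin\phi]^{T}$ is $(2-\sin^{2}\phi)^{2}+\rho\sin^{2}\phi$, whose global maximum for every $\rho<3$ sits at $\phi=0$, i.e.\ exactly at a maximizer with $\bar{V}_{\mathcal{I}_{i},:}^{T}\hat{w}^{\ast}=0$. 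Your argument therefore needs a quantitative lower bound on $\rho$ (in terms of the spectral gap of $\Phi_{\kappa[\mathcal{I}_{i},\mathcal{I}_{i}]}$ and $\underline{\sigma}(Q_{\mathcal{I}_{i},\mathcal{I}_{i}})$), or an assumption that the dominant eigenspace of $\Phi_{\kappa[\mathcal{I}_{i},\mathcal{I}_{i}]}$ meets $range(\bar{V}_{\mathcal{I}_{i},:})$; as written the final implication ``$\rho>0\Rightarrow\Delta f>0$'' is false.

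For context, the paper's own proof is a one-line argument: it rewrites the penalty as $\rho\sum_{j=1}^{n_{v}}(\bar{V}_{\mathcal{I}_{i},j}^{T}\hat{w}_{\mathcal{I}_{i}})^{2}(\hat{w}_{\mathcal{I}_{i}}^{T}Q_{\mathcal{I}_{i},\mathcal{I}_{i}}\hat{w}_{\mathcal{I}_{i}})$ and asserts that at the maximizer each factor $(\bar{V}_{\mathcal{I}_{i},j}^{T}\hat{w}_{\mathcal{I}_{i}})^{2}$ ``has to be maximized away from zero.'' That assertion elides exactly the trade-off you tried to quantify --- maximizing $g+\rho h$ with $h\geq 0$ does not force $h>0$ at the optimum --- so your perturbation route is the more careful one and correctly exposes where the statement is fragile; it just does not (and, per the example above, cannot) close the argument for arbitrary $\rho>0$.
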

\begin{proof}
Note that $\bar{V}_{:,j}$, $j=1,...,n_{v}$ represents the eigenvector of the $j^{th}$ unstable eigenvalue. The second part of the objective function in (\ref{optf}) can be rewritten as $\rho \sum_{j=1}^{n_{v}} (\bar{V}^{T}_{\mathcal{I}_{i},j}\hat{w}_{\mathcal{I}_{i}})^{2}(\hat{w}_{\mathcal{I}_{i}}^{T}Q_{\mathcal{I}_{i},\mathcal{I}_{i}}\hat{w}_{\mathcal{I}_{i}})$. Once (\ref{optf}) is solved, each individual scalar $(\bar{V}^{T}_{\mathcal{I}_{i},j}\hat{w}_{\mathcal{I}_{i}})^{2}$ has to be maximized away from $0$, meaning $\bar{V}^{T}_{\mathcal{I}_{i},j}\hat{w}_{\mathcal{I}_{i}} \neq 0$, $i=1,...,r$, $j=1,...,n_{v}$ for $A\bar{V}_{:,j} = \lambda \bar{V}_{:,j}$, $Re(\lambda) \geq 0$. This satisfies Assumption \ref{assw}.
\end{proof}

The following lemma shows the performance trade-off due to the approximation in (\ref{optf}) when $Q$ is not block-diagonal.
\begin{lemma}
Let the maximum of (\ref{opta}) be $J_{1}$. The maximum $J_{2}$ of (\ref{optf}) satisfies
\begin{align}
J_{1} - J_{e} \leq J_{2} \leq J_{1} + J_{e},
\label{apperr}
\end{align}
where $J_{e} = \underset{j,l=1,...,n}{\underset{i=1,..,r}{\mathrm{max}}} \ \sum_{k=1,\ k\neq i}^{r} \rho \| Q_{\mathcal{I}_{i}(j),\mathcal{I}_{k}}\|_{1}\| \bar{V}_{l,:} \bar{V}^{T}_{\mathcal{I}_{k},:}\|_{1}$.
\label{circle}
\end{lemma}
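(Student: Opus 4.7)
The plan is to bound $|f_{1}(\hat{w})-f_{2}(\hat{w})|$ uniformly over the feasible set by $J_{e}$, where $f_{1}$ and $f_{2}$ denote the objective functions of (\ref{opta}) and (\ref{optf}) respectively, and then invoke the standard argument that uniform closeness of objectives transfers to their maxima: if $\hat{w}_{1}^{\star}$ and $\hat{w}_{2}^{\star}$ achieve $J_{1}$ and $J_{2}$, then $J_{1}=f_{1}(\hat{w}_{1}^{\star})\le f_{2}(\hat{w}_{1}^{\star})+J_{e}\le J_{2}+J_{e}$, and symmetrically $J_{2}\le J_{1}+J_{e}$, which is precisely (\ref{apperr}).

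First I would isolate the source of the discrepancy. Because $\hat{P}^{T}\hat{P}$ has a block-diagonal pattern, the quartic term $[\hat{w}^{T}(\hat{P}^{T}\hat{P}\circ\Phi_{\kappa})\hat{w}]^{2}$ in (\ref{opta}) splits as $(\sum_{i}b_{i})^{2}$ with $b_{i}=\hat{w}_{\mathcal{I}_{i}}^{T}\Phi_{\kappa[\mathcal{I}_{i},\mathcal{I}_{i}]}\hat{w}_{\mathcal{I}_{i}}$, while the per-cluster Kronecker counterpart in (\ref{optf}) equals $\sum_{i}b_{i}^{2}$. The principal gap that $J_{e}$ is meant to capture lives in the $\rho$-penalty: (\ref{opta}) uses $\rho\,\mathrm{tr}(\bar{V}^{T}P^{T}\tilde{Q}P\bar{V})$ while (\ref{optf}) uses the Hadamard-restricted version $\rho\,\mathrm{tr}(\bar{V}^{T}[P^{T}\tilde{Q}P\circ\hat{P}^{T}\hat{P}]\bar{V})$, so the discarded contribution is $\rho\,\mathrm{tr}(\bar{V}^{T}[P^{T}\tilde{Q}P\circ(\mathbf{1}\mathbf{1}^{T}-\hat{P}^{T}\hat{P})]\bar{V})$. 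Using $P=\hat{P}\hat{W}$ and $\|\hat{w}_{\mathcal{I}_{i}}\|_{2}=1$, I would expand this into a sum over cluster pairs $i\neq k$ of terms proportional to $\rho\,\hat{w}_{\mathcal{I}_{i}}^{T}Q_{\mathcal{I}_{i},\mathcal{I}_{k}}\hat{w}_{\mathcal{I}_{k}}$ contracted with $\bar{V}_{\mathcal{I}_{i},:}\bar{V}_{\mathcal{I}_{k},:}^{T}$, and then bound each summand entry-by-entry. Pulling out one entry of $\hat{w}$ (legitimate since the unit-norm constraint forces $|\hat{w}_{\mathcal{I}_{i}}(j)|\le 1$) and dominating the remaining row/column contractions by their $\ell_{1}$-norms produces quantities of the form $\rho\|Q_{\mathcal{I}_{i}(j),\mathcal{I}_{k}}\|_{1}\|\bar{V}_{l,:}\bar{V}^{T}_{\mathcal{I}_{k},:}\|_{1}$; summing over $k\neq i$ and maximizing over the free indices $(i,j,l)$ reproduces $J_{e}$ exactly.

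The main obstacle will be the bookkeeping in this expansion, in particular reconciling the quartic cross-contribution $(\sum_{i}b_{i})^{2}-\sum_{i}b_{i}^{2}=2\sum_{i<j}b_{i}b_{j}$, which is not overtly reflected in the stated form of $J_{e}$. One must either absorb it into the same entry-wise $\ell_{1}$ bound using $|b_{i}|\le\bar{\sigma}(\Phi_{\kappa[\mathcal{I}_{i},\mathcal{I}_{i}]})$, or invoke the block-diagonal-$Q$ regime discussed immediately before the lemma so that the quartic cross-terms are dominated by, or even absent from, the $\rho$-penalty mismatch. Extracting the precise max-sum form of $J_{e}$, as opposed to a cruder Frobenius or spectral bound, is where the most delicate entry-level manipulation occurs; the rest of the argument is the routine max-versus-max transfer outlined above.
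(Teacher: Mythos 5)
Your proposal follows essentially the same route as the paper's proof: the paper expands $f_{\bar{V}}=(w\ast w)^{T}(Q\ast\bar{V}\bar{V}^{T})(w\ast w)$ via a Khatri--Rao product, splits $Q=Q_{d}+Q_{o}$ into block-diagonal and off-block-diagonal parts, and bounds the resulting zero-diagonal quadratic form by the Gershgorin circle theorem, whose radius is precisely the maximum row $\ell_{1}$ sum that your entry-by-entry domination produces, with the max-versus-max transfer left implicit. The caveat you flag --- that the quartic cross-contribution $(\sum_{i}b_{i})^{2}-\sum_{i}b_{i}^{2}=2\sum_{i<j}b_{i}b_{j}$, with $b_{i}=\hat{w}_{\mathcal{I}_{i}}^{T}\Phi_{\kappa[\mathcal{I}_{i},\mathcal{I}_{i}]}\hat{w}_{\mathcal{I}_{i}}$, is not visibly absorbed into $J_{e}$ --- is a legitimate one, but the paper's own proof silently identifies the difference of the two objectives with the $\rho Q_{o}$ term alone, so your treatment is no less complete than the original.
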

\begin{proof}
The proof follows directly from the Gershgorin circle theorem \cite{golub}, and is shown in the Appendix.
\end{proof}

We next present the solution for (\ref{optf}). Since (\ref{optf}) is a set of $r$ decoupled problems, we illustrate the solution for just one cluster $\mathcal{I}=\{ \mathcal{I}_{1} \}$. This will also allow us to drop the subscripts in all the variables used in (\ref{optf}), making the notations easier to follow. We define a fourth-order tensor $\mathcal{F} \in \mathbb{R}^{n\times n\times n\times n}$ as
\begin{align}
\mathcal{F}_{i,j,k,l} = \Phi_{\kappa [i,j]}\Phi_{\kappa [k,l]} + \rho Q_{i,j}S_{k,l},\ i,j,k,l{=}1,...,n. \label{tensord1}
\end{align}
where $S$ denotes the product matrix $S = \bar{V}\bar{V}^{T}$. After a few manipulations, it can be shown that (\ref{optf}) is equivalent to the following problem
\begin{equation}
\begin{aligned}
& \underset{\hat{w}}{\mathrm{maximize}}
& & \mathcal{F} \odot (\hat{w}\otimes \hat{w} \otimes \hat{w} \otimes \hat{w}) \\
& \mathrm{subject\ to}
& & \| \hat{w}\|_{2}=1,
\end{aligned}
\label{opt3} \tag{opt**}
\end{equation}
where $\odot$ denotes the element-wise product. It has been studied in \cite{tensor1} that such a polynomial optimization is equivalent to finding the largest $Z$-eigenvalue of $\mathcal{F}$, if $\mathcal{F}$ is super-symmetric. From the definition in \cite{tensor1}, a super-symmetric tensor is one whose entries are invariant to any permutation to the index, i.e. $\mathcal{F}_{i,j,k,l}=...=\mathcal{F}_{l,k,j,i}$, which fails for (\ref{tensord1}) as $\mathcal{F}_{i,j,k,l}\neq \mathcal{F}_{i,k,j,l}$. However, note that although $\mathcal{F}$ is not super symmetric, $\mathcal{F} \odot (\hat{w}\otimes \hat{w} \otimes \hat{w} \otimes \hat{w})$ is a one-dimensional polynomial which is invariant to any index permutations.\footnote{This is analogous to an unsymmetric matrix whose quadratic form is invariant to the transpose operation, i.e. $z^{T}\frac{F^{T}+F}{2}z = z^{T}Fz = z^{T}F^{T}z \in \mathbb{R}$.} Following this logic, we rewrite the objective function in (\ref{opt3}) as follows.
\begin{proposition}
Given the fourth-order tensor $\mathcal{F}$ specified by (\ref{tensord1}), the polynomial $\mathcal{F} \odot (\hat{w}\otimes \hat{w} \otimes \hat{w} \otimes \hat{w})$ is identical to
\begin{align*}
\mathcal{F}^{s} \odot (\hat{w}\otimes \hat{w} \otimes \hat{w} \otimes \hat{w}) = (\hat{w}\otimes \hat{w})^{T} F^{s} (\hat{w} \otimes \hat{w}),
\end{align*}
where $\mathcal{F}^{s}$ is a super-symmetric tensor specified by
\begin{align*}
& \mathcal{F}^{s}_{i,j,k,l} = \frac{1}{3}(\Phi_{\kappa [i,j]}\Phi_{\kappa [k,l]} + \Phi_{\kappa [i,k]}\Phi_{\kappa [j,l]} + \Phi_{\kappa [i,l]}\Phi_{\kappa [j,k]}) + \\
& \frac{1}{6} \rho (Q_{i,j}S_{k,l} {+} Q_{i,k}S_{j,l} {+} Q_{i,l}S_{j,k} {+} Q_{j,k}S_{i,l} {+} Q_{j,l}S_{i,k} {+} Q_{k,l}S_{i,j})
\end{align*}
for $i,j,k,l=1,...,n$, and $F^{s} \in \mathbb{R}^{n^{2}\times n^{2}}$, the matrix unfolding of $\mathcal{F}^{s}$, can be obtained from
\begin{align*}
F^{s}_{n(i-1)+k,n(j-1)+l} = \mathcal{F}^{s}_{i,j,k,l},\quad i,j,k,l=1,...,n.
\end{align*}
\label{equivt}
\end{proposition}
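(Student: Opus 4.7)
\emph{Proof Proposal.} The plan is to exploit the fact that the rank-one tensor $\hat{w}\otimes \hat{w}\otimes \hat{w}\otimes \hat{w}$ is invariant under every permutation of its four factors. Consequently, the scalar contraction $\mathcal{F}\odot(\hat{w}\otimes\hat{w}\otimes\hat{w}\otimes\hat{w})=\sum_{i,j,k,l}\mathcal{F}_{i,j,k,l}\hat{w}_i\hat{w}_j\hat{w}_k\hat{w}_l$ depends on $\mathcal{F}$ only through its super-symmetric part. So the first step is to verify the general identity
\[
\sum_{i,j,k,l}\mathcal{F}_{i,j,k,l}\hat{w}_i\hat{w}_j\hat{w}_k\hat{w}_l=\sum_{i,j,k,l}\Bigl(\tfrac{1}{24}\sum_{\sigma\in S_4}\mathcal{F}_{\sigma(i),\sigma(j),\sigma(k),\sigma(l)}\Bigr)\hat{w}_i\hat{w}_j\hat{w}_k\hat{w}_l,
\]
which follows from relabeling summation indices under every $\sigma\in S_4$ and averaging. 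Thus it suffices to show that the inner average equals the stated $\mathcal{F}^{s}_{i,j,k,l}$.

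Next I would treat the two summands of $\mathcal{F}$ separately. For the quadratic-in-$\Phi_\kappa$ summand $\Phi_{\kappa[i,j]}\Phi_{\kappa[k,l]}$, the value depends only on the unordered partition of the four position labels $\{1,2,3,4\}$ into two unordered pairs, because $\Phi_\kappa$ is symmetric and the two factors are interchangeable. There are exactly three such partitions, so of the 24 permutations in $S_4$, each of the three pairings $\{i,j\}\!\mid\!\{k,l\}$, $\{i,k\}\!\mid\!\{j,l\}$, $\{i,l\}\!\mid\!\{j,k\}$ arises $24/3=8$ times. Dividing by 24 yields the coefficient $1/3$ in front of each of the three product terms. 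For the cross summand $\rho Q_{i,j}S_{k,l}$, the matrices $Q$ and $S$ are symmetric but distinct, so the value is determined by an ordered pair of unordered pairs (which pair goes to $Q$, which to $S$). There are $3\cdot 2=6$ such ordered pair-partitions, and each appears $24/6=4$ times in the permutation sum, giving the coefficient $1/6$ in front of the six stated terms. Adding the two contributions reproduces exactly the formula claimed for $\mathcal{F}^{s}_{i,j,k,l}$, so $\mathcal{F}\odot(\hat{w}\otimes\hat{w}\otimes\hat{w}\otimes\hat{w})=\mathcal{F}^{s}\odot(\hat{w}\otimes\hat{w}\otimes\hat{w}\otimes\hat{w})$.

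It remains to verify the matrix unfolding identity. Writing out the Kronecker product componentwise, $(\hat{w}\otimes\hat{w})_{n(i-1)+k}=\hat{w}_i\hat{w}_k$, so
\[
(\hat{w}\otimes\hat{w})^{T}F^{s}(\hat{w}\otimes\hat{w})=\sum_{i,j,k,l}F^{s}_{n(i-1)+k,\,n(j-1)+l}\hat{w}_i\hat{w}_k\hat{w}_j\hat{w}_l.
\]
Using the definition $F^{s}_{n(i-1)+k,n(j-1)+l}=\mathcal{F}^{s}_{i,j,k,l}$ and the commutativity of scalar multiplication to reorder $\hat{w}_i\hat{w}_k\hat{w}_j\hat{w}_l=\hat{w}_i\hat{w}_j\hat{w}_k\hat{w}_l$, this equals $\mathcal{F}^{s}\odot(\hat{w}\otimes\hat{w}\otimes\hat{w}\otimes\hat{w})$, completing the chain.

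The only nontrivial step is the combinatorial bookkeeping in the symmetrization, specifically recognizing that the $\Phi_\kappa\Phi_\kappa$ term enjoys an additional symmetry (swapping the two factors) that collapses the six ordered pair-partitions into three unordered ones, whereas the $QS$ term does not, because $Q\neq S$ in general. Once the orbit sizes $8$ and $4$ under the $S_4$ action are correctly identified, the rest is just bookkeeping and index relabeling in the Kronecker unfolding.
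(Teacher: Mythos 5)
Your proof is correct, and it is the same coefficient-matching argument the paper alludes to (the paper omits its proof entirely, stating only that the identity "can be easily verified by matching the coefficients of the polynomials on both sides"). Your $S_4$-averaging formulation with the orbit counts $8$ and $4$ — reflecting the extra factor-swap symmetry of the $\Phi_\kappa\Phi_\kappa$ term versus the $QS$ term — is exactly the systematic way to carry out that matching, and the Kronecker-unfolding step is handled correctly.
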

The proof is omitted as the equations above can be easily verified by matching the coefficients of the polynomials on both sides. 

In summary, the optimization problem (\ref{opt3}) can be approached by substituting $\mathcal{F}$ with a super-symmetric tensor $\mathcal{F}^{s}$. One can, thereafter, solve (\ref{opt3}) using techniques developed for $Z$-eigenvalue problems. We solve (\ref{opt3}) using the tensor power iteration method \cite{tensor3} in Algorithm \ref{algp}. The convergence properties of this algorithm can be found in \cite{tensor3}. Due to the super symmetry of $\mathcal{F}^{s}$, the worst case (only one cluster) complexity for each iteration of Algorithm \ref{algp} is $\mathcal{O}(n^{4})$. Although this computation cost is expensive, the algorithm can be easily parallelized, and is easier to implement than $\mathcal{O}(n^{3})$ full-order LQR as the memory required is only $\mathcal{O}(n^{2})$. Moreover, the value of $n$ for Algorithm \ref{algp} scales down as the number of clusters increases.  
\begin{algorithm}[h]
    \SetKwInOut{Input}{Input}
    \SetKwInOut{Output}{Output}
    \Input{$\Phi_{\kappa}$, $Q$, $\bar{V}$, $\mathcal{I}$, $\rho$ and $\delta$}
    Partition $\Phi_{\kappa [\mathcal{I}_{i},\mathcal{I}_{i}]}$, $Q_{\mathcal{I}_{i},\mathcal{I}_{i}}$ and $\bar{V}_{\mathcal{I}_{i}}$ based on $\mathcal{I}$;\\
    \For{$i=1,...,r$}{
    Find $F^{s}$ corresponding to $\mathcal{I}_{i}$ by Proposition \ref{equivt};\\
    {\bf Initialization:} Compute the dominant eigenvector of $F^{s}$ as $\bar{v}(F^{s})$, then choose the initial vector $v^{0}$ as $v^{0}=\bar{v}(unvec(\bar{v}(F^{s})))$;\\
        $k=1$;\\
   \While{$\frac{(v^{k}\otimes v^{k})^{T}F^{s}(v^{k}\otimes v^{k})}{(v^{k-1}\otimes v^{k-1})^{T}F^{s}(v^{k-1}\otimes v^{k-1})}-1 > \delta$ or within maximum iterations}{
  $v^{k}=unvec(F^{s}(v^{k-1}\otimes v^{k-1}))v^{k-1}$;\\
  $v^{k}=\frac{v^{k}}{\| v^{k}\|_{2}}$;\\
  $k=k+1$;\\
 }
 $\hat{w}_{\mathcal{I}_{i}} = v^{k}$;\\
    }
    Construct $w$ and then $P$ by (\ref{wbar});\\
    \Output{$P$}
    \caption{Power iteration for projection weight design} \label{algp}
\end{algorithm}

\subsection{Optimizing (\ref{xik}) with respect to both $\mathcal{I}$ and $w$}
The designs proposed in Section \RNum{4} and this section can be combined to optimize (\ref{xik}) as a function of both $\mathcal{I}$ and $w$ iteratively. In this case, one would start with an arbitrarily chosen $w$, and minimize $\xi_{\kappa}$ with respect to $\mathcal{I}$ using Algorithm \ref{algk}. Say, the optimal cluster set is given as $\mathcal{I}^{\ast}$. Thereafter, one would fix $\mathcal{I}$ to $\mathcal{I}^{\ast}$, and minimize $\xi$ with respect to $w$ using Theorem \ref{opsol} or Algorithm \ref{algp} depending on whether (\ref{full}) is stable, and so on. The resulting algorithm is shown in Algorithm \ref{alg}. 

\begin{algorithm}[h]
    \SetKwInOut{Input}{Input}
    \SetKwInOut{Output}{Output}
    \Input{$A$, $B$, $B_{d}$, $Q$, $R$ and $r$}
    Compute $\Phi_{\kappa}^{\frac{1}{2}}$ by Definition \ref{Phik};\\
    Choose $w^{0}=\mathbf{1}_{n}$, and compute $W^{0}=diag(w^{0})$ and the k-means input $\Psi^{0}=(W^{0})^{-1}\Phi_{\kappa}^{\frac{1}{2}}$;\\
    $k=1$;\\
         \While{$\mathcal{I}^{k-1}\neq \mathcal{I}^{k}$ or within maximum iterations}{
  Solve $\mathcal{I}^{k}$ from Algorithm \ref{algk} by $(\Psi^{k-1},w^{k-1},r)$ \;
  Update $w^{k}$ from Theorem \ref{opsol} or Algorithm \ref{algp} by $\mathcal{I}^{k}$ \;
  $k=k+1$ \;
 }
    \Output{$P$}
    \caption{Iterative algorithm for finding $P$} \label{alg}
\end{algorithm}

\section{Numerical Examples}

\begin{figure*}
    \centering
    \begin{subfigure}[c]{1\columnwidth}
    \centering
        \includegraphics[width=0.8\columnwidth]{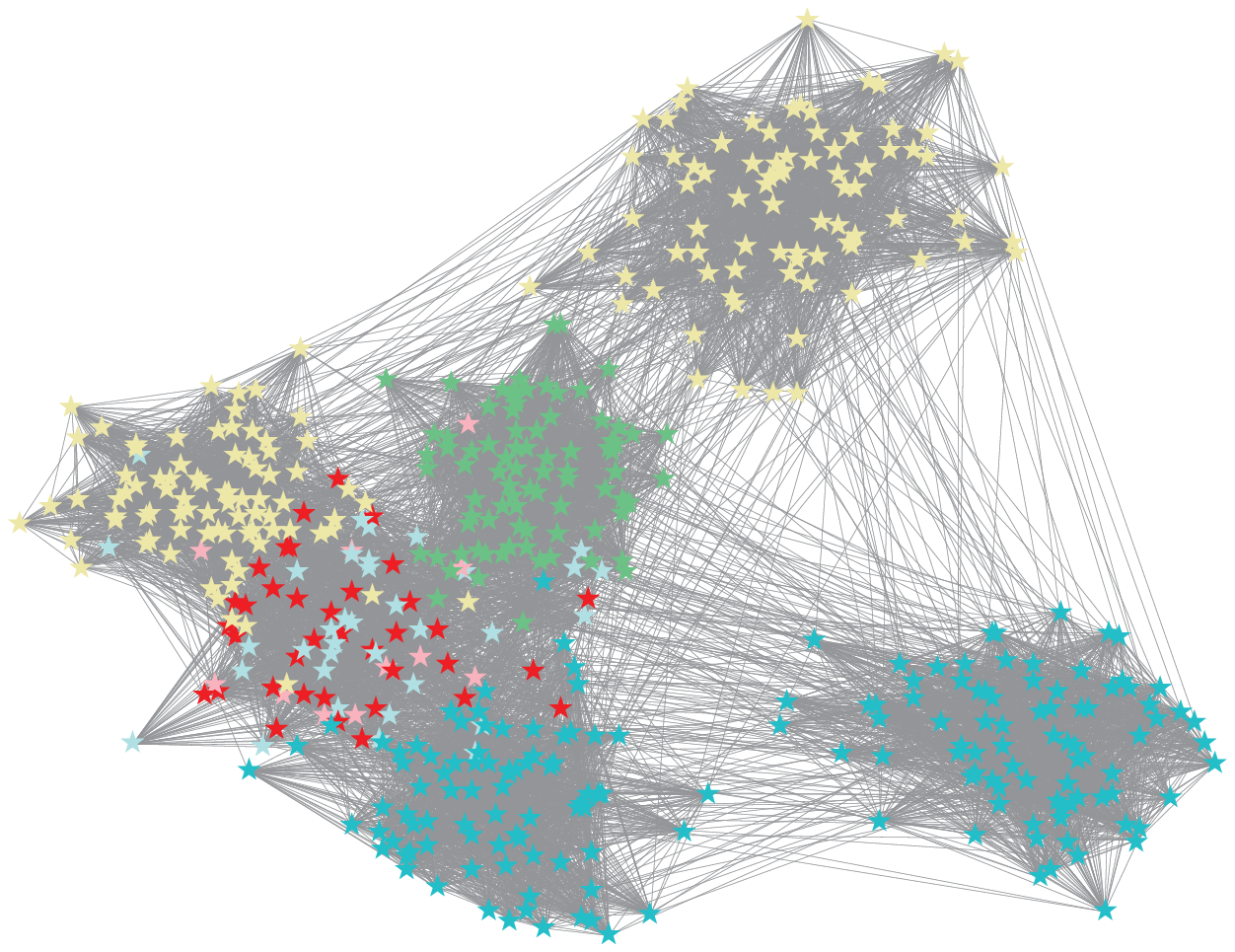}
        \caption{Clusters by coherency clustering}
        \label{ex1p3}
    \end{subfigure}
    ~     
    \begin{subfigure}[c]{1\columnwidth}
    \centering
        \includegraphics[width=0.8\columnwidth]{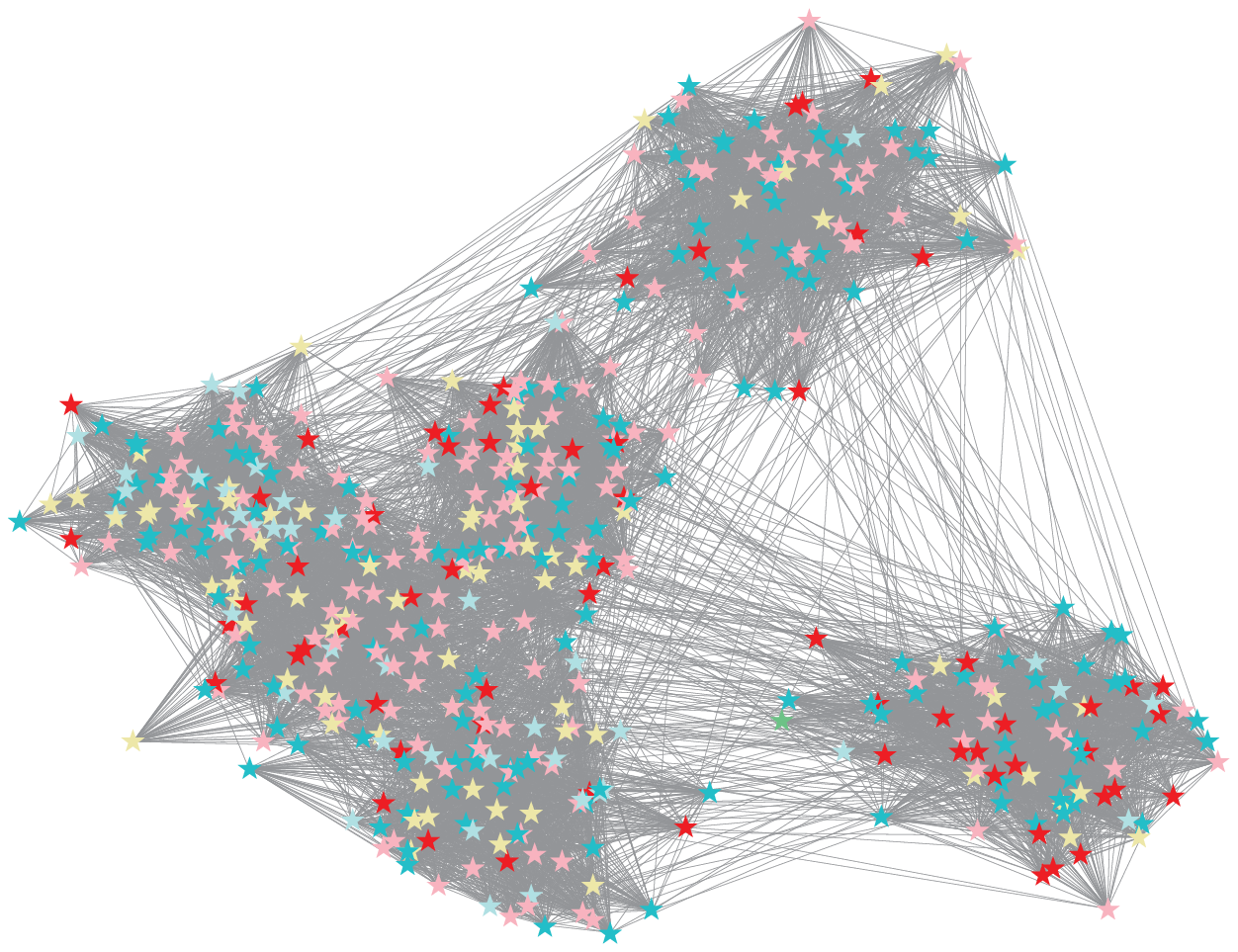}
        \caption{Clusters by $\mathcal{H}_{2}$ open-loop clustering}
        \label{ex1p2}
    \end{subfigure}
    ~
    
    \begin{subfigure}[r]{1\columnwidth}
    \centering
        \includegraphics[width=0.8\columnwidth]{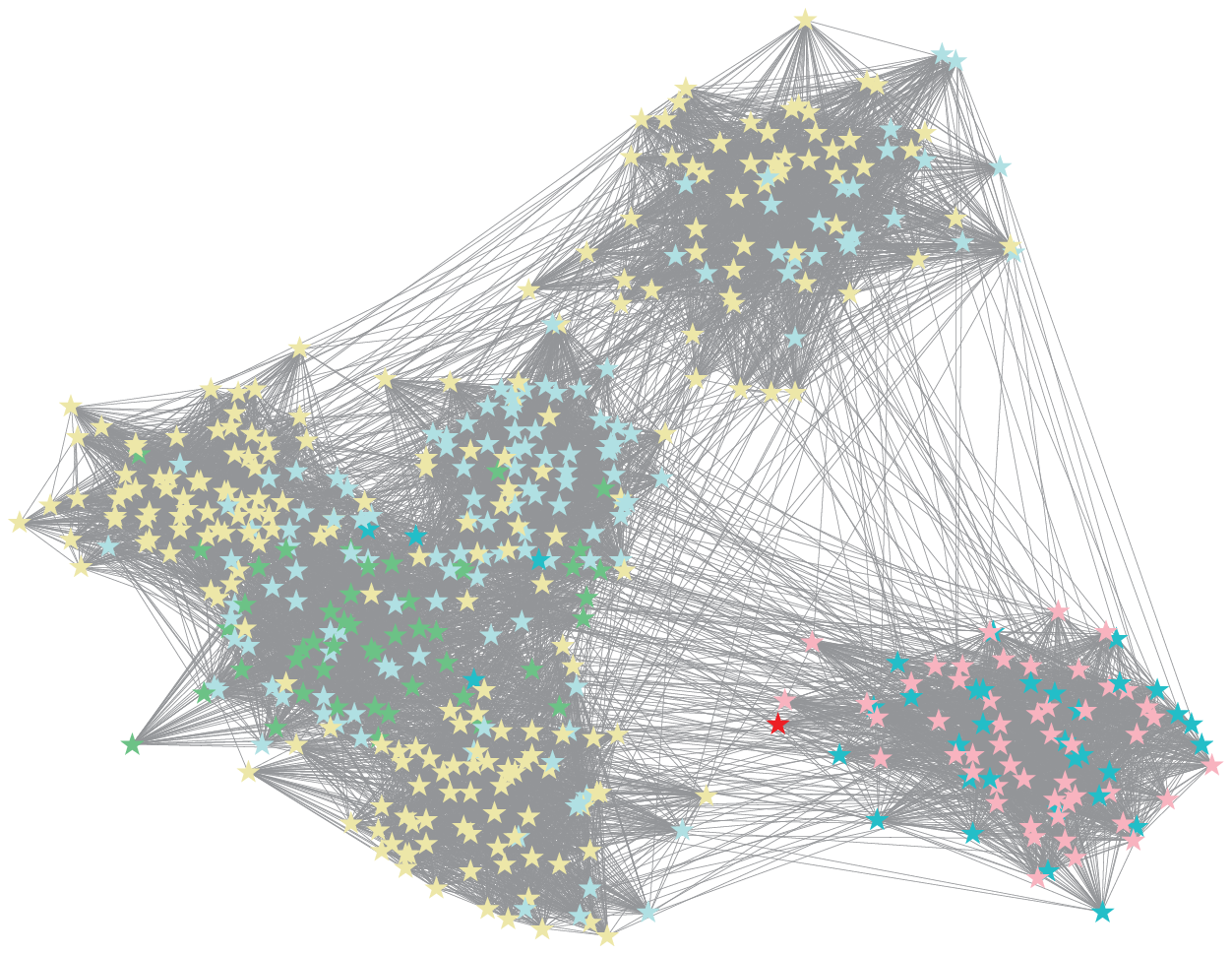}
        \caption{Clusters by $\mathcal{H}_{2}$ closed-loop clustering with $Q_{1}$}
        \label{ex1p1}
    \end{subfigure}
    ~  
    \begin{subfigure}[l]{1\columnwidth}
    \centering
        \includegraphics[width=0.8\columnwidth]{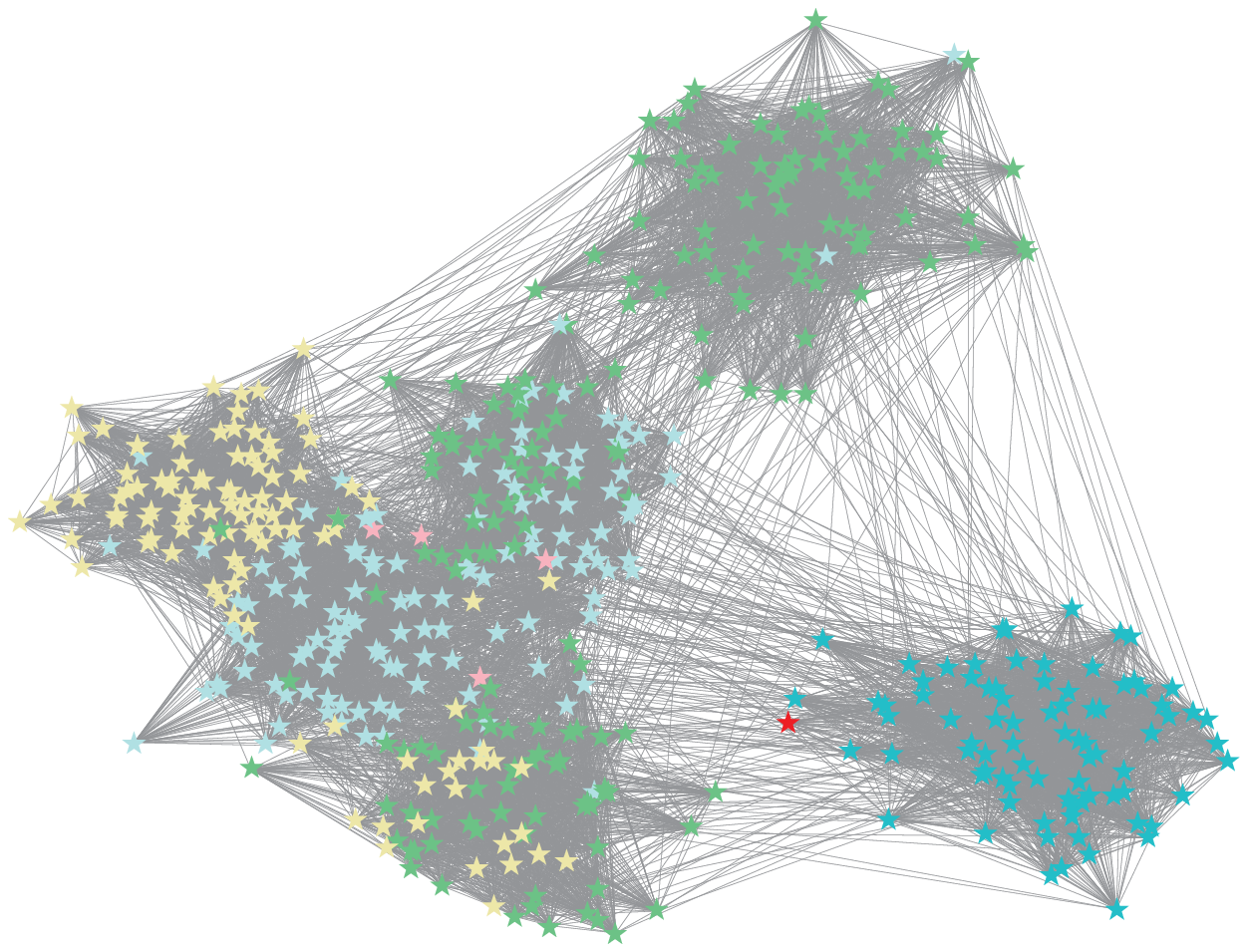}
        \caption{Clusters by $\mathcal{H}_{2}$ closed-loop clustering with $Q_{2}$}
        \label{ex2p1}
    \end{subfigure}    
    \caption{Clustering of the $500$-node network with $r = 6$ clusters. Nodes assigned to the same cluster are marked by the same color in one figure. Note that only the node identities are comparable between figures, but not the cluster identities.}
    \label{clusterplot}
\end{figure*}

To verify our proposed algorithms, we use a consensus network model defined over a $500$-node graph $\mathcal{G}_{500}$. The graph is randomly generated with $0.5$ overall probability for edge attachment, and $6$ spatial clusters with a proportion of $100$ for the number of edges within clusters versus the number of edges across clusters. We also apply a random weight $1\leq M_{i,i} \leq 2$ on each node. The resulting state matrix $A$ follows the expression (\ref{consensus}) presented in Appendix A. The disturbance is assumed to enter from the $364^{th}$ node, i.e. $B_{d}$ equals to the $364^{th}$ column of $I_{500}$. We assume $B=R=I_{500}$ and two choices of $Q$ as: a scaled identity matrix $Q_{1}=1000\times I_{500}$, and $Q_{2}=[L(\mathcal{G}_{500})]^{2}$, which is the square of the unweighted Laplacian matrix of $\mathcal{G}_{500}$. Both $Q_{1}$ and $Q_{2}$ satisfy Assumption \ref{asss}. For this simulation example, solvability of (\ref{reare}) and stability of $\hat{g}(s)$ are guaranteed by Theorem \ref{solvability} and Theorem \ref{gstab2} in Appendix A, respectively.

\subsection{Cluster Design}
\begin{figure*}
    \centering
    \begin{subfigure}[c]{1\columnwidth}
    \centering
        \includegraphics[width=0.85\columnwidth]{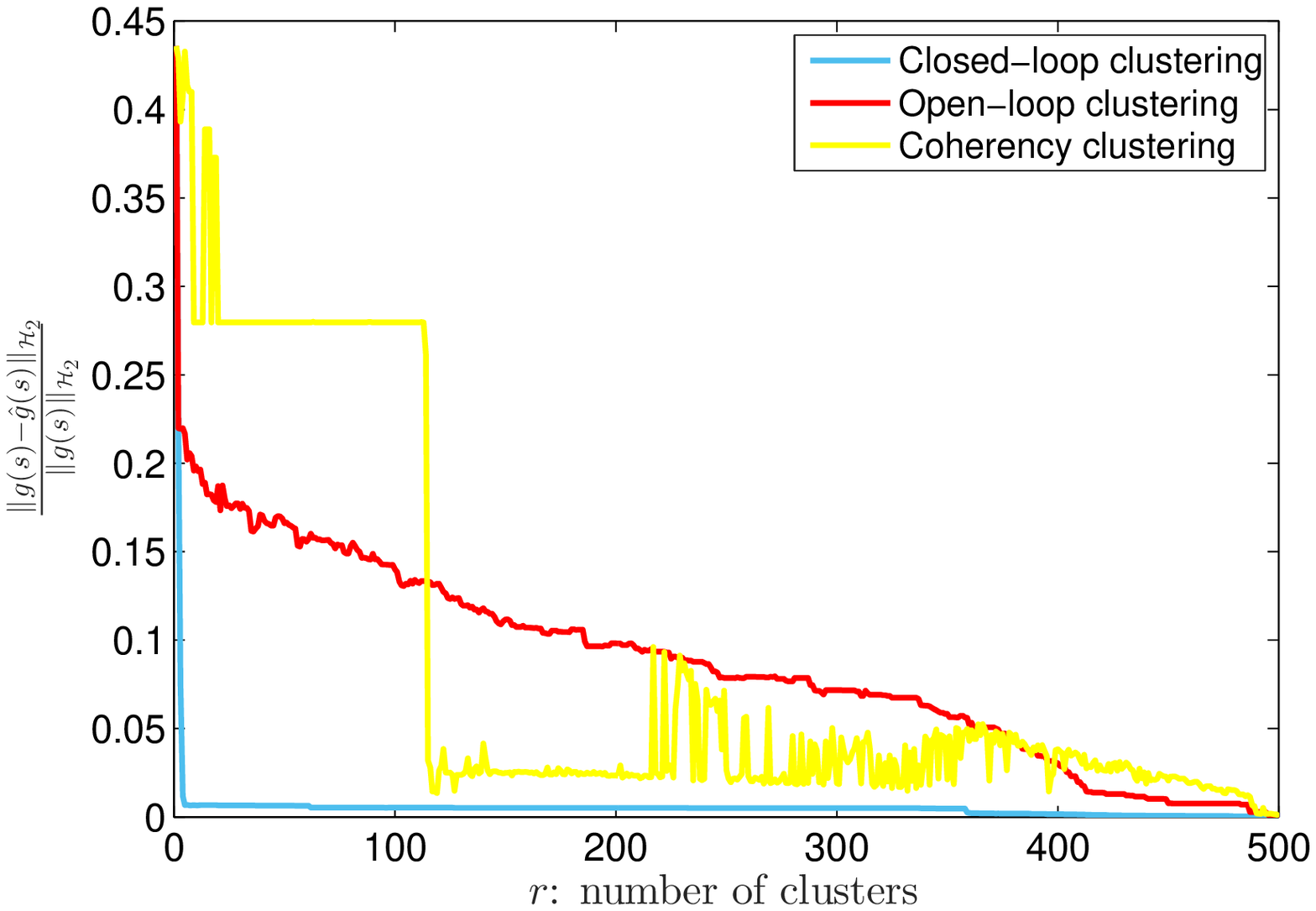}
        \caption{Design with $Q_{1}$}
        \label{eex1p3}
    \end{subfigure}
    ~         
    \begin{subfigure}[c]{1\columnwidth}
    \centering
        \includegraphics[width=0.85\columnwidth]{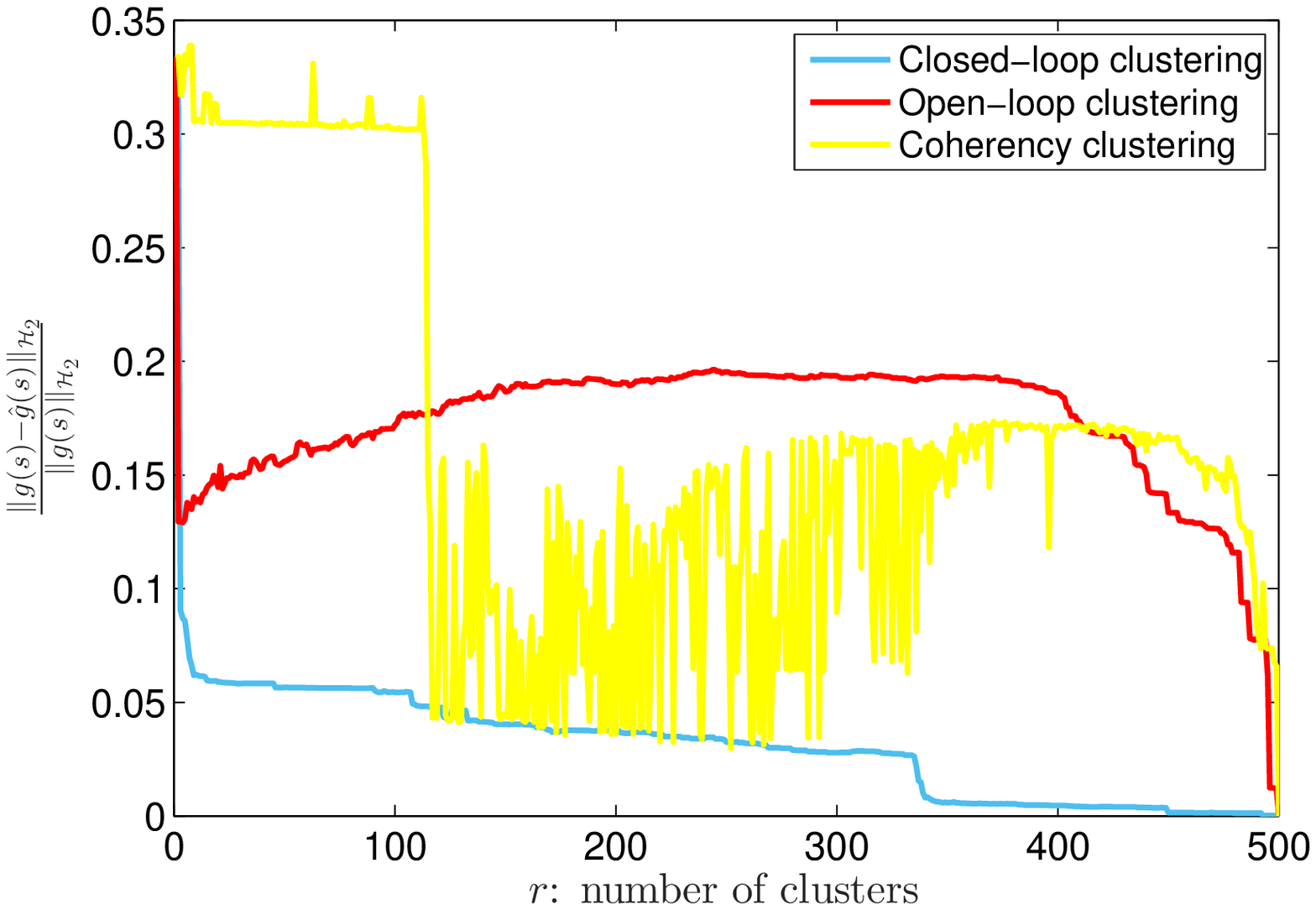}
        \caption{Design with $Q_{2}$}
        \label{eex1p2}
    \end{subfigure}
    \caption{Performance loss $\frac{\| g(s)-\hat{g}(s) \|_{\mathcal{H}_{2}}}{\| g(s)\|_{\mathcal{H}_{2}}}$ with respect to the number of clusters}
    \label{clpplot}
\end{figure*}

We start by finding the closed-loop clustering set $\mathcal{I}$ with respect to a fixed weight $w = \bar{v} = \bar{v}(A)$ with $r = 6$. For comparison, we also apply two other popular clustering algorithms, namely, $\mathcal{H}_{2}$ open-loop clustering proposed in \cite{h2}, and coherency-based clustering proposed in \cite{chow}. Both of these clustering algorithms can be transformed into Algorithm \ref{algk}, with their equivalent inputs as summarized in Table \ref{table1}.
\begin{table}
\caption{Input specifications for Algorithm \ref{algk} }
\label{table1}
\begin{threeparttable}
\centering
\begin{tabular}{c|ccc}
\hline \\
Algorithm & Data & Weight & Number of Clusters  \\ \hline
$\mathcal{H}_{2}$ closed-loop clustering & $\Phi$ & $w$ & $r$ \\ 
$\mathcal{H}_{2}$ open-loop clustering \cite{h2} & $\Phi_{o}$\tnote{a} & $w$ & $r$ \\ 
Coherency clustering \cite{chow} &  $\Phi_{c}$\tnote{b} & $\mathbf{1}_{n}$ & $r$\\ \hline
\end{tabular}
\begin{tablenotes}
\item[a] Let $v_{c}^{T}$ be the complement of $\bar{v}^{T}$, and then $\Phi_{o} :=\Phi_{o}^{\frac{1}{2}}\Phi_{o}^{\frac{T}{2}}=v_{c} [\int_{0}^{\infty} e^{(v_{c}^{T}Av_{c})\tau}v_{c}^{T}B_{d}B_{d}^{T}v_{c}e^{(v_{c}^{T}A^{T}v_{c})\tau} d\tau ]v_{c}^{T}$.
\item[b] Let the eigenvalues of $-L(\mathcal{G})$ be $0=\lambda_{1}>\lambda_{2}\geq ... \geq \lambda_{n}$, $\Psi_{c} = [v_{1},...,v_{r}]$ where $v_{i}$ is the right eigenvector of $\lambda_{i}$.
\end{tablenotes}
\end{threeparttable}
\end{table}  
Note that these two algorithms capture only the open-loop characteristics of the network, and hence do not depend on the choice of $Q$ and $R$. Fig. \ref{ex1p3} shows that the clusters identified by coherency based clustering closely resemble the spatial clusters of the open-loop network except for a few discrepancies. For example, two distant groups of nodes are assigned to the same cluster shown in yellow. These discrepancies arise from the fact that the spatial clusters are only based on the edge-weights (that model geographical distance between two nodes), while coherent clusters are decided by both edge-weights and node-weights. The $\mathcal{H}_{2}$ open-loop clusters are shown in Fig. \ref{ex1p2}. As such, they do not follow any definite pattern with respect to the spatial clusters as they are based on node aggregation following from the $\mathcal{H}_{2}$-norm distance of their output responses. Figures \ref{ex1p1} and \ref{ex2p1}, on the other hand, show the clusters identified by our $\mathcal{H}_{2}$ closed-loop algorithm (Algorithm \ref{algk}) for $Q=Q_{1}$ and $Q=Q_{2}$, respectively. Both of these clusters are different from each other for obvious reasons. They are also different from the spatial clusters, the coherent clusters as well as the $\mathcal{H}_{2}$ open-loop clusters as Algorithm \ref{algk} is related to the closed-loop controllability subspace.

We also illustrate the effectiveness of $\mathcal{H}_{2}$ closed-loop clustering with respect to the number of clusters $r$. As evident from the design, the error between the transfer matrices in (\ref{fullclptf}) and (\ref{clptff}) will be minimal when $r=n$, and will degrade with decreasing $r$ while improving tractability of the design. We vary $r$ from $1$ to $500$, and calculate the ratio $\frac{\| g(s)-\hat{g}(s) \|_{\mathcal{H}_{2}}}{\| g(s) \|_{\mathcal{H}_{2}}}$ resulting from the three clustering algorithms. The results are shown in Fig. \ref{clpplot}. For both $Q_{1}$ and $Q_{2}$ the closed-loop clustering outperforms the other two methods in approaching the $\mathcal{H}_{2}$ performance of $g(s)$. Therefore, even for very small values of $r$, the projected controller achieves significantly close $\mathcal{H}_{2}$ performance as the full-order LQR controller. In terms of implementation, the projected controller needs far less number of communication links than a full-order standard LQR as well as a full-order $\mathcal{L}_1$ sparsity-promoting LQR \cite{sparse}. For example, for this system a standard LQR would require ${{500}\choose{2} }=124750$ links. Meanwhile as shown in Fig. \ref{sparfig}, a sparsity promoting LQR requires from $3104$ to as many as $21325$ links to retain a performance loss under $5\%$. By choosing $r\leq 9$, the similar performance loss can be maintained by our design using at most $536$ links.

It is also noted that the $\mathcal{H}_{2}$ closed-loop clusters do not need to strictly follow the spatial geometric clustering of the network. For example, in both Figures (\ref{ex1p1}) and (\ref{ex2p1}), a cluster can be one single node as shown by the red, or can be scattered over the network such as yellow. In practice, this means that to implement the proposed control law, nodes from different geographical locations may need to be part of the same cluster for the closed-loop model, i.e., nodes that belong to two different spatial clusters in open-loop may need to collaborate and send their states to a common coordinator. The assignment, therefore, encourages system-wide participation from nodes at various corners of $\mathcal{G}_{500}$ for implementing the controller.
\begin{figure}
\includegraphics[width=1\columnwidth]{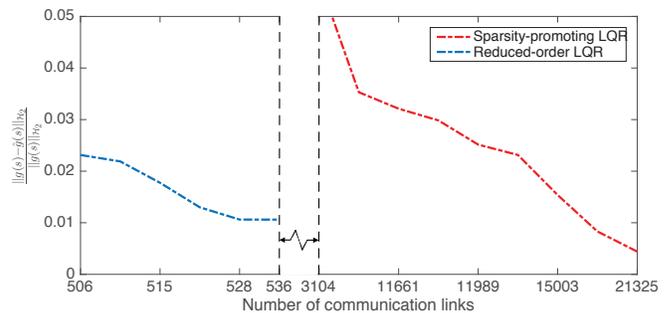}
        \caption{Projected LQR controller from clustering vs. LQR controller from sparsity-promoting algorithm with $Q_{1}$.}
        \label{sparfig}
\end{figure}

\subsection{Weight Design}
We next apply Algorithm \ref{algp} on $\mathcal{G}_{500}$ to find the optimal projection weight $w$ while fixing the clusters to those obtained from coherency. These clusters as previously shown in Fig. \ref{ex1p3} closely resemble the spatial clusters, and their clustering sets are given by $\mathcal{I}_{1}=\{ 1,...,167\}$, $\mathcal{I}_{2}=\{ 168,...,178\}$, $\mathcal{I}_{3}=\{ 179,...,344\}$, $\mathcal{I}_{4}=\{ 345,...,379\}$, $\mathcal{I}_{5}=\{ 380,...,463\}$ and $\mathcal{I}_{6}=\{ 464,...,500\}$. After running the algorithm with both $Q_{1}$ and $Q_{2}$, we plot the two weight vectors compared with $\bar{v} = \bar{v}(A)$ in Fig. \ref{wplot}. It can be seen that the weight vectors from $Q_{1}$ and $Q_{2}$ are very different than $\bar{v}$ or between themselves. On the other hand, both of these weights at the $364^{th}$ node, i.e., the node where the disturbance enters, show a sudden jump in magnitude from the rest of the nodes. To verify the closed-loop performance, we construct the $P$ matrices using these two vectors, and summarize the error ratios with some design parameters in Table \ref{wresult}. As expected, by applying the weight design, the closed-loop errors as shown in Table \ref{wresult} are significantly reduced from $w=\bar{v}$. Despite the fact that these two errors are much larger than what we get from closed-loop clustering (which yields an error of $0.68\%$), the weight design still grants us with significant improvement over the hard constraint on $\mathcal{I}$.

Finally, we compare the closed-loop performance of the iterative Algorithm \ref{alg} (where both $w$ and $\mathcal{I}$ are free) with Algorithm \ref{algk} in Fig. \ref{iterplot}. The comparison is shown for $Q_{1}$ and only $r\leq 6$ as the error ratio already becomes under $1\%$ after $r=6$. For this example, it is worth mentioning that Algorithm \ref{alg} turns out to be surprisingly efficient as it converges right after the first iteration. In this sense, the iterative process reduces to a single weight design after the clustering. Fig. \ref{iterplot} verifies that Algorithm \ref{alg} achieves better matching between $g(s)$ and $\hat{g}(s)$ than Algorithm \ref{algk}.

\begin{table}
\caption{Results of weight design}
\label{wresult}
\begin{threeparttable}
\centering
\begin{tabular}{ |c|c|c|c| }
\hline
\multirow{2}{*}{Case} & \multirow{2}{*}{Penalty factor $\rho$} & \multicolumn{2}{|c|}{Relative error $\| g(s)-\hat{g}(s) \|_{\mathcal{H}_{2}} / \| g(s)\|_{\mathcal{H}_{2}}$} \\
\cline{3-4}
&  & with $w=\bar{v}$ & with $w$ from Algorithm \ref{algp} \\ \hline
$Q_{1}$ & $0.011 \frac{\| \Phi \|_{2}}{\bar{v}^{T}Q\bar{v}}$ & $29.14\%$ & $7.35\%$ \\ \hline
$Q_{2}$ & $0.007 \frac{\| \Phi \|_{2}}{\bar{v}^{T}Q\bar{v}}$ & $35.69\%$ & $21.79\%$ \\ \hline
\end{tabular}
\begin{tablenotes}
\item[1] The convergence threshold for power iteration is chosen as $\delta = 0.05$.
\item[2] $\frac{\| \Phi \|_{2}}{\bar{v}^{T}Q\bar{v}}$ is included in the penalty factor to normalize the two objective functions in (\ref{optf}) to the same scale.
\end{tablenotes}
\end{threeparttable}
\end{table} 

\begin{figure}
\centering
\includegraphics[width=0.9\columnwidth]{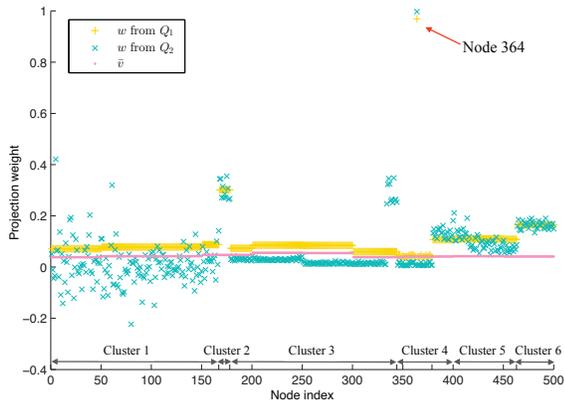}
        \caption{Weight designs from $Q_{1}$ and $Q_{2}$}
        \label{wplot}
\end{figure}
\begin{figure}
\centering
\includegraphics[width=0.9\columnwidth]{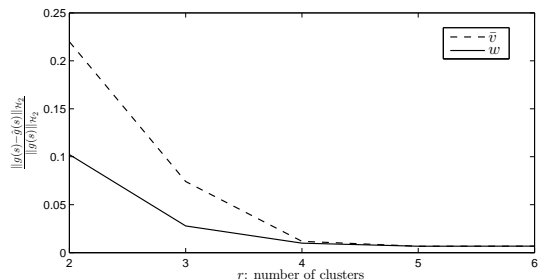}
        \caption{Performance loss by weight refinement}
        \label{iterplot}
\end{figure}

\subsection{Scalability Results}
To verify the scalability of Algorithm \ref{algk}, we increase the size of the network, and compare the computation time with that of solving a full-order LQR controller. We let $r=5$ and $\kappa=5$ for computing $\Phi_{\kappa}^{\frac{1}{2}}$, and present the results in Table \ref{tresult}. The table verifies the $\mathcal{O}(n^{3})$ complexity for full-order LQR vs. the $\mathcal{O}(n^{2}r)$ complexity for reduced-order LQR. When $n=8000$, the full-order LQR is already beyond the capability of our computation facility. The reduced-order LQR design, however, requires remarkably less computation time, while still providing a close performance match to the full-order LQR controller.

\begin{table}[h]
\caption{Scalability Results}
\label{tresult}
\centering
\begin{tabular}{ |c|c|c|c| }
\hline
\multirow{2}{*}{$n$} & \multicolumn{2}{|c|}{Computation time} & \multirow{2}{*}{Perf. loss $\frac{\| g(s)-\hat{g}(s) \|_{\mathcal{H}_{2}}}{\| g(s)\|_{\mathcal{H}_{2}}}$} \\
\cline{2-3}
& Full-order & Reduced-order & \\ \hline
1000 & $16.34$ sec & $0.79$ sec & $3.74 \%$ \\ \hline
2000 & $134.91$ sec & $2.64$ sec & $1.06 \%$ \\ \hline
4000 & $20.36$ min & $11.61$ sec & $0.25 \%$ \\ \hline
6000 & $71.06$ min & $24.73$ sec & $0.13 \%$\\ \hline
8000 & Out of memory & $48.39$ sec & - \\ \hline
10000 & Out of memory & $75.90$ sec & - \\ \hline
\end{tabular}
\end{table}

\section{Conclusion}
In this paper we developed a set of projection-based algorithms that improve the dynamic response of large-scale network systems with reduced-order LQR controllers. The advantage of these reduced-order controllers is that they are structured, and significantly easier to design and implement compared to regular full-order LQR controllers. Our future work will be to address the robustness of this approach to communication delays, to exploit additional input-output properties such as passivity to further improve performance, and to inspect the influence of network heterogeneity on clustering.

\section*{Appendix A: Special Case of Consensus Network}

The well-posedness conditions from Section \RNum{3}.A can be relaxed if system (\ref{full}) is a consensus network. Consensus is commonly used to model the dynamic behavior of many practical networks such as social networks, power networks, and wireless networks. For the same network graph $\mathcal{G} = (\mathcal{V},\mathcal{E})$ as in (\ref{full}), we suppose $n=n_{s}$. Each node has a real-valued node-weight $m_{i}>0$, and each edge $i j \in \mathcal{E}$ has a real-valued edge-weight $a_{ij} = a_{ji} > 0$. A consensus network can then be modeled in its simplest form by letting $A_{ij} = m_{i}^{-1} a_{ij}$ and $A_{ii} {=} - m_{i}^{-1} \sum_{j \in \mathcal{N}_{i}} a_{ij}$ in (\ref{nodesys}). The overall system writes as
\begin{align*}
\dot{x}_{M}(t) = M^{-1}L(\mathcal{G})x_{M}(t) + B_{M}u(t) + B_{dM}d(t),
\end{align*}
where $L(\mathcal{G})$ is the edge-weighted graph Laplacian matrix, and $M=diag([m_{1},...,m_{n}])$ is the matrix of node weights. For ease of analysis, let us consider a coordinate transformation $x = M^{\frac{1}{2}}x_{M}$, in which case the state matrices in (\ref{full}) become
\begin{align}
A = M^{-\frac{1}{2}}L(\mathcal{G})M^{-\frac{1}{2}}, B = M^{\frac{1}{2}}B_{M}, B_{d} = M^{\frac{1}{2}}B_{dM}. \label{consensus} 
\end{align}
Notice that $A$ here is a negative-semidefinite matrix, with only one zero eigenvalue at $A\bar{v}(A) = 0$, $\bar{v}(A) = \frac{1}{\sqrt{tr(M)}} M^{\frac{1}{2}} \mathbf{1}_{n}$, which we will simply denote as $\bar{v}$. The state matrix $\tilde{A}$ from the reduced-order system (\ref{reduced}) satisfies the following property.

\begin{custompro}{A.1}
$\tilde{A}$ is negative-semidefinite (or -definite) if $w \in  span(\bar{v})$ (or $w \not \in span(\bar{v})$).
\label{openpro}
\end{custompro}
\begin{proof}
Denote the complement of the projection matrix $P$ by $U$, such that $D=[P^{T}\ U^{T}]^{T}$ is unitary, i.e. $DD^{T} = I_{n}$. Then, $\tilde{A}=PAP^{T}$ is a leading principal of the matrix 
\begin{align*}
DAD^{T} = \begin{bmatrix}
PAP^{T} & PAU^{T} \\
UAP^{T} & UAU^{T}
\end{bmatrix},
\end{align*}
which is similar to $A$. Given the leading principal of a symmetric positive-definite (or-semidefinite) matrix still being positive-definite (or-semidefinite), it follows that $\tilde{A} \preceq 0$. Moreover, from Definition \ref{Pd}, $P$ is defined over $w$ such that $P^{T}Pw=w$. Then if $w \in  span(\bar{v})$, $\tilde{A}$ preserves the zero eigenvalue from $A$ since $\tilde{A}P\bar{v} = PA\bar{v} = 0$. Therefore, $\tilde{A}$ is negative-semidefinite if $w \in  span(\bar{v})$, or negative-definite otherwise.
\end{proof}

\subsubsection{Existence Condition}
The next theorem shows that the existence of $\tilde{X}$ for the reduced-order ARE (\ref{reare}) is guaranteed for any $w$ and $\mathcal{I}$, and matrices $\tilde{Q}=PQP^{T}$ and $\tilde{G}=PGP^{T}$.
\begin{customthm}{A.2}
If $(Q^{\frac{T}{2}},A)$ is detectable and $(A,G^{\frac{1}{2}})$ is stabilizable, then (\ref{reare}) is guaranteed with a unique stabilizing solution $\tilde{X} = \tilde{X}^{T} \succeq 0$.
\label{solvability}
\end{customthm}
\begin{proof}
For $Q\succeq 0$ and $G \succeq 0$, similar to $\tilde{A}$ in the proof of Proposition \ref{openpro}, we have both $\tilde{Q}\succeq 0$ and $\tilde{G} \succeq 0$. Recall that $(\tilde{Q}^{\frac{T}{2}},\tilde{A})$ is detectable if and only if for all $\lambda$ and $v$ that $\tilde{A}v = \lambda v$ and $\lambda \geq 0$, $\tilde{Q}^{\frac{T}{2}}v \neq 0$. If $w \in span(\bar{v})$, $\tilde{A}\preceq 0$ has only one zero eigenvalue, with the corresponding eigenvector $P\bar{v}$. Under this situation, $\bar{v}^{T}P^{T}\tilde{Q}P\bar{v} = \bar{v}^{T}Q\bar{v} \neq 0$ since $(Q^{\frac{T}{2}},A)$ is detectable. If $w \not \in span(\bar{v})$, $\tilde{A}$ is ￼￼negative-definite, which means $(\tilde{Q}^{\frac{T}{2}},\tilde{A})$ is trivially detectable. The same rational applies to showing stabilizability of $(\tilde{A},\tilde{G}^{\frac{1}{2}})$. Therefore, both stabilizability and detectability are satisfied, and thus (\ref{reare}) guarantees a unique positive-semidefinite solution.
\end{proof}

\subsubsection{Bound of ARE solution} 
The following lemma provides a specific value for $\beta(A,G,Q)$ in Lemma \ref{xblemma}.

\begin{customlem}{A.3}
The bound $\bar{\lambda}(X) \leq \beta(A,G,Q)$ holds for
\begin{align}
\beta(A,G,Q) = \frac{\bar{\sigma}(Q+G)}{2\underline{\sigma}(G-A)}.
\end{align}
\end{customlem} 
\begin{proof}
The expression for $\beta(A,G,Q)$ above can be simply obtained by letting $K_{t} = -G^{\frac{T}{2}}$, $D_{t} = I_{n}$ and $F = 2G - 2A \succ 0$ in Lemma \ref{xblemma}. 
\end{proof}
Note that the value specified by this Lemma only involves finding two extreme singular values, which can be computed very efficiently by Lanczos algorithm in $\mathcal{O}(n)$ complexity.

\subsubsection{Stability condition}
We next state two sufficient conditions that guarantee the stability of $\hat{g}(s)$ for consensus networks. 

\begin{customthm}{A.4}
The system $\hat{g}(s)$ is stable if $w=\bar{v}$, and $\mathcal{I}$ is an almost equitable partition \cite{equitable} of the graph $\mathcal{G}$, which means for $k\neq l$, the edge weight $a_{ij}$ is equal for all $i \in \mathcal{I}_{k}$ and $j \in \mathcal{I}_{l}$. 
\label{gstab1}
\end{customthm}
\begin{proof}
A similarity transformation of $D=[P^{T} \ U^{T}]^{T}$ and $D^{T}$ on $A-G\hat{X}$ yields
\begin{align}
D(A-G\hat{X})D^{T} = \begin{bmatrix}
\tilde{A} - \tilde{G}\tilde{X} & PAU^{T} \\
UAP^{T}-UGP^{T}\tilde{X} & UAU^{T}
\end{bmatrix}.
\end{align}
From \cite{equitable}, if $\mathcal{G}$ admits an almost equitable partition $\mathcal{I}$, the corresponding matrix $P$ with $w=\bar{v}$ will satisfy $A^{T}P^{T}=P^{T}N$ for some $N \in \mathbb{R}^{r \times r}$. As a result, $PAU^{T} = 0$, and thus $\hat{g}(s)$ is stable since $\tilde{A}-\tilde{G}\tilde{X}$ and $UAU^{T}$ are both Hurwitz given $w = \bar{v}$.
\end{proof}

\begin{customthm}{A.5}
Assume that $B$ is a square invertible matrix. Then, $\hat{g}(s)$ is stable if $G$ is similar to $\alpha I_{n}$ for some $\alpha >0$.
\label{gstab2}
\end{customthm}
\begin{proof}
Denote the right eigenspace of $G$ by $V$. If $G$ is similar to $\alpha I_{n}$, it follows that $V^{T}(G\hat{X}+\hat{X}G)V = 2 \alpha V^{T}\hat{X}V$. This means that $G\hat{X}+\hat{X}G$ is positive-semidefinite. From the matrix majorization property \cite{matrix}, we also have
\begin{align*}
2\bar{\lambda}(A-G\hat{X}) \leq z^{T}(A+A^{T}-G\hat{X}-\hat{X}G)z
\end{align*} 
hold for any non-zero vector $z$, with the RHS being non-positive given $A+A^{T} \preceq 0$ and $-G\hat{X}-\hat{X}G\preceq 0$. Hence for $\bar{\lambda}(A-G\hat{X})$ to be strictly negative, $z^{T}Az = 0$ should contradict $z^{T}G\hat{X}z = 0$. To prove the contradiction, we assume $z^{T}Az = 0$, namely $z\in span(\bar{v})$. Note that $z^{T}G\hat{X}z = 0$ holds if and only if either one of the following three conditions is satisfied: (1) $z\in ker(P)$, (2) $z \in ker(G)$, or (3) $Pz \in ker(\tilde{X})$. First of all, Assumption \ref{assw} requires $w_{\mathcal{I}_{i}}^{T}\bar{v}_{\mathcal{I}_{i}}$, $i=1,...,r$, which means $P\bar{v}\neq 0$ and thus $z\not \in ker(P)$. Given that $(A,BR^{-\frac{1}{2}})$ is stabilizable, we have $\bar{v}^{T}G\bar{v}\neq 0$, i.e. $z \not \in ker(G)$. To this end, the last condition remaining is $z\in span(\bar{v}) \not\Leftrightarrow Pz \in ker(\tilde{X})$ to complete the contradiction. By assuming a non-zero vector $v \in ker(\tilde{X})$, pre- and post-multiplying (\ref{reare}) with $v$ yields
\begin{align*}
v^{T}\tilde{A}^{T}\tilde{X}v + v^{T}\tilde{X}\tilde{A}v + v^{T}\tilde{Q}v - v^{T}\tilde{X}\tilde{G}\tilde{X}v=0.
\end{align*}
It can be easily verified that $ker(\tilde{X})$ is an $\tilde{A}$-invariant subspace contained in the null-space of $\tilde{Q}$. Given $z\in span(\bar{v})$, from the proof of Proposition \ref{openpro} we know that $Pz$ is $\tilde{A}$-invariant, i.e., $Pz$ is an eigenvector of $\tilde{A}$, when $w \in span(\bar{v})$. As a result, $z^{T}P^{T}\tilde{Q}Pz = z^{T}Qz \neq 0$ since $(Q^{\frac{T}{2}},A)$ is detectable. This verifies that $Pz \not \in ker(\tilde{Q})$, which proves $Pz \not \in ker(\tilde{X})$. Therefore, we conclude that $z^{T}Az = 0$ contradicts $z^{T}(G\hat{X}+\hat{X}G)z = 0$, and thus $\hat{g}(s)$ is stable. 
\end{proof}

\section*{Appendix B: Proofs}

\subsection{Proof of Theorem \ref{stab}}

Consider a Lyapunov function $V(x) = x^{T}Xx >0$, where $X\succ 0$ is the solution of ARE (\ref{are}). For $\hat{g}(s)$ to be asymptotically stable, $\dot{V}(x)$ needs to be negative, or equivalently
$$(A-G\hat{X})^{T}X+X(A-G\hat{X}) \prec 0.$$
Using the ARE (\ref{are}), the LMI above reduces to
$$Q \succ XGE + EGX - XGX,$$
which holds if $\underline{\lambda}(Q) > \bar{\lambda}(XGE + EGX - XGX)$. Notice that the RHS of the eigenvalue inequality follows
$$\bar{\lambda}(XGE + EGX - XGX) \leq \bar{\lambda}(XGE + EGX) + \bar{\lambda}(-XGX),$$
where we have respectively $\bar{\lambda}(XGE + EGX) \leq \bar{\sigma}(XGE + EGX) \leq 2\bar{\sigma}(XGE) \leq 2 \bar{\sigma}(X) \bar{\sigma}(G) \bar{\sigma}(E)$, and $-\bar{\lambda}(-XGX) = \underline{\sigma}(XGX) \geq \underline{\sigma}(X)^{2}\underline{\sigma}(G)$. Incorporating these two upper bounds yields the condition in (\ref{stabcond}). 


\subsection{Proof of Lemma \ref{vcond}}
From Lemma \ref{xblemma}, the ARE solution $X$ is bounded by $\bar{\sigma}(X) = \bar{\lambda}(X) \leq \beta(A,G,Q)$. Note that (\ref{stabcond}) is satisfied if
\begin{align}
\underline{\sigma}(Q)  >  2\bar{\sigma}(X) \bar{\sigma}(G)\bar{\sigma}(E),
\label{vcondp1}
\end{align}
where the RHS is further bounded by %
\begin{small}
\begin{align}
2\bar{\sigma}(X) \bar{\sigma}(G)\bar{\sigma}(E) \leq 2\beta(A,G,Q) \bar{\sigma}(G) [\beta(A,G,Q) {+} \bar{\sigma}(\tilde{X})]. \label{vcondp2}
\end{align}
\end{small}%
Therefore, (\ref{vcondp1}), and then (\ref{stabcond}) will hold if $\underline{\sigma}(Q)$ is greater than the RHS of (\ref{vcondp2}), which yields (\ref{vcondieq}) in Lemma \ref{vcond}. 

\subsection{Proof of Theorem \ref{tmain}}

The bound in (\ref{rboundn}) is derived assuming the worst case from (\ref{QRmod}), where $\tilde{Q} {=} PQP^{T} {+} \alpha I_{r}$ and $\tilde{G} {=} PQP^{T} {+} \alpha I_{r}$ for $\alpha > 0$. We divide the proof into three steps.
\subsubsection{}
We derive an analytical expression for $E$ by recovering the reduced-order ARE (\ref{reare}) to the full dimension as
\begin{align}
P^{T}(\tilde{A}^{T}\tilde{X} + \tilde{X}\tilde{A} + \tilde{Q} - \tilde{X}\tilde{G}\tilde{X})P = 0.\label{rare1}
\end{align}
Notice that $A$ and $\tilde{A}$ are related by
\begin{align}
\tilde{A}P = PA - PAU^{T}U,
\label{rare2}
\end{align}
where $U$ is the complement of $P$. Thereby substituting $\tilde{A}P$ and $P^{T}\tilde{A}^{T}$ in terms of (\ref{rare2}), and after a few calculations, (\ref{rare1}) yields the approximated ARE (for details, please see \cite{arepert})
\begin{align}
A^{T}\hat{X} + \hat{X}A + Q - \hat{X}G\hat{X} = \mathcal{R}, \label{area}
\end{align}
with the residue of the approximate ARE denoted by
\begin{align}
\mathcal{R} := \alpha\hat{X}^{2} + U^{T}UA^{T}\hat{X} + \hat{X}AU^{T}U + Q - P^{T}\tilde{Q}P. 
\label{residue}
\end{align}
By subtracting (\ref{are}) from (\ref{area}), we get the Sylvester equation 
\begin{align}
(A-G\hat{X})^{T}E + E(A-GX) = -\mathcal{R}. \label{sylv}
\end{align}
From (\ref{sylv}), we are able to explicitly write $E$ as a function of $A-GX$, $A-G\hat{X}$ and $\mathcal{R}$, and hence obtain an initial bound for $\| E\Phi^{\frac{1}{2}}\|_{F}$ in the next step.

\subsubsection{}
Pre- and post-multiplying (\ref{sylv}) with $\Phi^{\frac{T}{2}}$ and $\Phi^{\frac{1}{2}}$ respectively, the Sylvester equation takes the form
\begin{align}
\mathcal{A}_{1}\Phi^{\frac{T}{2}}E\Phi^{\frac{1}{2}} + \Phi^{\frac{T}{2}}E\Phi^{\frac{1}{2}}\mathcal{B}_{1} = -\Phi^{\frac{T}{2}} \mathcal{R}\Phi^{\frac{1}{2}}, 
\label{sylvw}
\end{align}
where we use the notations $\mathcal{A}_{1}=\Phi^{\frac{T}{2}}(A-G\hat{X})^{T}\Phi^{-\frac{T}{2}}$ and $\mathcal{B}_{1}=\Phi^{-\frac{1}{2}}(A-GX)\Phi^{\frac{1}{2}}$ for brevity. It can be easily shown that $\lambda(\mathcal{A}_{1}) < 0$ and $\lambda(\mathcal{B}_{1}) < 0$, which implies $\lambda_{i}(\mathcal{A}) + \lambda_{j}(\mathcal{B}) \neq 0$ for any $i,\ j=1,...,n$ so that (\ref{sylvw}) is solvable. Therefore, the weighted error $E\Phi^{\frac{1}{2}}$ can be expressed as
\begin{align}
E\Phi^{\frac{1}{2}} = \Phi^{-\frac{T}{2}} \cdot unvec[\mathcal{L}^{-1} \cdot vec(-\Phi^{\frac{T}{2}} \mathcal{R}\Phi^{\frac{1}{2}})],
\label{sylvo}
\end{align}
where $\mathcal{L} = I_{n} \otimes \mathcal{A}  + \mathcal{B}^{T} \otimes I_{n}$ is an $n^{2}\times n^{2}$ matrix. Since the Frobenius norm is unitary invariant, taking norm on both sides of (\ref{sylvo}) provides an upper bound on $\| E\Phi^{\frac{1}{2}}\|_{F}$ as
\begin{align}
\| E\Phi^{\frac{1}{2}}\|_{F} \leq \bar{\sigma}( \mathcal{L}^{-1}) \bar{\sigma}( \Phi^{-\frac{1}{2}} ) \| \Phi^{\frac{T}{2}} \mathcal{R} \Phi^{\frac{1}{2}} \|_{F}.
\label{t1}
\end{align}
Note that $\bar{\sigma}( \mathcal{L}^{-1})$ follows $\bar{\sigma}(\mathcal{L}^{-1}) = \frac{1}{\underline{\sigma}(\mathcal{L})}$, where $\underline{\sigma}(\mathcal{L})$ is calculated by $\underline{\sigma}^{2}(\mathcal{L}) = \underline{\lambda}(\mathcal{L}\mathcal{L}^{T})$ with
\begin{align*}
\mathcal{L}\mathcal{L}^{T} = I_{n} \otimes \mathcal{A}\mathcal{A}^{T} + \mathcal{B}^{T}\mathcal{B} \otimes I_{n} + \mathcal{B}^{T} \otimes \mathcal{A}^{T} + \mathcal{B} \otimes \mathcal{A}.
\end{align*}
The eigenvalues of $\mathcal{B} \otimes \mathcal{A}$ are counted by $\lambda_{i}(\mathcal{A})\lambda_{j}(\mathcal{B})$ with $i,\ j=1,...n$, and according to the Weyl's inequality of eigenvalues \cite{matrix}, we have the lower bound for $\underline{\sigma}^{2}(\mathcal{L})$ as
\begin{align}
\nonumber
\underline{\sigma}^{2}(\mathcal{L}) & \geq \underline{\lambda}(\mathcal{A}\mathcal{A}^{T}) + \underline{\lambda}(\mathcal{B}^{T}\mathcal{B}) + 2\underline{\lambda}(\mathcal{B} \otimes \mathcal{A}) \\
& = \underline{\sigma}^{2}(\mathcal{A}) + \underline{\sigma}^{2}(\mathcal{B}) + 2\bar{\lambda}(\mathcal{A})\bar{\lambda}(\mathcal{B}) \geq \underline{\sigma}^{2}(\mathcal{B}). \label{t2}
\end{align}
Combining (\ref{t1}) with (\ref{t2}) then yields the following bound
\begin{align}
\| E\Phi^{\frac{1}{2}}\|_{F} \leq \epsilon_{1} \| \Phi^{\frac{T}{2}} \mathcal{R} \Phi^{\frac{1}{2}} \|_{F},
\label{thb}
\end{align}
where $\epsilon_{1} = \frac{\bar{\sigma}( \Phi^{-\frac{1}{2}} ) }{\underline{\sigma}[\Phi^{-\frac{1}{2}}(A-GX)\Phi^{\frac{1}{2}}]} >0$ is independent of $P$. In (\ref{thb}), the norm of the weighted residue $\Phi^{\frac{T}{2}}\mathcal{R}\Phi^{\frac{1}{2}}$, written by%
\begin{small}
\begin{align}
\nonumber
& \Phi^{\frac{T}{2}} \mathcal{R} \Phi^{\frac{1}{2}} = \alpha \Phi^{\frac{T}{2}} \hat{X}^{2} \Phi^{\frac{1}{2}} + \Phi^{\frac{T}{2}} U^{T}UA^{T}\hat{X} \Phi^{\frac{1}{2}} + \Phi^{\frac{T}{2}} \hat{X}AU^{T}U \Phi^{\frac{1}{2}} \\
& \quad \ \ + \Phi^{\frac{T}{2}}(QU^{T}U  + U^{T}UQ - U^{T}UQU^{T}U - \alpha P^{T}P)\Phi^{\frac{1}{2}}, \label{Re}
\end{align} 
\end{small}%
contains the inexplicit functional $\hat{X}$. We then bypass this term in the final step. 

\subsubsection{}
Taking norm on both sides of (\ref{Re}), and then isolating the norm of $\hat{X}$, we can form the bound 
\begin{align}
\nonumber
 \| \Phi^{\frac{T}{2}}\mathcal{R}\Phi^{\frac{1}{2}} \|_{F}  \leq  & 2 [\bar{\sigma}(A) \bar{\sigma}(\Phi^{\frac{1}{2}}) \bar{\sigma}(\hat{X})  + \bar{\sigma}(Q\Phi^{\frac{1}{2}}) ] \xi   \\
& + \bar{\sigma}(Q) \xi^{2} +  \alpha \bar{\sigma}(\Phi) [\bar{\sigma}^{2}(\hat{X}) + 1 ], \label{rbound}
\end{align}
with $\xi =\| U^{T}U\Phi^{\frac{1}{2}} \|_{F}$. Recall that $\bar{\sigma}(\hat{X}) = \bar{\sigma}(P^{T}\tilde{X}P) = \bar{\sigma}(\tilde{X})$, where $\tilde{X}$ is the solution of the reduced-order ARE (\ref{reare}). The norm $\bar{\sigma}(\tilde{X})$ can be further bounded by $\beta(\tilde{A},\tilde{G},\tilde{Q})$ through Lemma \ref{xblemma}. Theorem \ref{tmain}, therefore, follows from (\ref{thb}), (\ref{rbound}) and $\bar{\sigma}(\tilde{X}) \leq \beta(\tilde{A},\tilde{G},\tilde{Q}) \leq \mathrm{sup}_{P} \beta(\tilde{A},\tilde{G},\tilde{Q})$.

\subsection{Proof of Lemma \ref{kerror}}
To prove the error bound (\ref{xierror}), we define a matrix $\bar{\Phi}$ as
\begin{align}
\bar{\Phi} =\begin{bmatrix} \bar{\Phi}_{1} & \cdots & \bar{\Phi}_{n_{b}}
\end{bmatrix},
\label{nphi}
\end{align}
where $\bar{\Phi}_{i}=Ydiag(Y^{-1}b_{i})\mathcal{C}^{\frac{1}{2}},\ i=1,...,n_{b}$ and $b_{i}$ is the $i^{th}$ column of $B_{d}$. The matrix $\bar{\Phi}$ satisfies $\Phi = \bar{\Phi}\bar{\Phi}^{T} = \Phi^{\frac{1}{2}}\Phi^{\frac{T}{2}}$. Besides $Y_{1}$ and $\Omega_{1}$ defined in Definition \ref{Phik}, we further denote $Y_{2}= Y_{:,\kappa+1,n}$ and $\Omega_{2}=Y^{-1}_{\kappa+1:n,:}$, and partition the Cholesky Decomposition $\mathcal{C}^{\frac{1}{2}}$ as %
\begin{align*}
\mathcal{C}^{\frac{1}{2}}  = \begin{bmatrix}
\mathcal{C}^{\frac{1}{2}}_{1,1} & 0 \\
\mathcal{C}^{\frac{1}{2}}_{2,1} & \mathcal{C}^{\frac{1}{2}}_{2,2}
\end{bmatrix} = \begin{bmatrix}
\mathcal{C}^{\frac{1}{2}}_{1{:}\kappa,1{:}\kappa} & 0 \\
\mathcal{C}^{\frac{1}{2}}_{\kappa{+}1{:}n,1{:}\kappa} & \mathcal{C}^{\frac{1}{2}}_{\kappa{+}1{:}n,\kappa{+}1{:}n}
\end{bmatrix}.
\end{align*}
With these notations, $\bar{\Phi}_{i}$ in (\ref{nphi}) can be decomposed into $ \bar{\Phi}_{i} = \bar{\Phi}_{i,s} + \bar{\Phi}_{i,f} $, where $\bar{\Phi}_{i,s} {=} \begin{bmatrix}
Y_{1}diag(\Omega_{1}b_{i})\mathcal{C}^{\frac{1}{2}}_{1,1} & 0
\end{bmatrix}$ and $\bar{\Phi}_{i,f} = \begin{bmatrix}
Y_{2}diag(\Omega_{2}b_{i})\mathcal{C}^{\frac{1}{2}}_{2,1} & Y_{2}diag(\Omega_{2}b_{i})\mathcal{C}^{\frac{1}{2}}_{2,2}
\end{bmatrix}$, and thus $\Phi_{\kappa}$ can be rewritten as $\Phi_{\kappa} {=} \sum_{i=1}^{n_{b}} \bar{\Phi}_{i,s}\bar{\Phi}_{i,s}^{T}$. Notice that $\xi = \| (I_{n}-P^{T}P) \Phi^{\frac{1}{2}}\|_{F} = \| (I_{n}-P^{T}P) \bar{\Phi}\|_{F}$ satisfies 
\begin{align}
\xi_{\kappa}^{*} \leq \xi \leq \| (I_{n}-P^{T}P)\Phi_{\kappa}^{\frac{1}{2}} \|_{F} + \| (I_{n}-P^{T}P) \bar{\Phi}_{f} \|_{F},
\label{xi2}
\end{align}
where $\bar{\Phi}_{f} = \begin{bmatrix}
\bar{\Phi}_{1,f} & \cdots & \bar{\Phi}_{n_{b},f}
\end{bmatrix}$. 
The second norm on the RHS of (\ref{xi2}) is further bounded by $\| (I_{n}-P^{T}P)\bar{\Phi}_{f} \|_{F} \leq \| \bar{\Phi}_{f}\|_{F}$ with %
\begin{small}
\begin{align*}
 \| \bar{\Phi}_{f}\|_{F} & \leq \sqrt{\eta^{2} \sum_{i=1}^{n_{b}} (\| \mathcal{C}^{\frac{1}{2}}_{2,1}\|_{F}^{2} + \| \mathcal{C}^{\frac{1}{2}}_{2,2}\|_{F}^{2}) } = \sqrt{\eta^{2} n_{b}\sum_{i=\kappa+1}^{n} -\frac{1}{2\lambda^{-}_{i}}}.
\end{align*} %
\end{small}%
Inserting this along with $P = \mathrm{argmin}_{P} \ \xi_{\kappa}$ to the RHS of (\ref{xi2}) yields the error bound (\ref{xierror}). 

\subsection{Proof of Lemma \ref{circle}} 

Denote $n_{i}=|\mathcal{I}_{i}|_{c},\ i=1,...,r$, the objective function $f_{\bar{V}}$ in (\ref{opt2}) can be expanded as 
\begin{align*}
& f_{\bar{V}} = tr(\bar{V}^{T}P^{T}PQP^{T}P\bar{V}) = (w\ast w)^{T} (Q\ast \bar{V}\bar{V}^{T}) (w\ast w)
\end{align*}
where $\ast$ is the Khatri-Rao product defined by %
\begin{small}
\begin{align*}
Q\ast \bar{V}\bar{V}^{T} {=} \begin{bmatrix}
Q_{\mathcal{I}_{1},\mathcal{I}_{1}}\otimes \bar{V}_{\mathcal{I}_{1},:}\bar{V}_{\mathcal{I}_{1},:}^{T} & \cdots & Q_{\mathcal{I}_{1},\mathcal{I}_{r}}\otimes \bar{V}_{\mathcal{I}_{1},:}\bar{V}_{\mathcal{I}_{r},:}^{T} \\
\vdots & \ddots & \vdots \\
Q_{\mathcal{I}_{r},\mathcal{I}_{1}}\otimes \bar{V}_{\mathcal{I}_{r},:}\bar{V}_{\mathcal{I}_{1},:}^{T} & \cdots & Q_{\mathcal{I}_{r},\mathcal{I}_{r}}\otimes \bar{V}_{\mathcal{I}_{r},:}\bar{V}_{\mathcal{I}_{r},:}^{T}
\end{bmatrix},
\end{align*}
\end{small} %
\begin{align*}
w\ast w = \begin{bmatrix}
w_{\mathcal{I}_{1}}^{T} \otimes w_{\mathcal{I}_{1}}^{T} & \cdots & w_{\mathcal{I}_{r}}^{T} \otimes w_{\mathcal{I}_{r}}^{T}
\end{bmatrix}^{T}.
\end{align*}
Denote the block-diagonal submatrix of $Q$ by $Q_{d}$, i.e. $Q_{d} = diag(Q_{\mathcal{I}_{1},\mathcal{I}_{1}},\cdots, Q_{\mathcal{I}_{r},\mathcal{I}_{r}})$, and the off-diagonal by $Q_{o} = Q-Q_{d}$. Therefore, we can find the difference between objective functions of (\ref{opta}) and (\ref{optf}) as $(w\ast w)^{T}(\rho Q_{o}\ast \bar{V}\bar{V}^{T})(w\ast w)$. According to the Gershgorin circle theorem, the eigenvalues of $\rho Q_{o}\ast \bar{V}\bar{V}^{T}$ are all bounded inside the range of $(-\| \rho Q_{o}\ast \bar{V}\bar{V}^{T}\|_{1},\| \rho Q_{o}\ast \bar{V}\bar{V}^{T}\|_{1})$. Therefore, the theorem follows when $J_{e} = \| \rho Q_{o}\ast \bar{V}\bar{V}^{T}\|_{1}$.

\renewenvironment{IEEEbiography}[1]
  {\IEEEbiographynophoto{#1}}
  {\endIEEEbiographynophoto}

\begin{biography}{Nan Xue}
(S'15) received his B.E. degree in Electrical Engineering from Xi'an Jiaotong University, China in 2013. He is currently pursuing his PhD degree in Electrical Engineering at North Carolina State University, Raleigh, NC. His research interests include analysis, control and model reduction of large-scale networked dynamic systems and power systems.
\end{biography}

\begin{biography}{Aranya Chakrabortty}
(M'08, SM'15) received his PhD degree in Electrical Engineering from Rensselaer Polytechnic Institute, Troy, NY in 2008. From 2008 to 2009 he was a postdoctoral research associate at the Aeronautics and Astronautics department of University of Washington, Seattle. He is currently an Associate Professor in the Electrical and Computer Engineering department of North Carolina State University, Raleigh, NC, where he is also affiliated to the FREEDM Systems Center. His research interests are in all branches of control theory with applications to power systems, especially in wide-area monitoring and control of large power systems using Synchrophasors. He received the NSF CAREER award in 2011.
\end{biography}

\end{document}